\newtheorem{thm}{Theorem}
\newcommand{\ba}{\begin{eqnarray}}
\newcommand{\ea}{\end{eqnarray}}
\newcommand{\sumint}{\mathop{\mathpalette\sum@int\relax}\slimits@}
\newcommand{\sum@int}[2]{%
	\ooalign{$\m@th#1\sum$\cr\hidewidth$\m@th#1\int$\hidewidth\cr}%
}
\newcommand{\sumintinline}{\mathop{\mathpalette\sum@intinline\relax}\slimits@}
\newcommand{\sum@intinline}[2]{%
	\ooalign{$\m@th#1\scalebox{0.8}{$\sum$}$\cr\hidewidth$\m@th#1\scalebox{1.1}{$\int$}$\hidewidth\cr}%
}
\newcommand{\innovation}{Collaborative Innovation Center of Quantum Matter, Beijing 100871, China}
\newcommand{\chep}{Center for High Energy Physics, Peking University, Beijing 100871, China}
\newcommand{\pkuphy}{School of Physics, Peking University, Beijing 100871,
China}
\newcommand{\BNL}{Physics Department, Brookhaven National Laboratory, Upton, NY 11973, USA}
\begin{document}
\title{Finite-volume formalism for physical processes with an electroweak loop integral}

\author{Xin-Yu Tuo}\email{xtuo@bnl.gov}\affiliation{\pkuphy}\affiliation{\BNL}
\author{Xu~Feng}\email{xu.feng@pku.edu.cn}\affiliation{\pkuphy}\affiliation{\innovation}\affiliation{\chep}

\date{\today}

\begin{abstract}
	This study investigates finite-volume effects in physical processes that involve the combination of long-range hadronic matrix elements with electroweak loop integrals. We adopt the approach of implementing the electroweak part as the infinite-volume version, which is denoted as the EW$_\infty$ method in this work. A general approach is established for correcting finite-volume effects in cases where the hadronic intermediate states are dominated by either a single particle or two particles. For the single-particle case, this work derives the infinite volume reconstruction (IVR) method from a new perspective. For the two-particle case, we provide the correction formulas for power-law finite-volume effects and unphysical terms with exponentially divergent time dependence. The finite-volume formalism developed in this study has broad applications, including the QED corrections in various processes and the two-photon exchange contribution in $K_L\to\mu^+\mu^-$ or $\eta\to\mu^+\mu^-$ decays.
\end{abstract}

\maketitle

\section{Introduction\label{sec:Intro}}
Lattice QCD provides a nonperturbative method for solving the quantum chromodynamics (QCD) theories by discretizing the quark and gluon fields in Euclidean space within a finite size of $T\times L^3$. Studying the finite-volume effects resulting from this finite size is crucial for the development of lattice methodology. With the advancement of computational resources and lattice techniques, lattice QCD has achieved high precision for many physical observables, requiring finite-volume effects to be taken into account. The finite-volume formalism also enables us to extract important physical information from finite-volume systems and extends lattice QCD's computational capacity to physical observables that were previously inaccessible.

Among the developments of finite-volume formalisms, two-particle systems have received considerable attention and have become one of the major research focuses. This direction was initiated by Luscher's quantization condition, which relates the two-particle energy spectrum in finite volume to the physical scattering amplitude~\cite{Luscher:1986pf,Luscher:1990ux,Luscher:1991cf}, enabling lattice QCD to investigate scattering processes and resonances~\cite{Briceno:2017max}.
Based on this work, Ref.~\cite{Lellouch:2000pv} derived the Lellouch-Lüscher formula that connects physical matrix elements of $K\to\pi\pi$ to finite-volume matrix elements on lattice. The two-particle finite-volume formalism was later generalized and applied to a wide range of physical processes, including $0\stackrel{J}{\rightarrow}2$ processes~\cite{Hansen:2012tf,Briceno:2012yi} (e.g. the timelike form factor~\cite{Feng:2014gba,Andersen:2018mau,Erben:2019nmx,Meyer:2011um}), $1\stackrel{J}{\rightarrow}2$ processes~\cite{Briceno:2014uqa,Briceno:2015csa,Briceno:2021xlc} (e.g. $K\to\pi\pi$ decay~\cite{Blum:2015ywa,RBC:2015gro,RBC:2020kdj}, $\pi\pi\to\pi\gamma^*$ process~\cite{Briceno:2015dca,Briceno:2016kkp,Alexandrou:2018jbt}), and $2\stackrel{J}{\rightarrow}2$ processes~\cite{Briceno:2012yi,Briceno:2015tza,Baroni:2018iau,Briceno:2019nns,Briceno:2020xxs}. Two-particle finite-volume effects have also been studied in long-range electroweak amplitudes~\cite{Christ:2015pwa,Christ:2016eae,Briceno:2019opb}.

Despite these developments of two-particle finite-volume formalism, significant challenges remain, particularly in processes where the electroweak propagators and long-range hadronic matrix elements form a loop integral structure. We refer to the former as electroweak weight functions, which have analytically known forms, while the latter can be obtained through lattice calculations. In lattice calculations, these processes have large finite-volume effects from both parts. The electroweak weight functions may be long-range and contain infrared singularities from massless particles, preventing them from being directly implemented in finite volume, thereby resulting in large finite-volume effects. For the hadronic matrix elements, the intermediate states might involve light particles that cause large finite-volume effects. For instance, if the hadronic part contains two-particle intermediate states that are lighter than the initial state, it will have power-law finite-volume effects and unphysical terms with exponentially divergent time dependence. Addressing the finite-volume effects from both the hadronic and electroweak parts in these processes is therefore very challenging.

In Fig.\ref{fig:examples}, we list some typical processes involving such loop integral structure:
\begin{figure}[htbp]
	\centering
	\includegraphics[width=1\textwidth]{./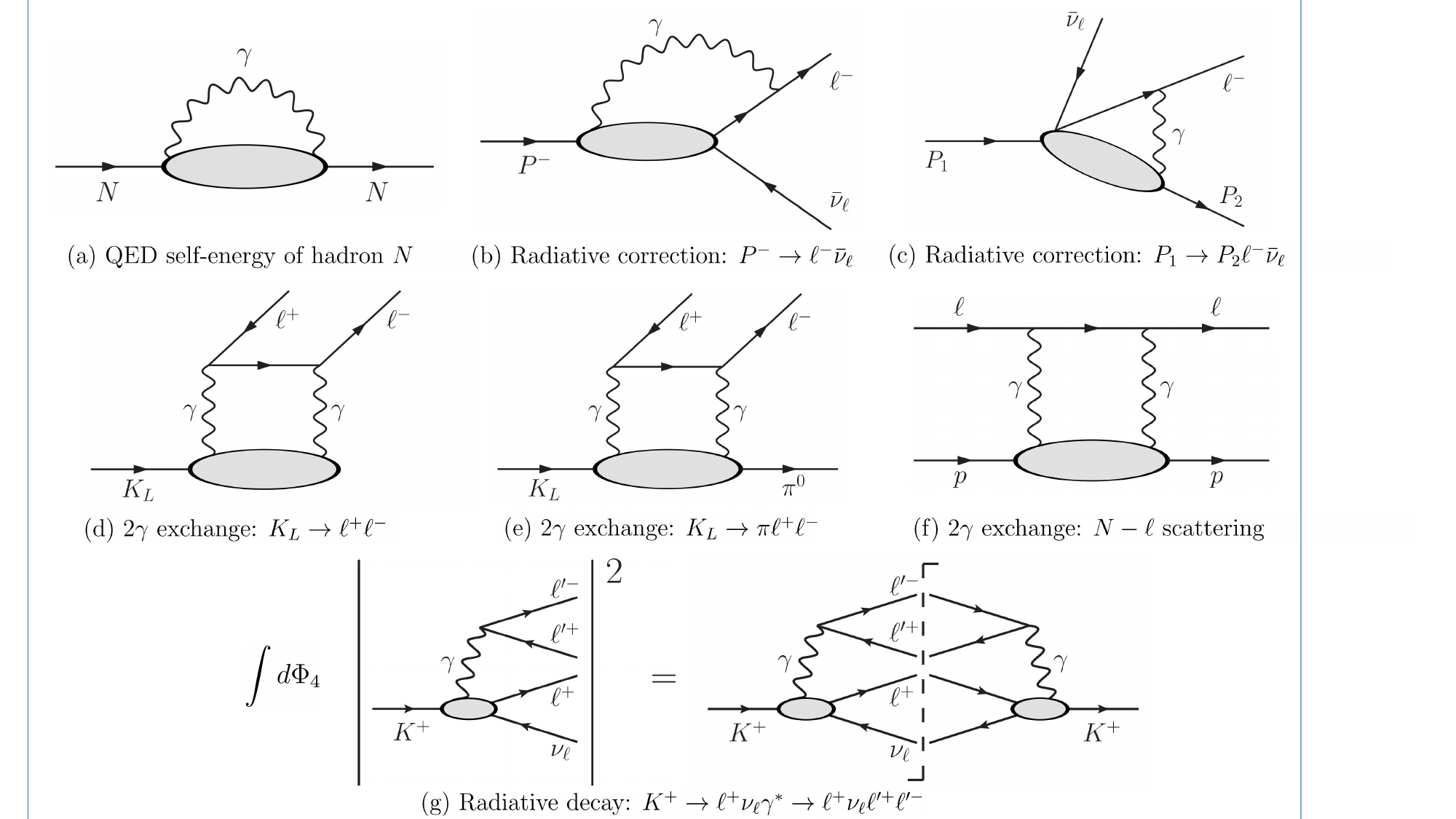}
	\caption{Examples of the physical processes where the electroweak propagators and long-range hadronic matrix elements form a loop integral structure. \label{fig:examples}}
\end{figure}
\begin{itemize}
	\item QED corrections in various processes, such as QED self-energy of hadron (Fig.~\ref{fig:examples}(a)), radiative correction of leptonic or semileptonic decay (Fig.~\ref{fig:examples}(b), (c)), etc. 
	\item Two-photon exchange contributions in decays of neutral mesons such as $\pi^0, \eta, K_L, \cdots$. For example, the long-distance contributions of rare decay $K_L\to\ell^+\ell^-$ and $K_L\to\pi\ell^+\ell^-$ are through two-photon exchange (Fig.~\ref{fig:examples}(d), (e))~\cite{Buras:1997fb,Cirigliano:2011ny}. 
	\item Two-photon exchange contributions in nucleon-lepton interactions (Fig.~\ref{fig:examples}(f)). These processes appears as higher-order corrections of muonic-hydrogen Lamb shift~\cite{Antognini:2013txn,Antognini:2013rsa}, electron-proton scattering~\cite{Arrington:2011dn,Afanasev:2017gsk}, etc. 
	\item Decay width of radiative decays, such as the radiative decay of neutral mesons $P^0\to\gamma^{(*)}\gamma^{(*)}$ or charged mesons $P^+\to\ell^+\nu_\ell\gamma^{(*)}$, where the photon may be a real photon or decay to a lepton pair (e.g. $K^+\to\ell^+\nu_\ell\gamma^*\to\ell^+\nu_\ell\ell'^+\ell'^-$ in Fig.\ref{fig:examples}(g)). The decay width involves phase space integration over the hadronic and electroweak parts of the square of the amplitude, thus forming a generalized loop integral structure.
\end{itemize}

The electroweak weight functions of above processes all have infrared singularities from massless photon propagators. To calculate such processes, we need to consider how to implement these electroweak weight functions in lattice calculations. A commonly used approach is the QED$_L$ method, which subtracts the zero mode of photon propagators. This method has been widely applied to calculate QED self-energy of hadrons~\cite{Hayakawa:2008an,Davoudi:2014qua,Davoudi:2018qpl}, radiative correction of leptonic decay~\cite{Carrasco:2015xwa,Lubicz:2016xro,Frezzotti:2020bfa,DiCarlo:2021apt}, neutrinoless double beta decay~\cite{Davoudi:2020gxs}, etc. The power-law finite-volume effects in QED$_L$ method can be derived and corrected analytically.

Recent studies have proposed implementing the electroweak weight functions as their infinite-volume version and calculating the loop integral in coordinate space. In this paper, we refer to this approach as the EW$_\infty$ method. For example, the hadronic light-by-light contribution of the muon $g-2$ involves multiple photons, and an implementation method of them was proposed using the infinite-volume and continuous QED~\cite{Blum:2017cer}.  Using the same form of photon propagator, Ref.~\cite{Feng:2018qpx} proposed the infinite volume reconstruction (IVR) method for the QED self-energy problem. This method reconstructs the hadronic matrix elements outside the lattice temporal range from lattice data using ground state dominance, and then combines it with the infinite-volume photon propagator, successfully avoiding the power-law finite-volume effects. This method has been applied to pion mass splitting~\cite{Feng:2021zek}, neutrinoless double beta decay~\cite{Tuo:2019bue,Tuo:2022hft}, radiative correction of leptonic decay~\cite{Christ:2020jlp,Christ:2023lcc}, radiative and rare decay of meson~\cite{Meng:2021ecs,Tuo:2021ewr,Christ:2020hwe}, two-photon exchange in muonic-hydrogen Lamb shift~\cite{Fu:2022fgh}, electroweak box diagrams in semileptonic decay of mesons~\cite{Feng:2020zdc,Ma:2021azh} and superallowed beta decay~\cite{Ma:2023kfr}.

The IVR method solved the problem of how to reduce the finite-volume effects when using the EW$_\infty$ method in the case of single-particle intermediate states. In addition to this case, more complex situations with two-particle intermediate states often occur in physical processes. For example, the EW$_\infty$ method has been proposed for the calculation of $\pi^0\to e^+ e^-$ and $K_L\to\mu^+\mu^-$ (Fig.~\ref{fig:examples}(d))~\cite{Christ:2020bzb,Christ:2022rho,Chao:2024vvl}. However, for the $K_L\to\mu^+\mu^-$ decay, currently no finite-volume formalism exists to account for the finite-volume effects caused by the $\pi\pi$ intermediate states. Similar two-particle intermediate states also occur in radiative correction of semileptonic decay (Fig.~\ref{fig:examples}(c)), $2\gamma$ exchange in $K_L\to\pi^0\ell^+\ell^-$ (Fig.~\ref{fig:examples}(e)), $2\gamma$ exchange in $N-l$ scattering when $\ell$ has large input momentum (Fig.~\ref{fig:examples}(f)), and $K^+\to\ell\nu_\ell\ell'^+\ell'^-$ decay width (Fig.~\ref{fig:examples}(g)), etc. Thus, developing a two-particle finite-volume formalism would extend the applicability of the EW$_\infty$ method to these processes.

In this paper, we study how to apply the EW$_\infty$ method to processes where the initial or final hadronic states are single-particle states or vacuum, and the intermediate hadronic states are single-particle or two-particle states. In the single-particle case, we derive the IVR method from a new perspective. We point out that in the EW$_\infty$ method, the infrared singularity of the photon propagator does not cause a singularity in the discrete momentum summand, explaining why the IVR method can avoid power-law finite-volume effects in this case. In the two-particle case, we give the correction formulas for power-law finite-volume effects and unphysical terms with exponentially divergent time dependence. For clarity, we take QED self-energy problem and $\eta\to\mu^+\mu^-$ decay as examples to illustrate the single-particle and two-particle cases, respectively. 

This paper is organized as follows. In Sect.~\ref{sec:EWinf}, we introduce the general form of the EW$_\infty$ method and explain the main idea of how to solve finite-volume effects. We consider two classes of processes: loop integral problems in Sect.~\ref{sec:loop} and decay width of radiative decay in Sect.~\ref{sec:decay}. Next, we derive the corretion formulas of the finite-volume effects in cases of single-particle and two-particle intermediate states in Sect.~\ref{sec:one} and Sect.~\ref{sec:two}, respectively.

\section{\label{sec:EWinf}EW$_\infty$ method}
\subsection{\label{sec:loop}Loop integral problem}
\begin{figure}[htbp]
	\centering
	\includegraphics[width=0.5\textwidth]{./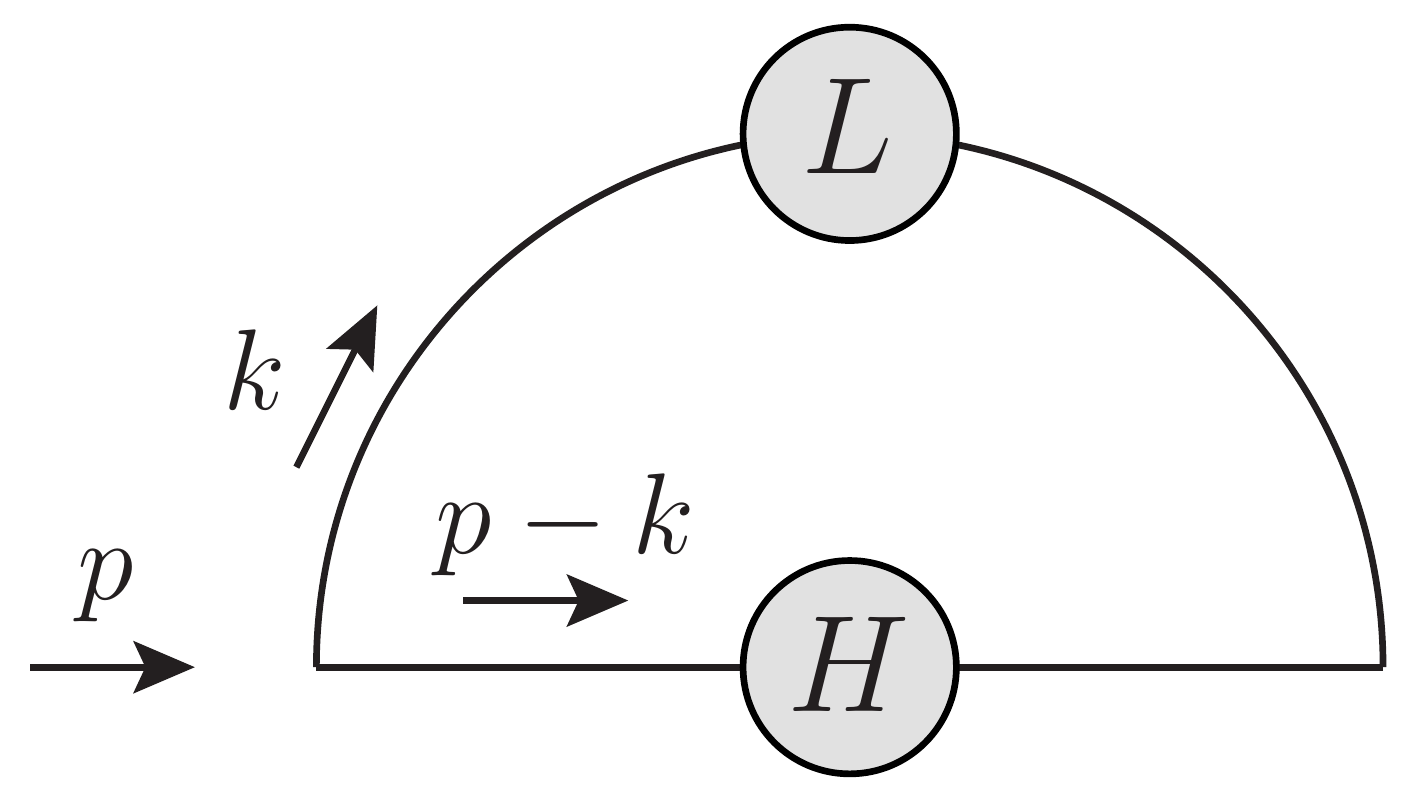}
	\caption{Target physical quantity: loop integral in Minkowski space.\label{fig:Cp}}
\end{figure}
The calculation structure of QED corrections or two-photon exchange can be abstracted as a general loop integral in Fig.~\ref{fig:Cp}, which is defined in infinite-volume Minkowski space as
\begin{equation}\label{Cinf}
	I^\infty =\int \frac{d^4 k}{(2\pi)^4}L^\infty(k)H^\infty(k,p),
\end{equation}
where we use the superscript $\infty$ to indicate that these quantities are defined in infinite volume. In this paper, the quantities are defined by default in Minkowski space. For Euclidean space quantities in the following part, we add a subscript $E$ to distinguish them. In this integral, the total input momentum is defined as $p=(m,\bm{0})$. The discussion here can also be generalized to $\bm{p}\neq \bm{0}$. In this work, we set $\bm{p}=0$ for simplicity. $L^\infty(k)$ is the electroweak weight function with momentum $k=(k^0,\bm{k})$, which may include propagators of the electroweak part, spinors of final state leptons and other coefficients. The hadronic matrix element $H^\infty(k,p)$ is
\begin{equation}
	\label{HMdef}
	H^\infty(k,p)=\int d^3 x\int_{-\infty}^\infty dt e^{ik\cdot x}\langle f|T[J_{1}(t,\bm{x})J_{2}(0)]|i\rangle,
\end{equation}
where the Minkowski space operators are defined as $J_{1}$ and $J_{2}$, and for generality we do not specify their explicit forms for now. In this paper, we consider the case where the initial and final hadronic states $|i/f\rangle$ are vacuum or single-particle states. We insert a complete set of intermediate states to expand $H^\infty(k,p)$ as
\begin{equation}
	\label{HMint0}
	\begin{aligned}
		&H^\infty(k,p)=H^\infty_{t<0}(k,p)+ H^\infty_{t>0}(k,p)\\
		=&i\sumint_{\alpha} \frac{\langle f | J_{2}|\alpha(-\bm{k},E_\alpha)\rangle\langle
			\alpha(-\bm{k},E_\alpha)|J_{1}|i\rangle}{m-k^0-E_\alpha+i\epsilon}+i\sumint_{\alpha'}\frac{\langle f | J_{1}|\alpha'(\bm{k},E_{\alpha'})\rangle\langle
			\alpha'(\bm{k},E_{\alpha'})|J_{2}|i\rangle}{m+k^0-E_{\alpha'}+i\epsilon},
	\end{aligned}
\end{equation}
where $\alpha$ and $\alpha'$ denote infinite-volume intermediate states for the time-ordering $t<0$ and $t>0$, respectively. The notation $\sumintinline_{\alpha/\alpha'}$ denotes the summation over all such infinite-volume states. In this work, we focus on the $t<0$ contribution for simplicity, and the method here can be applied to $t>0$ cases as well. We neglect the $t>0$ contribution and write
\begin{equation}
	\label{HMint}
\begin{aligned}
	H^\infty(k,p)&=H^\infty_{t<0}(k,p)=i\sumint_{\alpha}\frac{A_{\alpha}^\infty(-\bm{k},E_\alpha)}{m-k^0-E_\alpha+i\epsilon},\\
	I^\infty&=I^\infty_{t<0}=\int \frac{d^4 k}{(2\pi)^4}L^\infty(k)H_{t<0}^\infty(k,p).
\end{aligned}	
\end{equation}
Here, we define $A_{\alpha}^\infty(-\bm{k},E_\alpha)=\langle f | J_{2}|\alpha(-\bm{k},E_\alpha)\rangle\langle
\alpha(-\bm{k},E_\alpha)|J_{1}|i\rangle$. We consider the cases where the ground states of $|\alpha(-\bm{k},E_\alpha)\rangle$ are single-particle or two-particle states.

Next, we consider how to calculate the above quantity in the finite-volume Euclidean space. We define the loop integral in the ``EW$_\infty$ method'' as
\begin{equation}\label{CLTx}
	\begin{aligned}
		I^{(LT)}=c_{ME}\int_V d^3x\int_{-t_s}^0 d\tau L_E^\infty(\tau,\bm{x})H_E^{(L)}(\tau,\bm{x}),
	\end{aligned}
\end{equation}
where $t_s\leq T/2$ is the cutoff in the temporal integration, and $V=L^3$ means the spatial integration is in finite volume. In Appendix \ref{sec:App-EM}, we introduce the definitions of the Euclidean space coordinates $(\tau,\bm{x})$, momentum $k_E$ and operators $J_{1/2,E}$ used in this paper, and use $c_{ME}$ to account for the difference between the Euclidean space and Minkowski space conventions. $I^{(LT)}$ combines the infinite-volume electroweak weight function $L_E^\infty(\tau,\bm{x})$ and the finite-volume Euclidean space hadronic matrix element $H_E^{(L)}(\tau,\bm{x})$ defined below. In the case of $\tau<0$, $H_E^{(L)}(\tau,\bm{x})$ can be expanded by finite-volume intermediate states as
\begin{equation}\label{HELx}
	\begin{aligned}
		H_E^{(L)}(\tau<0,\bm{x})&=(\langle f|J_{2,E}(0)J_{1,E}(\tau,\bm{x})|i\rangle)^{(L)}\\
		&=\frac{1}{L^3}\sum_{\bm{k}'\in\Gamma} \sum_{\alpha_L}\langle f|J_{2,E}|\alpha_L(-\bm{k}',E_{\alpha_L})\rangle\langle \alpha_L(-\bm{k}',E_{\alpha_L})|J_{1,E}|i\rangle e^{-(m-E_{\alpha_L})\tau+i\bm{k}'\cdot\bm{x}}\\
		&=\frac{1}{L^3}\sum_{\bm{k}'\in\Gamma} \sum_{\alpha_L}A^{(L)}_{\alpha,E}(-\bm{k}',E_{\alpha_L}) e^{-(m-E_{\alpha_L})\tau+i\bm{k}'\cdot\bm{x}}.
	\end{aligned}
\end{equation}
Here, $\bm{k}'$ is the discrete momentum in the finite volume, which takes values from the set $\Gamma=\left\{\bm{k}'|\bm{k}'=\frac{2\pi}{L}\bm{n}\right\}$. In this paper, we use $\bm{k}$ as continuous momentum in infinite volume, and use $\bm{k}'$ as discrete momentum in finite volume. Since $|f\rangle$ and $|i\rangle$ are defined as vacuum or single-particle states, we ignore the difference between their infinite-volume version and finite-volume version, which is typically an exponentially suppressed finite-volume effect. $\sum_{\alpha_L}$ runs over discrete energy levels of finite-volume states $|\alpha_L(-\bm{k}',E_{\alpha_L})\rangle$ with momentum $-\bm{k}'$. We denote the finite-volume matrix element in Euclidean space as $A^{(L)}_{\alpha,E}(-\bm{k}',E_{\alpha_L})$. In this work, we ignore the around-the-world effects from the temporal boundary condition.

$L_E^\infty(\tau,\bm{x})$ is defined in infinite-volume Euclidean space as
\begin{equation}\label{LE}
	\begin{aligned}
		L_E^\infty(\tau,\bm{x})&=-\int \frac{d^3 k}{(2\pi)^3}\int_{C_{\text{lat}}} \frac{dk_E^0}{2\pi}e^{-ik_E^0 \tau-i\bm{k}\cdot\bm{x}}L^\infty_E(k_E),\\
		L^\infty_E(k_E)&=L^\infty(k)|_{k\to(-ik_E^0,\bm{k})}.
	\end{aligned}
\end{equation}
Here, the negative sign in the formula ensures that $I^{(LT)}$ matches the physical quantity $I^\infty$ in (\ref{Cinf}) after corrections discussed below. The denominator of $L^\infty(k)$ consists of lepton or photon propagators
\begin{equation}
	L^\infty(k)\sim\frac{1}{D_1D_2\cdots D_n}.
\end{equation}
\begin{figure}[htbp]
	\centering
	\includegraphics[width=0.9\textwidth]{./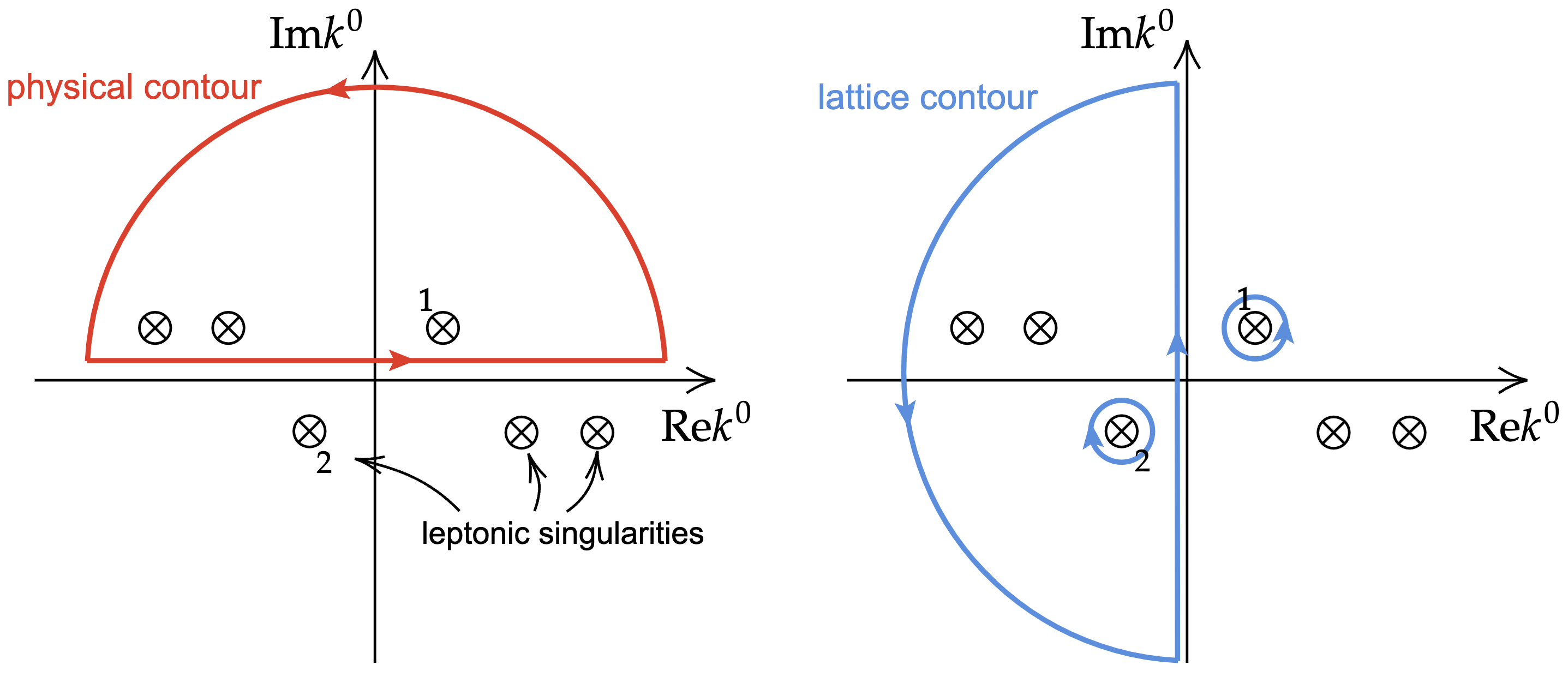}
	\caption{We define the physical contour $C$ and the lattice contour  $C_{\text{lat}}$ in the complex plane of $k^0$. $C$ is the correct Minkowski space contour, and $C_{\text{lat}}$ is the actual contour used in the electroweak weight function $L_E^\infty(\tau,\bm{x})$ in the lattice calculation.\label{fig:Wick}}
\end{figure}
Fig.~\ref{fig:Wick} shows the singularities of these electroweak propagators on the complex $k^0$ plane. The physical contour $C$ in Minkowski space is defined in the left figure. Due to the singularities like 1 and 2, the Wick rotation can not be directly done. We follow Ref.~\cite{Christ:2022rho} to define $L^\infty_E(\tau,\bm{x})$ using the modified contour $C_{\text{lat}}$ in the right figure, which contains the correct singularities of the electroweak weight function.

\subsection{\label{sec:sys}Theoretical analysis of finite-volume effects}
In this part, we study the difference between $I^{(LT)}$ and the physical quantity $I^\infty$:
\begin{equation}
	\Delta I=I^{(LT)}-I^\infty.
\end{equation}

When only the singularities of the electroweak weight function are considered, the lattice contour $C_{\text{lat}}$ is equivalent to the physical contour $C$, as shown in Fig.~\ref{fig:Wick}.
However, we also need to consider whether the hadronic singularities in the loop integral cause problems. For simplicity, we focus on the singularities $k^0=m-E_\alpha+i\epsilon$ that arise from the $t<0$ time-ordering contribution in Eq.~(\ref{HMint0}).
\begin{figure}[htbp]
	\centering
	\includegraphics[width=0.9\textwidth]{./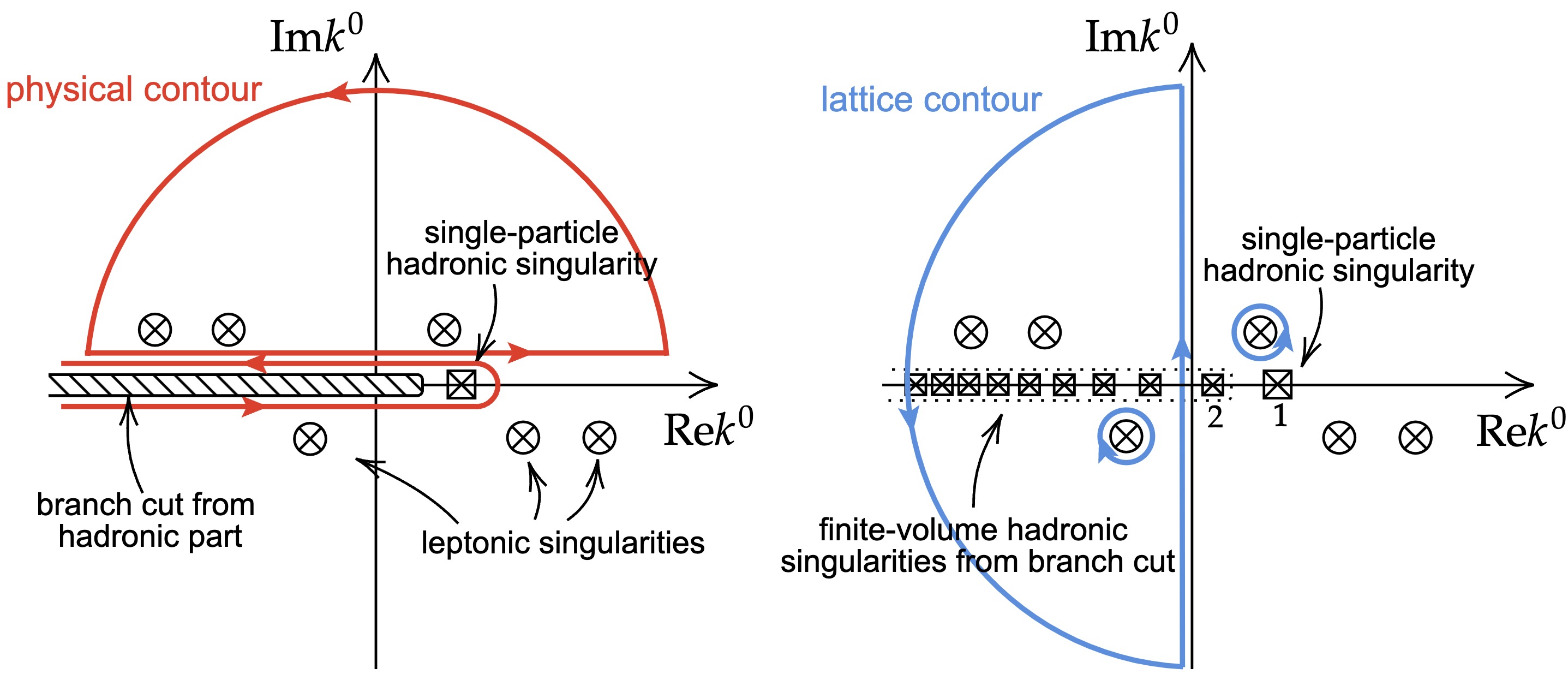}
	\caption{The definition of the physical contour $C$ and the lattice contour $C_{\text{lat}}$ in the presence of the hadronic singularities.\label{fig:Wick2}}
\end{figure}
Fig.~\ref{fig:Wick2} illustrates the physical contour $C$ and the lattice contour $C_{\text{lat}}$ in the presence of the hadronic singularities. The left panel shows the physical contour $C$ in infinite-volume Minkowski space. The hadronic singularities in this case may include isolated single-particle poles (the square pole in the figure), and a branch cut from multi-particle states (the shaded region in the figure). The right panel shows the lattice contour $C_{\text{lat}}$ in finite-volume Euclidean space. 
The finite-volume states $|\alpha_L\rangle$ replace the infinite-volume states $|\alpha\rangle$, and thus the branch cut from multi-particle states transforms into discrete finite-volume singularities. These finite-volume hadronic singularies fall into two categories:
\begin{enumerate}
	\item Heavy intermediate states with $E_\alpha>m$, which correspond to the poles on the left side of the $\mathrm{Im}k^0$ axis. They may be single-particle or multi-particle states. These states do not affect the equivalence between $C$ and $C_{\text{lat}}$. The finite-volume effects from these states are exponentially suppressed, and thus they are not the focus of this paper.
	\item Light intermediate states with $E_\alpha\leq m$, which correspond to the poles on the right side of the $\mathrm{Im}k^0$ axis. They may be single-particle states (the pole 1 in the figure), or multi-particle states (the pole 2 in the figure). In this paper, we mainly focus on solving systematic errors related to these states.
\end{enumerate}

For the convenience of theoretical analysis, we substitute Eq.~(\ref{HELx}) and Eq.~(\ref{LE}) into Eq.~(\ref{CLTx}) and write the expression of $I^{(LT)}$ in momentum space as
\begin{equation}
	\label{CLTp}
	\begin{aligned}
		I^{(LT)}&=\frac{1}{L^3}\sum_{\bm{k}'\in\Gamma}\int \frac{d^3 k}{(2\pi)^3}\delta_L(\bm{k}'-\bm{k})\int_{C_{\text{lat}}} \frac{(-id k_E^0)}{2\pi} L^\infty_E(k_E)c_{ME}H_{E}^{(LT)}(k'_E,p_E),
	\end{aligned}
\end{equation}
where we define the finite-volume $\delta$ function
\begin{equation}
	\label{deltaL}
	\delta_L(\bm{q})=\int_V d^3x e^{i\bm{q}\cdot\bm{x}}=L^3\frac{\sin(\frac L2 \bm{q}_x)}{\frac L2 \bm{q}_x}\frac{\sin(\frac L2 \bm{q}_y)}{\frac L2 \bm{q}_y}\frac{\sin(\frac L2 \bm{q}_z)}{\frac L2 \bm{q}_z},
\end{equation}
and the Euclidean hadronic matrix element in momentum space
\begin{equation}\label{HLT}
	H_{E}^{(LT)}(k'_E,p_E)=i\sum_{\alpha_L}\frac{ A^{(L)}_{\alpha,E}(-\bm{k}',E_{\alpha_L})}{m+ik_E^0-E_{\alpha_L}}\left[1-e^{(m+ik_E^0-E_{\alpha_L})t_s}\right].
\end{equation}
Here, we define the total energy in Euclidean spac e as $p_E=(im,\bm{0})$, and the momentum carried by the current as $k'_E=(k_E^0,\bm{k}')$ ($\bm{k}'\in\Gamma$). The $ik_E^0$ in this equation comes from the wick rotation $k^0\to -ik_E^0$, as defined in Appendix~\ref{sec:App-EM}.

In $H_{E}^{(LT)}(k'_E,p_E)$, the temporal truncation $t_s$ removes the singularities in the denominator
\begin{equation}
	\begin{aligned}
		\left[\frac{1-e^{(m+ik_E^0-E_{\alpha_L})t_s}}{m+ik_E^0-E_{\alpha_L}}\right]\Bigg|_{m+ik_E^0-E_{\alpha_L}\to 0}=-t_s.
	\end{aligned}
\end{equation}
Therefore, the light hadronic singularities (the single-particle pole 1 or the multi-particle pole 2 in Fig.\ref{fig:Wick2}) do not affect the equivalence between $C$ and $C_{\text{lat}}$ in the loop integral of $I^{(LT)}$. Using this equivalence, we can rotate $I^{(LT)}$ back to Minkowski space as
\begin{equation}
	\label{CLT2}
	I^{(LT)}=\frac{1}{L^3}\sum_{\bm{k}'\in\Gamma} \int \frac{d^3 k}{(2\pi)^3}\delta_L(\bm{k}'-\bm{k})\int_{C} \frac{d k^0}{2\pi} L^\infty(k)H^{(LT)}(k',p).
\end{equation}
Here, we define the Minkowski space momentum as $k'=(k^0,\bm{k}')$ ($\bm{k}'\in\Gamma$), and the Minkowski space matrix element as
\begin{equation}\label{HMLT}
	H^{(LT)}(k',p)=i\sum_{\alpha_L}\frac{A^{(L)}_{\alpha}(-\bm{k}',E_{\alpha_L})}{m-k^0-E_{\alpha_L}}\left(1-e^{(m-k^0-E_{\alpha_L})t_s}\right),
\end{equation}
where $A^{(L)}_{\alpha}(-\bm{k}',E_{\alpha_L})=\langle f|J_{2}|\alpha_L(-\bm{k}',E_{\alpha_L})\rangle\langle \alpha_L(-\bm{k}',E_{\alpha_L})|J_{1}|i\rangle$ is defined in the finite-volume Minkowski space, and satisfies $A^{(L)}_{\alpha}(-\bm{k}',E_{\alpha_L})=c_{ME}A^{(L)}_{\alpha,E}(-\bm{k}',E_{\alpha_L})$.

The momentum space expression (\ref{CLT2}) shows that the EW$_\infty$ method mixes the the infinite-volume electroweak weight function $L^\infty(k)$ (which has continuous momentum $k=(k^0,\bm{k})$) and the finite-volume hadronic function $H^{(LT)}(k',p)$ (which has discrete momentum $k'=(k^0,\bm{k}')$ with $\bm{k}'\in\Gamma$). $\bm{k}$ and $\bm{k}'$ are related by the finite-volume $\delta$ function $\delta_L(\bm{k}'-\bm{k})$. If we compare $I^{(LT)}$ in Eq. (\ref{CLT2}) with $I^\infty$ in Eq. (\ref{Cinf}), we can see that $\Delta I=I^{(LT)}-I^\infty$ has two different sources:
\begin{enumerate}
	\item The difference between the finite-volume hadronic function $H^{(LT)}(k',p)$ and the infinite-volume hadronic function $H^\infty(k',p)$ at discrete momenta $\bm{k}'\in\Gamma$.
	\item The finite-volume effects introduced by the loop integral structure in EW$_\infty$ method
	\begin{equation}
		\frac{1}{L^3}\sum_{\bm{k}'\in\Gamma} \int \frac{d^3 k}{(2\pi)^3}\delta_L(\bm{k}'-\bm{k}).
	\end{equation}
	This integral structure comes from writing the coordinate-space integral in momentum space. As the volume goes to infinity, $\delta_L(\bm{k}'-\bm{k})$ becomes the normal $\delta$ function and this integral form converges to the physical form $\int \frac{d^3 k}{(2\pi)^3}$. Thus, this is a new kind of finite-volume effect. 
\end{enumerate}
We can seperate them as the first type of correction (denoted as $\Delta I_1$) and the second type of correction (denoted as $\Delta I_2$) by introducing an intermediate quantity
\begin{equation}\label{Itilde}
	\begin{aligned}
		\tilde{I}^{(L)}&=\frac{1}{L^3}\sum_{\bm{k}'\in\Gamma}\int \frac{d^3 k}{(2\pi)^3}\delta_L(\bm{k}'-\bm{k})\int_{C} \frac{d k^0}{2\pi} L^\infty(k)H^\infty(k',p),\\
		\Delta I_1&=I^{(LT)}-\tilde{I}^{(L)},\\
		\Delta I_2&=\tilde{I}^{(L)}-I^\infty.
	\end{aligned}
\end{equation}

We illustrate the idea of this paper in Fig. \ref{fig:deltaC}. Here, we use the $H$ bubble with solid box with subscript $LT$ to represent $H^{(LT)}(k',p)$, and the unboxed $H$ bubble to represent $H^{\infty}(k,p)$. We use the $L$ bubble to represent $L^\infty(k)$. We use the large dashed box to represent the loop integral structure introduced in the EW$_\infty$ method. Through these two correction steps, we can convert the lattice calculated quantity $I^{(LT)}$ into the physical quantity $I^\infty$.
\begin{figure}[htbp]
	\centering
	\includegraphics[width=0.98\textwidth]{./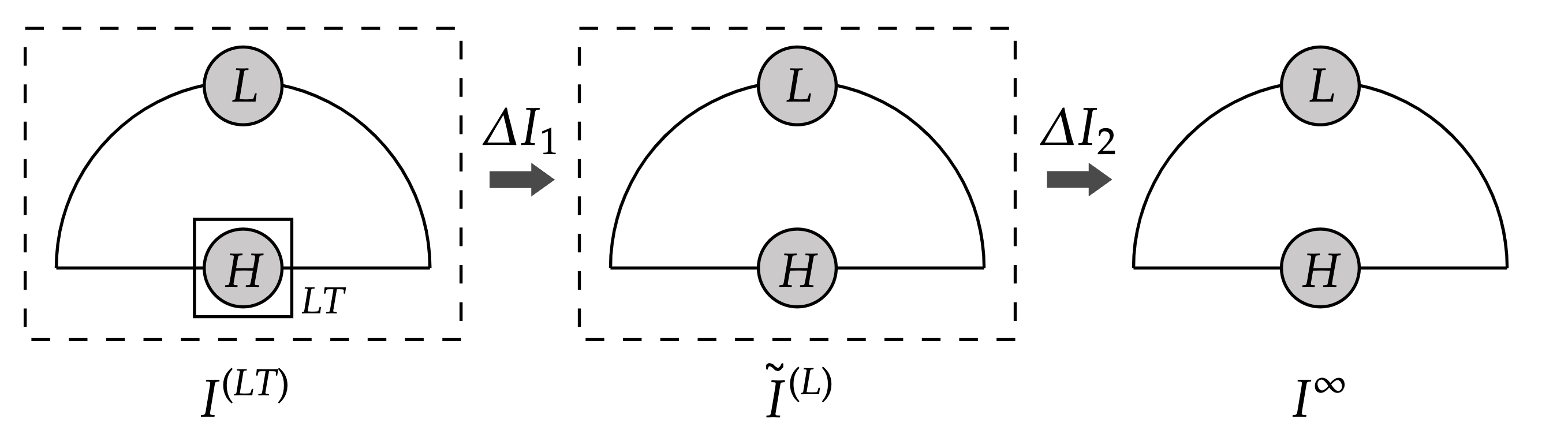}
	\caption{The correction idea of this paper.\label{fig:deltaC}}
\end{figure}

Next, we make some general discussion about these two corrections. In Section \ref{sec:one} and Section \ref{sec:two}, we apply the discussion here to the cases of single-particle and two-particle intermediate states respectively.

\subsubsection{First type of correction}
According to Eq. \ref{Itilde}, we can derive the correction formula for $\Delta I_1$ as
\begin{equation}\label{DC1}
	\begin{aligned}
		\Delta I_{1}&=I^{(LT)}-\tilde{I}^{(L)}\\
		&=\frac{1}{L^3}\sum_{\bm{k}'\in\Gamma} \int \frac{d^3 k}{(2\pi)^3}\delta_L(\bm{k}'-\bm{k})\int_{C} \frac{d k^0}{2\pi} L^\infty(k)\Delta H(k',p),\\
		\Delta H(k',p)&=H^{(LT)}(k',p)-H^\infty(k',p).
	\end{aligned}
\end{equation}
This shows that $\Delta I_1$ depends on the difference between the finite-volume hadronic matrix element $H^{(LT)}(k',p)$ and the infinite-volume hadronic matrix element $H^{\infty}(k',p)$ at discrete momentum $\bm{k}'\in\Gamma$. This difference is a correction term that has been studied extensively by previous literature, for example, Ref. \cite{Briceno:2019opb} provides the correction formula for the case of two-particle intermediate state. In Sections \ref{sec:one-1} and \ref{sec:two-1}, we will introduce the correction formulas for the cases of single-particle and two-particle, respectively.

\subsubsection{Second type of correction}
Another source of systematic error is the finite-volume effect caused by the integral structure in EW$_\infty$ method. To analyze this effect, we write $\tilde{I}^{(L)}$ as discrete momentum summation
\begin{equation}
	\begin{aligned}
		\tilde{I}^{(L)}&=\frac{1}{L^3}\sum_{\bm{k}'\in\Gamma} \hat{I}(\bm{k}'),\\
		\hat{I}(\bm{k}')&=\int \frac{d^3 k}{(2\pi)^3}\delta_L(\bm{k}'-\bm{k})\int_{C} \frac{d k^0}{2\pi} L^\infty(k)H^\infty(k',p).
	\end{aligned}
\end{equation}
In the infinite-volume limit $L\to\infty$, the sum $\frac{1}{L^3}\sum_{\bm{k}'\in\Gamma}$ becomes the integral $\int \frac{d^3 k'}{(2\pi)^3}$, and the finite-volume $\delta$ function $\delta_L(\bm{k}'-\bm{k})$ becomes the infinite-volume one $(2\pi)^3\delta^{(3)}(\bm{k}'-\bm{k})$, so $\tilde{I}^{(L)}$ converges to $I^\infty$. Therefore, $\Delta I_2$ is a new type of finite-volume effect introduced in the EW$_\infty$ method. 

We can see that the summation form here is more complex than the conventional Poisson summation formula, but the volume suppression behavior of $\Delta I_{2}=\tilde{I}^{(L)}-I^\infty$ also depends on the smoothness of the summand $\hat{I}(\bm{k}')$. In Appendix \ref{sec:Append sum}, we state and prove the following theorem:
\begin{thm}\label{thm:FV}
	Let $\tilde{l}(k)$ and $\tilde{h}(k)$ be functions of four-momentum $k=(k^0,\bm{k})$, and define their loop integral forms
	\begin{equation}
		\begin{aligned}
			I^{\infty}&=\int \frac{d^3k}{(2\pi)^3}\int \frac{dk^0}{2\pi}\tilde{l}(k)\tilde{h}(k),\\
			I^{(L)}&=\frac{1}{L^3}\sum_{\bm{k}'\in\Gamma}\int \frac{d^3k}{(2\pi)^3}\delta_L(\bm{k}'-\bm{k})\int \frac{dk^0}{2\pi}\tilde{l}(k)\tilde{h}(k')=\frac{1}{L^3}\sum_{\bm{k}'\in\Gamma}\hat{I}(\bm{k}'),
		\end{aligned}
	\end{equation}
	with $k=(k^0,\bm{k})$, $k'=(k^0,\bm{k}')$. $\bm{k}'\in\Gamma$ is the discrete momentum in finite volume. The finite-volume effect $I^{(L)}-I^{\infty}$ is related to the smoothness of $\hat{I}(\bm{k}')$ by (here, $\bm{k}'$ is extended to be a continuous real variable)
	\begin{enumerate}
		\item If $\hat{I}(\bm{k}')$ has singularities, then the finite-volume effect suppresses as $O(1/L)$;
		\item If $\hat{I}(\bm{k}')$ is continuously differentiable up to order $N$, then the finite-volume effect suppresses as $O(1/L^{N+1})$;
		\item If $\hat{I}(\bm{k}')$ is infinitely differentiable, then the finite-volume effect suppresses as $O(e^{-\Lambda L})$, where $\Lambda$ is the hadronic mass scale in the problem.
	\end{enumerate}
\end{thm}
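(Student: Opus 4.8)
The plan is to recognise $I^{(L)}$ as a three-dimensional Poisson resummation of $I^{\infty}$ and to control the remainder by the regularity of the summand $\hat{I}(\bm{k}')$. First I would apply the Poisson summation formula on the momentum lattice $\Gamma=\frac{2\pi}{L}\mathbb{Z}^{3}$,
\begin{equation}
	I^{(L)}=\frac{1}{L^{3}}\sum_{\bm{k}'\in\Gamma}\hat{I}(\bm{k}')=\sum_{\bm{n}\in\mathbb{Z}^{3}}\mathcal{F}[\hat{I}](L\bm{n}),\qquad \mathcal{F}[\hat{I}](\bm{w})\equiv\int\frac{d^{3}k'}{(2\pi)^{3}}\,\hat{I}(\bm{k}')\,e^{i\bm{k}'\cdot\bm{w}}.
\end{equation}
Equivalently, introduce the spatial Fourier transforms $\ell(k^{0},\bm{x})$, $h(k^{0},\bm{x})$ of $\tilde{l}(k^{0},\bm{k})$, $\tilde{h}(k^{0},\bm{k})$; using $\delta_{L}(\bm{q})=\int_{V}d^{3}x\,e^{i\bm{q}\cdot\bm{x}}$ and $\frac{1}{L^{3}}\sum_{\bm{k}'\in\Gamma}e^{i\bm{k}'\cdot\bm{x}}=\sum_{\bm{n}}\delta^{(3)}(\bm{x}-L\bm{n})$ one gets
\begin{equation}
	I^{(L)}=\int\frac{dk^{0}}{2\pi}\int_{V}d^{3}x\,\ell(k^{0},\bm{x})\,h_{L}(k^{0},\bm{x}),\qquad I^{\infty}=\int\frac{dk^{0}}{2\pi}\int_{\mathbb{R}^{3}}d^{3}x\,\ell(k^{0},\bm{x})\,h(k^{0},\bm{x}),
\end{equation}
with $h_{L}(k^{0},\bm{x})=\sum_{\bm{n}\in\mathbb{Z}^{3}}h(k^{0},\bm{x}+L\bm{n})$ the periodisation of $h$, and $\mathcal{F}[\hat{I}](\bm{w})=\int\frac{dk^{0}}{2\pi}\int_{V}d^{3}x\,\ell(k^{0},\bm{x})\,h(k^{0},\bm{x}-\bm{w})$.

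Next I would isolate the zero mode. Subtracting,
\begin{equation}
	I^{(L)}-I^{\infty}=\sum_{\bm{n}\neq 0}\mathcal{F}[\hat{I}](L\bm{n})\;-\;\int\frac{dk^{0}}{2\pi}\int_{\mathbb{R}^{3}\setminus V}d^{3}x\,\ell(k^{0},\bm{x})\,h(k^{0},\bm{x}),
\end{equation}
where the first sum equals $\int\frac{dk^{0}}{2\pi}\int_{V}d^{3}x\,\ell(k^{0},\bm{x})\sum_{\bm{n}\neq 0}h(k^{0},\bm{x}-L\bm{n})$ and the second term is exactly $I^{\infty}-\mathcal{F}[\hat{I}](0)$. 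In both pieces the hadronic factor $h$ is sampled only at spatial arguments of magnitude $\gtrsim L/2$, so $I^{(L)}-I^{\infty}$ is governed by the large-distance behaviour of $\int\frac{dk^{0}}{2\pi}(\ell\star h)(k^{0},\cdot)$, equivalently by the decay of $\mathcal{F}[\hat{I}](\bm{w})$ for $|\bm{w}|\to\infty$. The observation that explains why the electroweak infrared singularities do not enter is that the $\delta_{L}$-smearing replaces the electroweak factor by $\bar{l}_{L}(k^{0},\bm{k}')=\int_{V}d^{3}x\,\ell(k^{0},\bm{x})\,e^{i\bm{k}'\cdot\bm{x}}$, a Fourier integral over the \emph{bounded} box $V$; by Paley--Wiener this is entire in $\bm{k}'$ however singular $\tilde{l}$ is at $\bm{k}=0$, so $\hat{I}(\bm{k}')=\int\frac{dk^{0}}{2\pi}\,\tilde{h}(k^{0},\bm{k}')\,\bar{l}_{L}(k^{0},\bm{k}')$ inherits its $\bm{k}'$-regularity entirely from $\tilde{h}$ and the $k^{0}$-integration.

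Finally I would invoke the classical dictionary between the smoothness order of a function on $\mathbb{R}^{3}$ and the decay rate of its Fourier transform, applied to $\hat{I}$: repeated integration by parts (Riemann--Lebesgue) turns ``$\hat{I}$ has singularities'' into $\mathcal{F}[\hat{I}](\bm{w})=O(|\bm{w}|^{-1})$ and ``$\hat{I}\in C^{N}$'' into $O(|\bm{w}|^{-(N+1)})$, whence summing the $|\bm{n}|=1$ modes reproduces the stated $O(1/L)$ and $O(1/L^{N+1})$; and if the intermediate hadronic states carry a mass gap $\Lambda$, then $\tilde{h}(k^{0},\bm{k}')$, and hence $\hat{I}(\bm{k}')$, is analytic in a strip $|\mathrm{Im}\,\bm{k}'|<\Lambda$, so shifting the contour in $\mathcal{F}[\hat{I}]$ yields $O(e^{-\Lambda L})$. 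The hard part, I expect, is the bookkeeping of the zero mode in the non-smooth cases: showing that $\mathcal{F}[\hat{I}](0)$ deviates from $I^{\infty}$ only by terms of the same order as the nonzero Poisson modes — so the $\delta_{L}$-versus-$\delta^{(3)}$ mismatch never spoils the matching — and justifying the Poisson step when $\hat{I}$ is only integrable with singularities, so that $\sum_{\bm{k}'\in\Gamma}\hat{I}(\bm{k}')$ is merely conditionally summable and one must lean on the $\mathrm{sinc}$ structure of $\delta_{L}$ (or a limiting argument); identifying which kind of singularity of $\hat{I}$ the problem produces (isolated pole versus threshold branch point) is then what pins down the exact power. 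The smoothness-versus-decay estimates themselves are standard and may be quoted.
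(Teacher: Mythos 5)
Your proposal is correct and follows essentially the same route as the paper's proof in Appendix~\ref{sec:Append sum}: the identity $I^{(L)}-I^{\infty}=\bigl(\tfrac{1}{L^3}\sum_{\bm{k}'}-\int\tfrac{d^3k'}{(2\pi)^3}\bigr)\hat{I}(\bm{k}')+\bigl(\int\tfrac{d^3k'}{(2\pi)^3}\hat{I}(\bm{k}')-I^{\infty}\bigr)$ is exactly your split into nonzero Poisson modes plus the zero-mode mismatch $\mathcal{F}[\hat{I}](0)-I^{\infty}=-\int_{\mathbb{R}^3\setminus V}\ell\,h$, and both pieces are then controlled by the same smoothness-versus-Fourier-decay dictionary (Theorem~\ref{thm:decay}), with the second piece reduced to the decay of the coordinate-space product whose Fourier transform shares the analytic structure of $\hat{I}(\bm{k}')$ up to a shift of variable. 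Your additional Paley--Wiener remark about $\bar{l}_L$ being entire is a nice complement to the paper's separate discussion of why the electroweak infrared singularities do not enter, but it is not needed for the theorem itself.
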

This theorem implies that the smoothness of $\hat{I}(\bm{k}')$ needs to be investigated in the specific physical problem to analyze the behavior of this finite-volume effect $\Delta I_2$, which is very similar to the previous studies of finite-volume effects. The proof of this theorem relies on the mathematical relation between the smoothness of a function and the suppression behavior of its inverse Fourier transformation \cite{Reed:1975uy}, and we present the detailed proof in Appendix \ref{sec:Append sum}.

\subsection{\label{sec:decay}Decay width problem}
We now turn to another type of physical problem: the radiative decay of mesons. In these processes, mesons decay into final state leptons or photons by exchanging photons or $W^\pm$ bosons through two currents. Examples include the radiative decay to two (virtual) photons $P\to\gamma^{(*)}\gamma^{(*)}$ (such as $\pi^0/\eta/K_L\to \ell^+\ell^-\gamma$ or $\pi^0/\eta/K_L\to \ell^+\ell^-\ell^{'+}\ell^{'-}$), and the weak decay with (virtual) photon emission $P\to\ell \nu_\ell\gamma^{(*)}$ (such as $K^+\to \ell^+\nu_\ell\gamma$ or $K^+\to \ell^+\nu_\ell\ell^{'+}\ell^{'-}$). Fig.~\ref{fig:decay} shows the structure of the decay width of these processes, using $K^+\to \ell^+\nu_\ell\ell^{'+}\ell^{'-}$ as an example. They involve phase space integration over the hadron part and electroweak part of the square of the amplitude, so we can regard this structure as a generalized loop integral.
\begin{figure}[htbp]
	\centering
	\includegraphics[width=0.75\textwidth]{./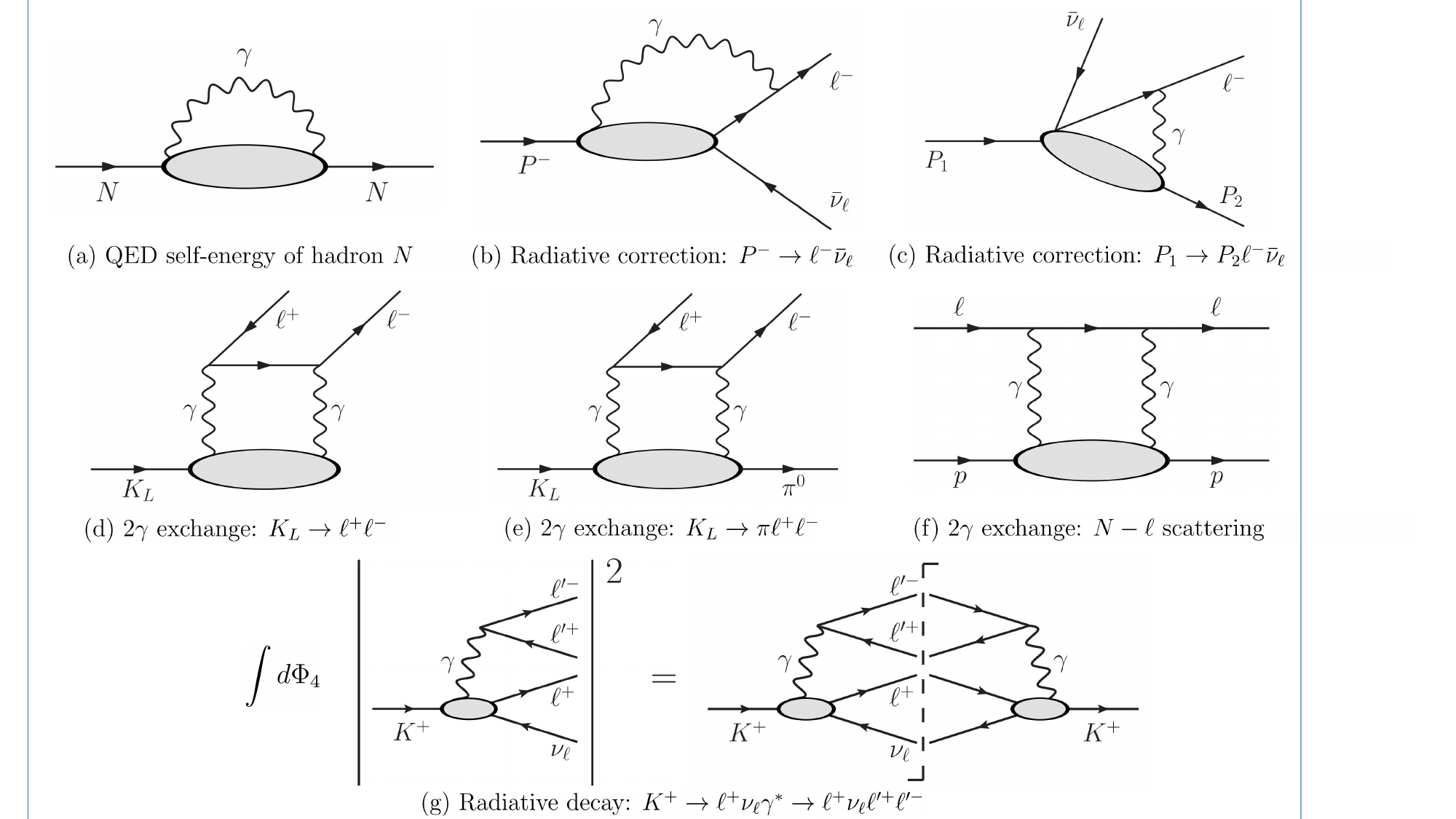}
	\caption{The structure of $K^+\to \ell^+\nu_\ell\ell^{'+}\ell^{'-}$ decay width: phase space integration over the hadron part and electroweak part of the square of the amplitude. \label{fig:decay}}
\end{figure}

We consider a general process of meson $P$ decaying into $l$ leptons or photons $n_1,\cdots,n_l$ through two currents $J_{1}$ and $J_{2}$, which can be either electromagnetic or weak currents. We denote the momentum of the initial meson by $p=(m_P,\bm{0})$, and the momentum of the final particles by $p_1,\cdots,p_l$. The two currents carry the momentum $k=(k^0,\bm{k})$ and $p-k=(m_P-k^0,-\bm{k})$, respectively, and they satisfy the energy-momentum conservation with the final particle momentum $p_1,\cdots,p_l$. The hadronic part in the infinite-volume Minkowski space is
\begin{equation}\label{HMdecay}
	H^{\infty,\mu \nu}(k,p)=\int  d^3 x \int_{-\infty}^\infty dt e^{i k\cdot x}\left\langle 0\left|T\left\{J_{1}^\mu(x) J_{2}^\nu(0)\right\}\right|P\right\rangle.
\end{equation}

The decay width can be generally written as
\begin{equation}\label{Ga}
	\begin{aligned}
		\Gamma^\infty=&\frac{1}{2m_P}\int d\Phi_l|\mathcal{M}(P\to n_1 \cdots n_l)|^2\\
		=&\frac{1}{2m_P}\int d\Phi_l H^{\infty,\mu\nu}(k,p)L^\infty_{\mu\nu\mu'\nu'}(p_1,\cdots,p_l)H^{\infty,\mu'\nu'}(k,p),
	\end{aligned}
\end{equation}
where $d\Phi_l$ is the $l$-body phase space of the final state particles, and $L_{\mu\nu\mu'\nu'}(p_1,\cdots,p_l)$ is the electroweak part of the square of the amplitude. Comparing the decay width here with the loop integral problem introduced before, we can see that they have a similar structure of multiplying the hadronic matrix element and the electroweak weight function and integrating over the momentum. The loop integral is replaced by the phase space integral here. In Ref.~\cite{Tuo:2021ewr}, we propose a lattice calculation method for such decay width. In this part, we point out that the method used in Ref.~\cite{Tuo:2021ewr} is an extension of the EW$_\infty$ method into the decay width problem.

In Ref.~\cite{Tuo:2021ewr}, for a specific continuous momentum $k=(k^0,\bm{k})$ in the phase space, we calculate the hadronic matrix element on the lattice as
\begin{equation}\label{HLTdecay}
		H_{\text{lat}}^{ \mu \nu}(k,p)=-i \int_{-t_s}^{t_s} d \tau \int_V d^3 x e^{k^0 \tau-i \bm{k} \cdot \bm{x}} \left(\left\langle 0\left|T\left\{J_{1,E}^\mu(x) J_{2,E}^\nu(0)\right\}\right|P\right\rangle\right)^{(L)}.
\end{equation}
Since $\bm{k}$ is defined as a continuous momentum, this definition is different from lattice calculation of hadronic matrix elements at discrete momentum. Using $H_{\text{lat}}^{ \mu \nu}(k,p)$ as an input, we define the decay width in the EW$_\infty$ method as
\begin{equation}\label{GaLT}
	\Gamma^{(LT)}=\frac{1}{2m_P}\int d\Phi_l [c_{ME}^{\mu\nu}H_{\text{lat}}^{ \mu \nu}(k,p)]L_{\mu\nu\mu'\nu'}(p_1,\cdots,p_l)[c_{ME}^{\mu'\nu'}H_{\text{lat}}^{ \mu \nu}(k,p)],
\end{equation}
where $c^{\mu\nu}_{ME}$ accounts for the difference between the operator definitions in Euclidean space and Minkowski space (see Appendix \ref{sec:App-EM}). 

To see this is a generalization of the EW$_\infty$ method, we rewrite the expression of $\Gamma^{(LT)}$ in momentum space, analogous to Eq.~\ref{CLT2}, as
\begin{equation}\label{GaLT2}
	\begin{aligned}
		\Gamma^{(LT)}=&\frac{1}{L^3}\sum_{\bm{k}'_1\in\Gamma}\frac{1}{L^3}\sum_{\bm{k}'_2\in\Gamma}\frac{1}{2m_P}\int d\Phi_l \delta_L(\bm{k}'_1-\bm{k})\delta_L(\bm{k}'_2-\bm{k})\\
		&\times  H^{(LT),\mu\nu}(k'_1,p)L^\infty_{\mu\nu\mu'\nu'}(p_1,\cdots,p_l)H^{(LT),\mu'\nu'}(k'_2,p),
	\end{aligned}
\end{equation}
where $H^{(LT),\mu\nu}(k'_{1/2},p)$ is the finite-volume Minkowski hadronic matrix element defined similarly to Eq.~(\ref{HMLT}). Here, the continuous momentum $\bm{k}$ is associated with the infinite-volume electroweak weight function $L^\infty_{\mu\nu\mu'\nu'}(p_1,\cdots,p_l)$, and is related to $p_1,\cdots,p_l$ by the energy-momentum conservation. On the other hand, $\bm{k}'_{1,2}$ are the discrete momenta related to the finite-volume hadronic matrix elements. $\Gamma^{(LT)}$ has a very similar structure as $I^{(LT)}$ in Eq.~(\ref{CLT2}), thus giving a generalization of the EW$_\infty$ method to the decay width problem.

Similar to the loop integral case, the correction $\Delta \Gamma=\Gamma^{(LT)}-\Gamma^\infty$ consists of two parts:
\begin{enumerate}
	\item 
	The first type of correction accounts for the difference between $H^{(LT),\mu\nu}(k'_{1,2},p)$ and $H^{\infty,\mu\nu}(k'_{1,2},p)$ at discrete momentum $\bm{k}'_{1,2}\in\Gamma$. 
	\item 
	The second type of correction is a new finite-volume effect introduced by the integral form in the EW$_\infty$ method $$\frac{1}{L^3}\sum_{\bm{k}'_1\in\Gamma}\frac{1}{L^3}\sum_{\bm{k}'_2\in\Gamma}\int d\Phi_l \delta_L(\bm{k}'_1-\bm{k})\delta_L(\bm{k}'_2-\bm{k}). $$
	As $L$ approaches infinity, the finite-volume $\delta_L$ function converges to the standard $\delta$ function, and the above expression converges to the phase space integral $\int d\Phi_l$.
\end{enumerate}

In the subsequent section, we focus on discussing the corrections in the loop integral problem. The following correction method can also be applied to the decay width problem.

\section{\label{sec:one}Single-particle intermediate state}
In this section, we apply the EW$_\infty$ method to the cases where the ground state of $H^{\infty}(k,p)$ in Eq. (\ref{HMdef}) is a single-particle state, denoted as $N$.  Since the finite-volume effects and temporal truncation effects are dominated by this single-particle ground state, we will neglect other excited states in this section. Take $t<0$ time ordering as an example, we can write the single-particle contribution as
\begin{equation}
\begin{aligned}
	\label{HMpi}
	H^\infty_{t<0}(k,p)&=i\frac{A_{N}(-\bm{k})}{2E_{N}(\bm{k})(m-k^0-E_{N}(\bm{k})+i\epsilon)}+\cdots,\\
	A_{N}(-\bm{k})&=\langle f|J_{2}|N(-\bm{k})\rangle \langle N(-\bm{k}) |J_{1}|i\rangle,
\end{aligned}
\end{equation}
where we define the energy of the single particle as $E_{N}(\bm{k})=\sqrt{\bm{k}^2+m_{N}^2}$. $J_{1}$ and $J_{2}$ represent operators in the physical problem, and we omit their Lorentz indices. 

The lattice data $H_{E}^{(L)}(x)$ in Eq. (\ref{HELx}) is defined in the finite-volume Euclidean space. For $\tau<0$ time ordering, the single-particle contribution can be expressed as
\begin{equation}\label{HELone}
	\begin{aligned}			
		H_E^{(L)}(\tau<0,\bm{x})&=\frac{1}{L^3}\sum_{\bm{k}'\in\Gamma}\frac{A_{N,E}(-\bm{k}')}{2E_{N}(\bm{k}')}e^{(E_{N}(\bm{k}')-m_n)\tau+i\bm{k}'\cdot\bm{x}}+\cdots,\\
		A_{N,E}(-\bm{k}')&=\langle f|J_{2,E}|N(-\bm{k}')\rangle \langle N(-\bm{k}') |J_{1,E}|i\rangle,
	\end{aligned}
\end{equation}
where $\bm{k}'$ is the discrete momentum $\Gamma=\left\{\bm{k}'|\bm{k}'=\frac{2\pi}{L}\bm{n}\right\}$. Since $|i/f\rangle$ and $|N(-\bm{k}')\rangle$ are both vacuum or single-particle states, we can ignore the exponentially suppressed finite-volume effects in $A_{N,E}(-\bm{k}')$. The convention difference of operators defined in Euclidean space or Minkowski space is $A_{N}(-\bm{k})=c_{ME}A_{N,E}(-\bm{k})$.

The physical quantity of interest in this section is the loop integral involving this single-particle dominated hadronic matrix element. The physical examples of using the EW$_\infty$ method include QED self-energy of hadrons~\cite{Feng:2018qpx,Feng:2021zek}, radiative correction of leptonic decay~\cite{Christ:2020jlp,Christ:2023lcc}, rare kaon decay $K^+\to\pi^+\nu\bar{\nu}$~\cite{Christ:2020hwe}, kaon decay with virtual photon emission~\cite{Tuo:2021ewr}, etc. In this section, we discuss how to correct the systematic errors in Eq.~(\ref{Itilde}) in this case. For convenience, we use a model generalized from the QED self-energy problem to illustrate our method.

\subsection{\label{sec:one-eg}Example: model generalized from QED self-energy problem}
\begin{figure}[htbp]
	\centering
	\includegraphics[width=0.6\textwidth]{./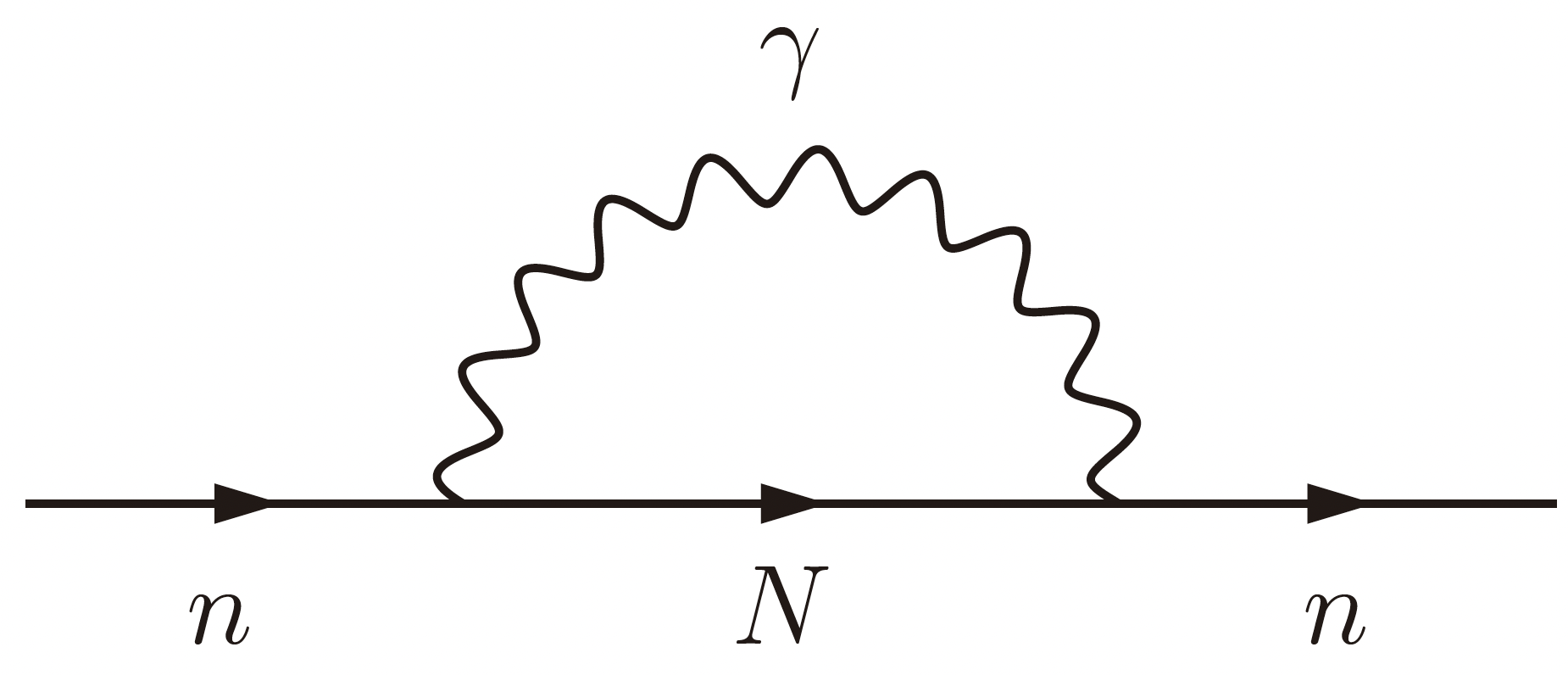}
	\caption{An example of a loop integral with a single-particle intermediate state.\label{fig:selfEmodel}}
\end{figure}
As shown in Fig. \ref{fig:selfEmodel}, we generalize the self-energy problem to a model where the intermediate particle can have different mass compared to the initial/final particle. We denote the initial/final particle and intermediate particle as $n$ and $N$, respectively. This model contains three different cases with $m_N>m_n$, $m_N=m_n$, or $m_N<m_n$. This generalization is for the convenience of including physical problems where the intermediate state is lighter than the initial/final state, such as rare kaon decay $K^+\to\pi^+\nu\bar{\nu}$ with a $\pi$ intermediate state \cite{Christ:2020hwe}. The loop integral is given by
\begin{equation}
	I^\infty =\int \frac{d^4 k}{(2\pi)^4}L^\infty(k)H^\infty(k,p),
\end{equation}
where $p=(m_n,\bm{0})$ is the total input momentum of the integral. $L^\infty(k)$ is the electroweak weight function that contains the photon propagator and other coefficients in the self-energy definition:
\begin{equation}
	\begin{aligned}
		L^{\infty}(k)&=\frac{i}{2}\frac{-i}{k^2+i\epsilon}.
	\end{aligned}
\end{equation}

The hadronic matrix element in this problem is defined as
\begin{equation}\label{HMpi-model}
	H^\infty(k,p)=\int d^3 x\int_{-\infty}^\infty dt e^{ik\cdot x}g_{\mu\nu}\langle n|T[J_{\text{em}}^\mu(t,\bm{x})J_{\text{em}}^\nu(0)]|n\rangle,
\end{equation}
where $J_{\text{em}}^\mu=(\frac 23 e\bar{u}\gamma^\mu u-\frac 13 e \bar{d}\gamma^\mu d)_M$ is the Minkowski space electromagnetic current, and $|i\rangle=|f\rangle=|n\rangle$ are the initial and final states. In this problem, the $t>0$ and $t<0$ parts correspond to the same physical contribution, which is dominated by the single-particle intermediate state $N$. For the $t<0$ part, $A_{N}(-\bm{k})$ in Eq. (\ref{HMpi}) is defined as $A_{N}(-\bm{k})=g_{\mu\nu}\langle n|J_{\text{em}}^\nu|N(-\bm{k})\rangle \langle N(-\bm{k}) |J_{\text{em}}^\mu|n\rangle$.

We use the EW$_\infty$ method to calculate the loop integral as
\begin{equation}\label{ILT_1pt}
	I^{(LT)}=\int_V d^3 x\int_{-t_s}^{t_s} d\tau L_E^\infty(\tau,\bm{x})H_E^{(L)}(\tau,\bm{x}),
\end{equation}
where $L_E^\infty(\tau,\bm{x})$ is the electroweak weight function in infinite-volume Euclidean space, defined by Eq. (\ref{LE}) as
\begin{equation}
	L_E^\infty(\tau,\bm{x})=\frac{1}{2}\frac{1}{4\pi^2 (\bm{x}^2+\tau^2)}.
\end{equation}
The lattice input $H_E^{(L)}(\tau,\bm{x})$ is defined as
\begin{equation}\label{HELone-model}
	H_E^{(L)}(\tau,\bm{x})=\delta^{\mu\nu}(\langle  n|T[J_{\text{em},E}^\mu(\tau,\bm{x})J_{\text{em},E}^\nu(0)]|n\rangle)^{(L)},
\end{equation}
where $J_{\text{em},E}^\mu=(\frac 23 e\bar{u}\gamma^\mu u-\frac 13 e\bar{d}\gamma^\mu d)_E$ is the Euclidean space electromagnetic current. For the $t<0$ part, we take $A_{N,E}(-\bm{k})=\delta_{\mu\nu}\langle n|J_{\text{em},E}^\nu|N(-\bm{k})\rangle \langle N(-\bm{k}) |J_{\text{em},E}^\mu|n\rangle$ in Eq. (\ref{HELone}). In Appendix \ref{sec:App-EM}, we show that in this problem $c_{ME}=1$ and $A_{N,E}(-\bm{k})=A_{N}(-\bm{k})$, so we use $A_{N}(-\bm{k})$ for both in the following.

We use this model as an example to introduce how to do the correction in the single-particle case. In this problem, the $t>0$ and $t<0$ parts correspond to the same physical contribution. For convenience, we focus on $t<0$ contribution and ignore $t>0$ part in the following discussion.

\subsection{\label{sec:one-1}First type of correction}
We consider the first type of correction ($\Delta I_1$). According to Eq. (\ref{DC1}), $\Delta I_1$ depends only on the difference between the finite-volume matrix element $H^{(LT)}(k',p)$ and the infinite-volume matrix element $H^{\infty}(k',p)$ at the discrete momentum $\bm{k}'\in\Gamma$. Under the single-particle dominance, $H^{\infty}(k',p)$ is given by Eq. (\ref{HMpi}), and $H^{(LT)}(k',p)$ is
\begin{equation}
	H^{(LT)}(k',p)=i\frac{A_{N}(-\bm{k}')}{2E_{N}(\bm{k}')(m_n-k^0-E_{N}(\bm{k}'))}\left(1-e^{(m_n-k^0-E_{N}(\bm{k}'))t_s}\right)+\cdots.
\end{equation}

The correction of the hadronic matrix element at the discrete momentum $\bm{k}'$ is
\begin{equation}
	\begin{aligned}
		\Delta H(k',p)&=H^{(LT)}(k',p)-H^\infty(k',p),\\
		&=i\frac{A_{N}(-\bm{k}')}{2E_N(\bm{k}')(m_n-k^0-E_{N}(\bm{k}'))}\left(1-e^{(m_n-k^0-E_{N}(\bm{k}'))t_s}\right)\\&-i\frac{A_{N}(-\bm{k}')}{2E_{N}(\bm{k}')(m_n-k^0-E_{N}(\bm{k}')+i\epsilon)}.
	\end{aligned}
\end{equation}
Here, $H^{(LT)}(k',p)$ is a function without singularities, so we can add $i\epsilon$ to its denominator and simplify $H^\infty(k',p)$ as
\begin{equation}
	\Delta H(k',p)=i\frac{A_{N}(-\bm{k}')}{2E_{N}(\bm{k}')(m_n-k^0-E_{N}(\bm{k}')+i\epsilon)}\left(-e^{(m_n-k^0-E_{N}(\bm{k}'))t_s}\right).
\end{equation}
Then $\Delta I_1$ is given by
\begin{equation}
	\begin{aligned}
		\Delta I_{1}&=\frac{1}{L^3}\sum_{\bm{k}'\in\Gamma} \int \frac{d^3 k}{(2\pi)^3}\delta_L(\bm{k}'-\bm{k})\int_{C} \frac{d k^0}{2\pi} L^\infty(k)\Delta H(k',p).
	\end{aligned}
\end{equation}

We can further perform the contour integration in the $k_0$ direction and simplify the correction formula as
\begin{equation}\label{DC1_1pt}
	\begin{aligned}
		\Delta I_{1}=&\frac{1}{L^3}\sum_{\bm{k}'\in\Gamma} \int \frac{d^3 k}{(2\pi)^3}\delta_L(\bm{k}'-\bm{k})\frac{A_{N}(-\bm{k}')}{2E_\gamma(\bm{k})2E_{N}(\bm{k}')(E_\gamma(\bm{k})+E_{N}(\bm{k}')-m_n-i\epsilon)},\\ &\times \left(-e^{(m_n-E_\gamma(\bm{k})-E_{N}(\bm{k}'))t_s}\right),
	\end{aligned}
\end{equation}
where $E_\gamma(\bm{k})=|\bm{k}|$ is the energy of the photon. To avoid introducing model dependence when calculating $\Delta I_{1}$, $A_N(-\bm{k}')$ can be directly reconstructed from lattice data $H_{E}^{(L)}(\tau,\bm{x})$. We assume that the time cutoff $t_s$ is large enough for the ground state dominance and ignore the around-the-world effects, we have
\begin{equation}
	\begin{aligned}
		\frac{A_N(-\bm{k}')}{2E_N(\bm{k}')}=&e^{(E_N(\bm{k}')-m_n)t_s}\int_V d^3x H_E^{(L)}(-t_s,\bm{x})e^{-i\bm{k}'\cdot\bm{x}}\\
		=&e^{(E_N(\bm{k}')-m_n)t_s}H_E^{(L)}(-t_s,\bm{k}'),
	\end{aligned}
\end{equation}
where $H_E^{(L)}(-t_s,\bm{k}')$ is the discrete Fourier transformation of the lattice data $H_E^{(L)}(-t_s,\bm{x})$. Using $H_E^{(L)}(-t_s,\bm{k}')$ as input, we can rewrite correction formula as
\begin{equation}\label{DC1_1pt_IVR}
	\begin{aligned}
		\Delta I_{1}=&\frac{1}{L^3}\sum_{\bm{k}'\in\Gamma} \int \frac{d^3 k}{(2\pi)^3}\delta_L(\bm{k}'-\bm{k})\frac{H_E^{(L)}(-t_s,\bm{k}')}{2E_\gamma(\bm{k})(E_\gamma(\bm{k})+E_{N}(\bm{k}')-m_n-i\epsilon)} \left(-e^{-E_\gamma(\bm{k})t_s}\right).
	\end{aligned}
\end{equation}
Therefore, $\Delta I_1$ can be calculated from the lattice data without introducing form factors of $A_N(-\bm{k}')$. In practical calculations, since the high momentum contributions are suppressed by the large $t_s$ in the exponential term, only a few lower momentum modes $\bm{k}'$ need to be calculated. 

We numerically implements $I^{(LT)}$ and the corrected $\tilde I^{(L)}=I^{(LT)}-\Delta I_1$, as shown in Fig.~\ref{fig:CLT}.  For $m_N<m_n$, we only show the real part of the loop integral. In the model, we choose $L=24$ and $a=1$, and take the form factor as the same function with electromagnetic form factor of $\pi$, $\langle n(p_1)|J^\mu|N(p_2)\rangle=e F^{(\pi)}((p_1-p_2)^2)(p_1+p_2)^\mu$. The initial/final state mass $m_n=0.14$ is fixed, and the intermediate state mass $m_N$ is taken as three different cases: $m_N>m_n$, $m_N=m_n$, and $m_N<m_n$. We should mention that only $m_N=m_n$ case corresponds to physical self energy problem. Another two cases are only for the illustration of the method. In calculation of $\Delta I_1$, we choose modes with $\frac{L}{2\pi}|(\bm{k}')_{x,y,z}|\leq 2$ for correction. From the figure, we see that $\Delta I_1$ corrects the temporal truncation effect on the lattice, which is exponentially convergent when $m_N\geq m_n$, and is exponentially divergent when $m_N<m_n$.
\begin{figure}[htbp]
	\centering \includegraphics[width=1\textwidth]{./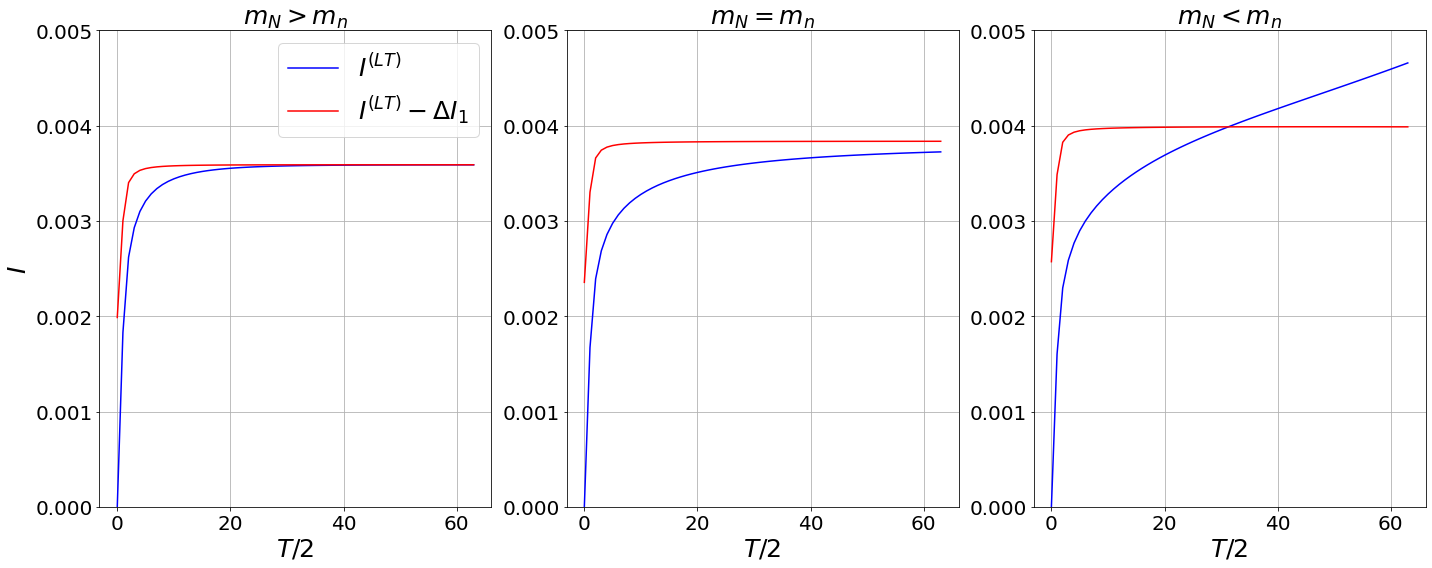} \caption{The loop integral $I^{(LT)}$ and the result after the first type of correction $\tilde{I}^{(L)}=I^{(LT)}-\Delta I_1$ ($L=24$). $m_n$ is fixed at 0.14, and $m_N$ is taken as three cases: $m_N=0.2>m_n$, $m_N=0.14=m_n$, and $m_N=0.1<m_n$.\label{fig:CLT}} 
\end{figure}

\subsection{\label{sec:one-2}Second type of correction}
The second type of correction ($\Delta I_{2}$) is the finite-volume effect caused by summation
\begin{equation}
	\begin{aligned}
		\tilde{I}^{(L)}&=\frac{1}{L^3}\sum_{\bm{k}'\in\Gamma} \hat{I}(\bm{k}'),\\
		\hat{I}(\bm{k}')&=\int \frac{d^3 k}{(2\pi)^3}\delta_L(\bm{k}'-\bm{k})\int_{C} \frac{d k^0}{2\pi} L^\infty(k)H^\infty(k',p).
	\end{aligned}
\end{equation}
According to Theorem \ref{thm:FV}, the volume suppression behavior of $\Delta I_2$ with increasing volume depends on the smoothness of $\hat{I}(\bm{k}')$. We first prove that $\hat{I}(\bm{k}')$ has no singularity, and then discuss whether it is infinitely differentiable.

\subsubsection{Proof of no singularity}
We first study the integral structure of $\hat{I}(\bm{k}')$ and examine whether possible on-shell intermediate states can cause singularities in the summand. As shown in Fig.~\ref{fig:selfEcut}, the most singular case is the single-particle intermediate state $N$ and the photon are both on-shell (the cut line indicates on-shell). We prove that this contribution does not cause singularity in $\hat{I}(\bm{k}')$. This contribution can be written as
\begin{equation}
	\hat{I}(\bm{k}')=\int \frac{d^3 k}{(2\pi)^3}\delta_L(\bm{k}'-\bm{k})\frac{A_N(-\bm{k}')}{2E_\gamma(\bm{k})2E_N(\bm{k}')\left(E_\gamma(\bm{k})+E_N(\bm{k}')-m_n-i\epsilon\right)}+\cdots.
\end{equation}

\begin{figure}[htbp]
	\centering
	\includegraphics[width=0.6\textwidth]{./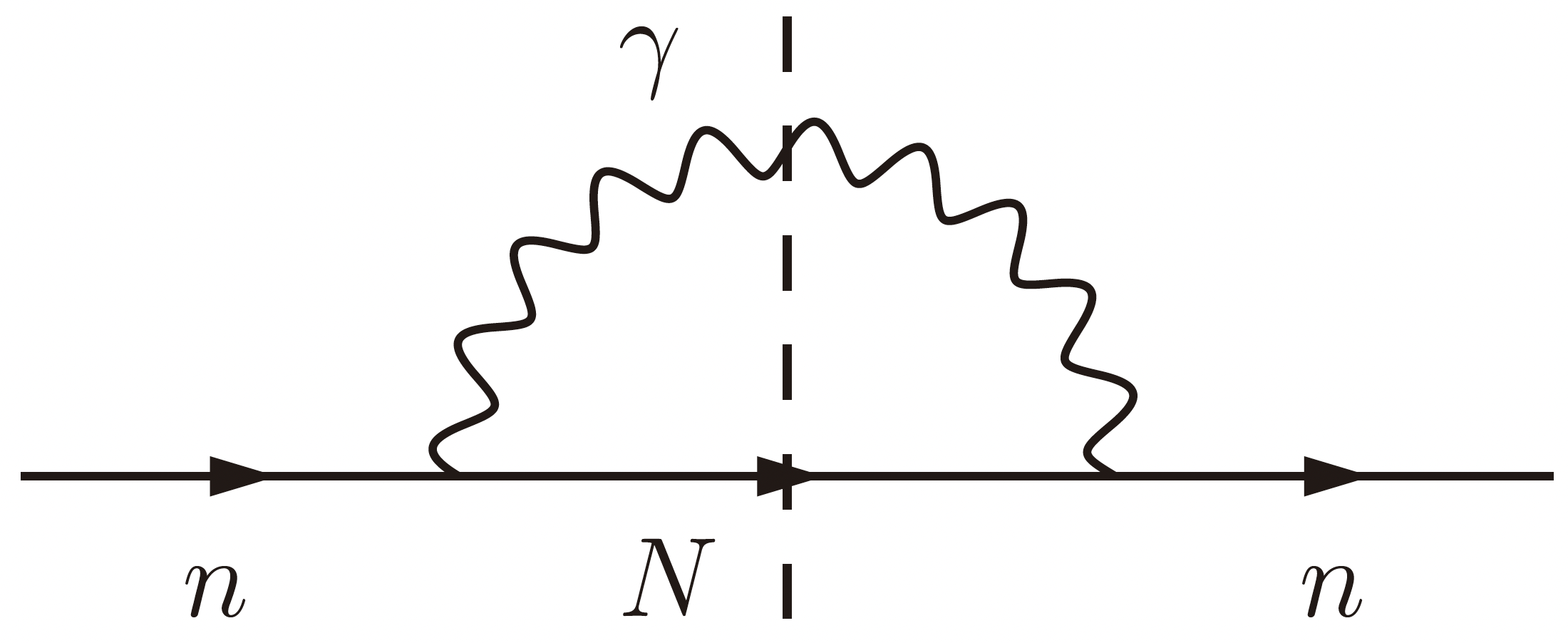}
	\caption{Analysis of the singularity structure of $\hat{I}(\bm{k}')$ by the on-shell intermediate state contribution.\label{fig:selfEcut}}
\end{figure}

In order to study whether the above integral has singularities, we introduce a mathematical theorem that is well-known in quantum field theory research~\cite{Eden:1966dnq,Sterman:1993hfp}:
\begin{thm}\label{thm:singularity}
	Consider a function of $w$:
	\begin{equation}
		I(w)=\int_C d\zeta F(\zeta,w),
	\end{equation}
	where C is the integration path of $\zeta$ in the complex plane, with endpoints $\zeta_a$ and $\zeta_b$. $F(\zeta,w)$ is an analytic function of $\zeta$, except for a few isolated singularities at $\zeta=\zeta_n(w)$. Then $I(w)$ can have singularities only in the following two cases:
	\begin{enumerate}
		\item Endpoint singularity. A singularity $\zeta_i(w)$ approaches the endpoint of the integration path $\zeta_a$ or $\zeta_b$ as $w\to w_0$, then $I(w)$ may have an endpoint singularity at $w=w_0$.
		\item Pinch singularity. Two singularities $\zeta_i(w)$ and $\zeta_{i'}(w)$ approach from opposite sides of the integration path C to the same point $\zeta_i(w_0)=\zeta_{i'}(w_0)$ as $w\to w_0$, then $I(w)$ may have a pinch singularity at $w=w_0$.
	\end{enumerate}
	Fig. \ref{fig:singularity} shows the possible situations of singularities near the integration path $C$. $P_1$ is an isolated pole that can be avoided by deforming the integration contour. $P_2$ and $P_3$ approach each other and may cause a pinch singularity. $P_4$ approaches $\zeta_b$ and may lead to an endpoint singularity.
\end{thm}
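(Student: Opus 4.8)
The plan is to prove the contrapositive: at any point $w_0$ that is neither an ``endpoint'' point nor a ``pinch'' point in the sense of the statement, $I(w)$ is analytic. The one tool needed is contour deformation. Since $F(\zeta,w)$ is jointly analytic in $(\zeta,w)$ off the singular locus $\{\zeta=\zeta_n(w)\}$, Cauchy's theorem allows $C$ to be replaced by any homotopic path $C'$ with the same endpoints $\zeta_a,\zeta_b$, provided the homotopy sweeps only through points where $F$ is regular. If such a $C'$ can be chosen uniformly for all $w$ in some neighborhood $U$ of $w_0$ --- meaning $C'$ stays a fixed positive distance from every $\zeta_n(w)$ --- then $I(w)=\int_{C'}F(\zeta,w)\,d\zeta$ on $U$, and analyticity follows by differentiating under the integral sign (or by Morera's theorem), because the integrand is then analytic in $w$ and uniformly bounded on the compact set $C'$. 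Everything therefore reduces to identifying exactly when this uniform deformation fails as $w\to w_0$.

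First I would record that the singularities $\zeta_n(w)$ move continuously with $w$ (analytically, if one wishes, via the implicit function theorem applied to the equations defining them). If no $\zeta_n(w_0)$ lies on $C$, then for $w$ near $w_0$ none lies inside a thin tube around $C$, so one takes $C'=C$ and is done. If some singularities do approach $C$, I would work locally near each approach point. Away from the endpoints, a lone singularity $\zeta_i(w)$ touching or crossing $C$ is harmless: we indent $C$ by a small arc bulging to the side away from $\zeta_i(w)$, and since $\zeta_i$ varies continuously this indentation can be kept off the singular locus for all $w$ near $w_0$. The deformation is obstructed only if we are forced to bulge to both sides at once --- i.e.\ a second singularity $\zeta_{i'}(w)$ presses on $C$ from the opposite side with $\zeta_i(w)\to\zeta_{i'}(w_0)=\zeta_i(w_0)\in C$ as $w\to w_0$. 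That is precisely case~2, the pinch. Near an endpoint $\zeta_a$ or $\zeta_b$ the path cannot be moved, so a single singularity dragged onto the endpoint already blocks the deformation; that is case~1.

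The main obstacle --- the step that needs real care --- is establishing that these are the \emph{only} failure modes: that ``not pinched and not at an endpoint'' produces a genuine uniform-in-$w$ deformation on a full neighborhood of $w_0$. This is a compactness argument. One covers $C$ by finitely many small disks, in each disk uses the local construction above to produce a detour valid for all $w$ in a common neighborhood of $w_0$, patches the detours into one global path $C'$, and checks that the result still runs from $\zeta_a$ to $\zeta_b$ and stays bounded away from $\bigcup_n\{\zeta_n(w)\}$ uniformly. The quantitative heart is the continuity estimate that upgrades ``the nearby $\zeta_n(w_0)$ avoid this small arc'' to ``the $\zeta_n(w)$ avoid it for all $w$ in a neighborhood,'' which is where the radius and orientation of each indentation are fixed from the data at $w_0$. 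One also has to note, and then dismiss as harmless, the degenerate possibilities that obstruct nothing --- singularities colliding with one another off $C$, or running off to infinity. With the local detours in hand the global patching is routine, and the contrapositive then yields the theorem. Since, as noted in the text, this is the classical Landau-type argument, I would keep the write-up at the level of a sketch with references rather than a fully formal proof.
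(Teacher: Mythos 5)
The paper does not prove Theorem~\ref{thm:singularity} at all: it is quoted as a well-known result with citations to Eden et al.\ and Sterman, so there is no in-paper proof to compare against. Your sketch is the standard Landau contour-deformation argument contained in those references --- proving the contrapositive by constructing a deformation of $C$ valid uniformly for $w$ in a neighborhood of $w_0$, which can fail only at an endpoint or a pinch --- and it is correct in outline and is precisely the argument the paper implicitly defers to.
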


\begin{figure}[htbp]
	\centering
	\includegraphics[width=0.4\textwidth]{./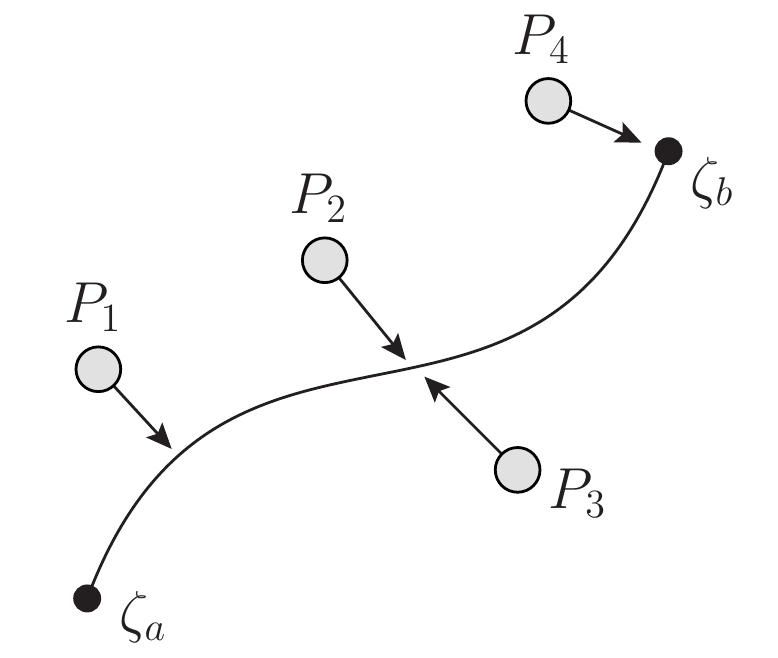}
	\caption{Isolated pole, endpoint singularies and pinch singularities. \label{fig:singularity}}
\end{figure}

In the integral expression of $\hat{I}(\bm{k}')$, $\zeta$ and $w$ in the theorem are replaced by $\bm{k}$ and $\bm{k}'$ respectively, and the endpoint singularities in the physical problem are infrared singularities. The denominator of the integrand have poles at $(\bm{k})^{(0)}=\bm{0}$ and $(\bm{k})^{(0)}=m_n-E_N(\bm{k}')$.

When $m_{n} \geq m_{N}$, both the pinch singularity and the endpoint singularity (infrared singularity) occur at $|\bm{k}'|=\sqrt{m_{n}^2-m_N^2}$. The integral form of $\hat{I}(\bm{k}')$ around this point is
\begin{equation}
	\hat{I}(\bm{k}')|_{|\bm{k}'|=\sqrt{m_{n}^2-m_N^2}}=\int\frac{d^3 k}{(2\pi)^3}\delta_L(\bm{k}'-\bm{k})\frac{A_{N}(-\bm{k}')}{2|\bm{k}|2m_{N}(|\bm{k}|+m_n-m_{n})}\sim \int d^3 k \frac{1}{|\bm{k}|^2}.
\end{equation}

When $m_n< m_N$, only the endpoint singularity (infrared singularity) occurs, and the integral form of $\hat{I}(\bm{k}')$ is
\begin{equation}
	\hat{I}(\bm{k}')=\int\frac{d^3 k}{(2\pi)^3}\delta_L(\bm{k}'-\bm{k})\frac{A_{N}(-\bm{k}')}{2|\bm{k}|2E_{N}(\bm{k}')(|\bm{k}|+E_{N}(\bm{k}')-m_n)}\sim \int d^3 k \frac{1}{|\bm{k}|}.
\end{equation}

We can see that in both cases above, the integral in the infrared region is finite. This is because $\hat{I}(\bm{k}')$ has a very similar integral structure with the infrared safe quantity $I^\infty$. This is the main advantage of EW$_\infty$ method: the electroweak weight function $L^\infty(k)$ may have singularities, but they do not cause singularity in the summand $\hat{I}(\bm{k}')$. Fig. \ref{fig:Chat} numerically confirms that $\hat{I}(\bm{k}')$ has no infrared singularity.

\subsubsection{Cusp effect}
We have shown that $\hat{I}(\bm{k}')$ is free of singularities, but we still need to check its smoothness and differentiability. In this part, we show that when $m_N\geq m_n$, i.e., the intermediate state is not lighter than the initial and final states, $\hat{I}(\bm{k}')$ is infinitely smooth and differentiable; when $m_N<m_n$, i.e., the intermediate state is lighter than the initial and final states, the summand is not smooth near the threshold $|\bm{k}'|=\sqrt{m_n^2-m_N^2}$, which we call the cusp effect.
To prove this, we simplify $\hat{I}(\bm{k}')$ as
\begin{equation}
	\begin{aligned}
		\hat{I}(\bm{k}')=&\int \frac{d^3 k}{(2\pi)^3}\delta_L(\bm{k}'-\bm{k})\frac{A_N(-\bm{k}')}{2E_\gamma(\bm{k})2E_N(\bm{k}')(E_\gamma(\bm{k})+E_N(\bm{k}')-m_n-i\epsilon)}\\
		=&-\frac{A_N(-\bm{k}')}{32\pi^3\sqrt{|\bm{k}'|^{2}+m_N^2}}\int_0^\infty d|\bm{k}|\frac{|\bm{k}|}{m_n-\sqrt{|\bm{k}'|^{2}+m_N^2}-|\bm{k}|+i\epsilon}\int d\Omega_{\bm{k}}\delta_L(\bm{k}'-\bm{k}).
	\end{aligned}
\end{equation}
We let $w=|\bm{k}|$, $z=m_n-\sqrt{|\bm{k}'|^{2}+m_N^2}$, and denote the other parts by a function $g(w,z)$, then the integral in the above equation has the form
\begin{equation}\label{fzsingle}
	f(z)=\int_0^\infty \frac{wg(w,z)}{z-w+i\epsilon}dw,
\end{equation}
where $z$ depends on $|\bm{k}'|$, and $z=0$ corresponds to the threshold $|\bm{k}'|=\sqrt{m_n^2-m_N^2}$. To analyze the cusp effect, we decompose $f(z)$ as
	\begin{equation}
		f(z)=\int_0^\Lambda \frac{zg(z,z)}{z-w+i\epsilon}dw+\int_0^\Lambda \frac{\left[wg(w,z)-zg(z,z)\right]}{z-w+i\epsilon}dw
		+\int_\Lambda^\infty \frac{wg(w,z)}{z-w+i\epsilon}dw,
	\end{equation}
	where $\Lambda$ is a cutoff satisfied $\Lambda>m_n>z$. Apparently, both the second term and the third term do not have the singularity on the denominator and are irrelevant to the cusp effect.  The first term can be calculated explicitly as
	\begin{equation}
		\int_0^\Lambda \frac{zg(z,z)}{z-w+i\epsilon}dw=\begin{cases}
			zg(z,z)\left[\log(z)-\log(\Lambda-z)+i\pi\right]&\quad z> 0;\\
			zg(z,z)\left[\log(-z)-\log(\Lambda-z)\right]&\quad z< 0.
		\end{cases} 
	\end{equation}
	Then, the non-smoothness around the threshold $z=0$, i.e., the cusp effect, is given by $\log(z)+i\pi$ with $z>0$ and $\log(-z)$ with $z<0$.

When $m_N\geq m_n$, $z$ always satisfies $z\leq 0$, so the integral does not cross the non-smooth point of $z=0$. Thus, $\hat{I}(\bm{k}')$ is infinitely differentiable. When $m_N< m_n$, $\hat{I}(\bm{k}')$ is non-smooth at the threshold of $z=0$. For $z>0$, i.e., $0\leq|\bm{k}'|<\sqrt{m_\pi^2-m_N^2}$, the real part of the integral is given by the principal value integral, and the imaginary part is given by the residue at the pole; whereas for $z<0$, i.e., $|\bm{k}'|>\sqrt{m_n^2-m_N^2}$, the real part is a regular integral, and there is no imaginary part. Therefore, although the real and imaginary parts of $\hat{I}(\bm{k}')$ are continuous at the threshold, they are not infinitely differentiable. As shown in Fig.~\ref{fig:Chat}, we numerically calculated the real part and its first derivative of $\hat{I}(\bm{k}')$ for three cases: $m_N>m_n$, $m_N=m_n$, and $m_N<m_n$, confirming the conclusion here.
\begin{figure}[htbp]
	\centering
	\includegraphics[width=1\textwidth]{./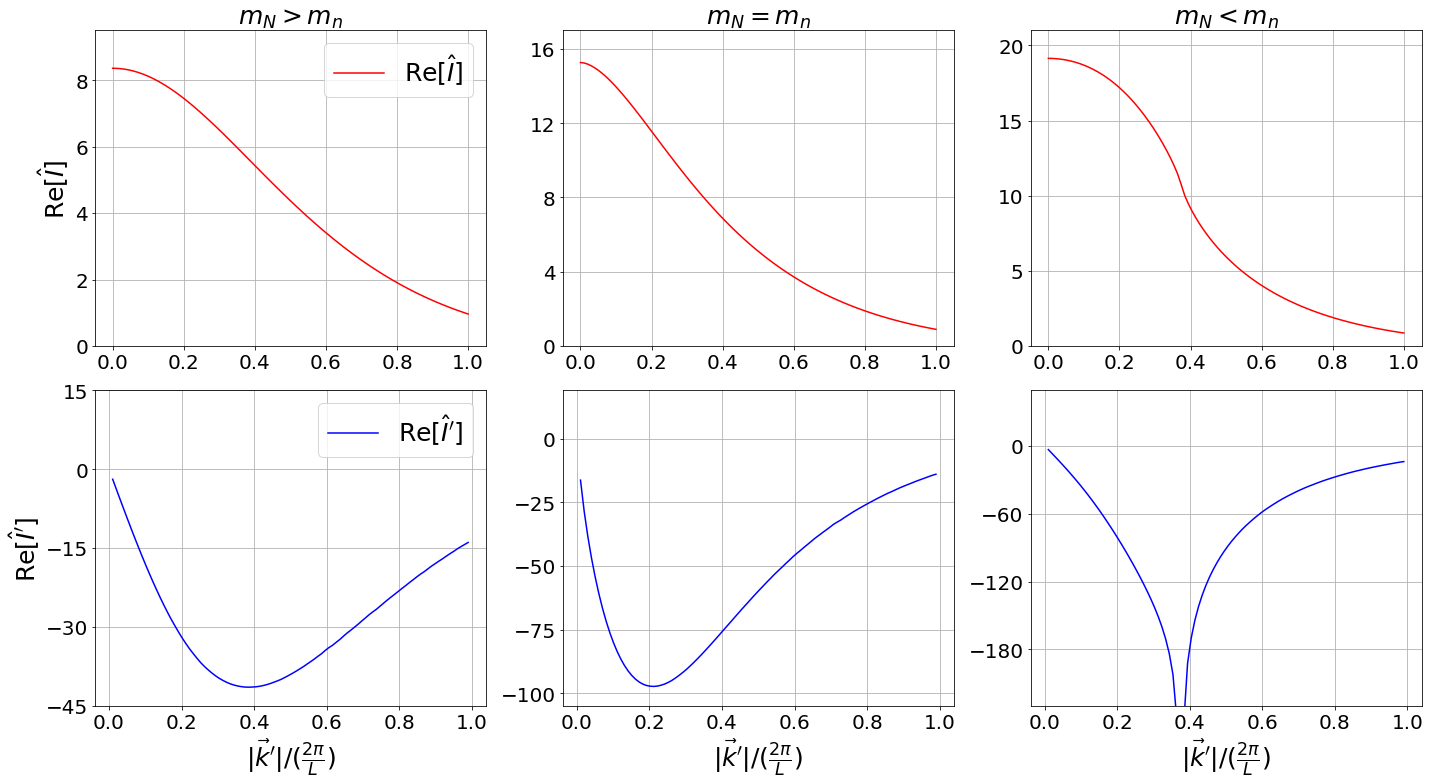}
	\caption{The real part and its first derivative of $\hat{I}(\bm{k}')$. Here $m_n$ is fixed at 0.14, and $m_N$ is taken as three cases: $m_N=0.2>m_n$, $m_N=0.14=m_n$, and $m_N=0.1<m_n$.\label{fig:Chat}}
\end{figure}

According to Theorem \ref{thm:FV}, we can conclude that: when the intermediate state is heavier than or equal to the initial and final states ($m_N\geq m_n$), $\Delta I_2$ is an exponentially suppressed finite-volume effect; when the intermediate state is lighter than the initial and final states ($m_N<m_n$), $\Delta I_2$ is a power-law suppressed finite-volume effect due to the cusp effect. To avoid this power-law finite-volume effect, next we show that the infinite volume reconstruction (IVR) method proposed in Ref. \cite{Feng:2018qpx,Christ:2020hwe} offers a technique to avoid this cusp effect.

\subsubsection{Avoiding cusp effect by IVR method}
The infinite volume reconstruction (IVR) method \cite{Feng:2018qpx,Christ:2020hwe} is defined as
\begin{equation}\label{IVRV}
	\begin{aligned}
		I^{(L)}_{\mathrm{IVR}}&=I_{\mathrm{IVR},s}^{(L)}+I_{\mathrm{IVR},l}^{(L)},\\
		I_{\mathrm{IVR},s}^{(L)}&=\int_V d^3x \int_{-t_s}^{0} d\tau L_E^\infty(\tau,\bm{x})H_{E}^{(L)}(\tau,\bm{x}),\\
		I_{\mathrm{IVR},l}^{(L)}&=\int_V d^3x \tilde{L}_E^\infty(t_s,\bm{x})H_{E}^{(L)}(-t_s,\bm{x}).
	\end{aligned}
\end{equation}
Here we only consider the calculation of $t<0$ time ordering. The short-range part $I_{\mathrm{IVR},s}^{(L)}$ is identical to $I^{(LT)}$, so the IVR method is based on the EW$_\infty$ method. The long-range part $I_{\mathrm{IVR},l}^{(L)}$ provides the correction of the temporal truncation effect. The electroweak weight function in $I_{\mathrm{IVR},l}^{(L)}$ is given by
\begin{equation}\label{IVRl}
	\tilde{L}_E^\infty(t_s,\bm{x})=\int \frac{d^3 k}{(2\pi)^3}e^{-i\bm{k}\cdot\bm{x}}\frac{e^{-E_{\gamma}(\bm{k})t_s}}{2E_\gamma(\bm{k})(E_\gamma(\bm{k})+E_N(\bm{k})-m_n-i\epsilon)}.
\end{equation}

We can rewrite $I_{\mathrm{IVR},l}^{(L)}$ in momentum space and compare it with $\Delta I_{1}$ in Eq. (\ref{DC1_1pt_IVR}):
\begin{equation}\label{DC1_1pt_IVR2}
	\begin{aligned}
		-I_{\mathrm{IVR},l}^{(L)}=&\frac{1}{L^3}\sum_{\bm{k}'\in\Gamma} \int \frac{d^3 k}{(2\pi)^3}\delta_L(\bm{k}'-\bm{k})\frac{H_E^{(L)}(-t_s,\bm{k}')}{2E_\gamma(\bm{k})(E_\gamma(\bm{k})+E_{N}(\bm{k})-m_n-i\epsilon)} \left(-e^{-E_\gamma(\bm{k})t_s}\right),\\
		\Delta I_{1}=&\frac{1}{L^3}\sum_{\bm{k}'\in\Gamma} \int \frac{d^3 k}{(2\pi)^3}\delta_L(\bm{k}'-\bm{k})\frac{H_E^{(L)}(-t_s,\bm{k}')}{2E_\gamma(\bm{k})(E_\gamma(\bm{k})+E_{N}(\bm{k}')-m_n-i\epsilon)} \left(-e^{-E_\gamma(\bm{k})t_s}\right).
	\end{aligned}
\end{equation}
The only difference is that $E_N(\bm{k}')$ in the denominator of $\Delta I_{1}$ is replaced by $E_N(\bm{k})$ in $I_{\mathrm{IVR},l}^{(L)}$. Due to the existence of $\delta_L(\bm{k}'-\bm{k})$, they are equivalent in the infinite-volume limit: $\lim_{L\to\infty} (-I_{\mathrm{IVR},l}^{(L)})=\lim_{L\to\infty} \Delta I_1$. Therefore, the IVR method provides a modified version of the temporal truncation effect $\Delta I_{1}$. In the infinite-volume limit, both methods converge to the physical quantity
\begin{equation}
	\lim_{L\to\infty}(I^{(LT)}+I_{\mathrm{IVR},l}^{(L)})=\lim_{L\to\infty}(I^{(LT)}-\Delta I_1)=I^\infty.
\end{equation}

The finite-volume effect in the IVR method (denoted as $\delta_{\text{IVR}}=I_{\text{IVR}}^{(L)}-I^\infty$) can also be analyzed using Theorem \ref{thm:FV}. We write $I_{\mathrm{IVR}}^{(L)}$ as a discrete summation in momentum space
\begin{equation}
	\begin{aligned}
		I_{\mathrm{IVR}}^{(L)}=&\frac{1}{L^3}\sum_{\bm{k}'\in\Gamma} \int \frac{d^3 k}{(2\pi)^3}\delta_L(\bm{k}'-\bm{k})\frac{A_{N}(-\bm{k}')}{2E_\gamma(\bm{k})2E_{N}(\bm{k}')}\\ &\times\left[ \frac{\left(1-e^{(m_n-E_\gamma(\bm{k})-E_{N}(\bm{k}'))t_s}\right)}{E_\gamma(\bm{k})+E_{N}(\bm{k}')-m_n}-\frac{\left(-e^{(m_n-E_\gamma(\bm{k})-E_{N}(\bm{k}'))t_s}\right)}{E_\gamma(\bm{k})+E_{N}(\bm{k})-m_n-i\epsilon}\right],
	\end{aligned}
\end{equation}
where the two terms in the bracket come from $I_{IVR,s}^{(L)}=I^{(LT)}$ and $I_{IVR,l}^{(L)}$ respectively. Due to the temporal truncation $t_s$, the first term is a nonsingular function and does not cause non-smoothness in the summand. The second term avoids the dependence of $\bm{k}'$ in the denominator, thus avoiding the non-smoothness (cusp effect) of $\bm{k}'$ near the threshold. Therefore, in the case of $m_n>m_N$, the summand in the IVR method is still an infinitely differentiable function, and there is always an exponentially suppressed finite-volume effect.

Fig. (\ref{fig:res1}) shows the relative error of finite-volume effects in both methods with different values of $m_N$. The red line shows relative error of $\Delta I_2=\tilde{I}^{(L)}-I^\infty$, and the blue line shows the relativer error of $\delta_{\text{IVR}}=I_{\text{IVR}}^{(L)}-I^\infty$. In the numerical implementation, $\tilde{I}^{(L)}$ is calculated by $\tilde{I}^{(L)}=I^{(LT)}-\Delta I_1$ as shown in Fig. (\ref{fig:CLT}). $I_{\text{IVR}}^{(L)}$ is calculated by the definition in Eq. (\ref{IVRV}). $I^\infty$ can be determined by a sufficient large volume $\lim_{L\to\infty}I_{\text{IVR}}^{(L)}$. From these figures we can see, in the case of $m_N\geq m_n$ (such as QED self-energy of hadrons), two methods are almost identical and only have exponentially suppressed finite-volume effects. In the case of $m_N<m_n$, $\Delta I_2$ has a slow convergence behavior as volume increases, indicating a power-law finite-volume effect. $\delta_{\text{IVR}}$ converges much faster and only has exponentially suppressed finite-volume effects. 

\begin{figure}[htbp]
	\centering
	\includegraphics[width=1\textwidth]{./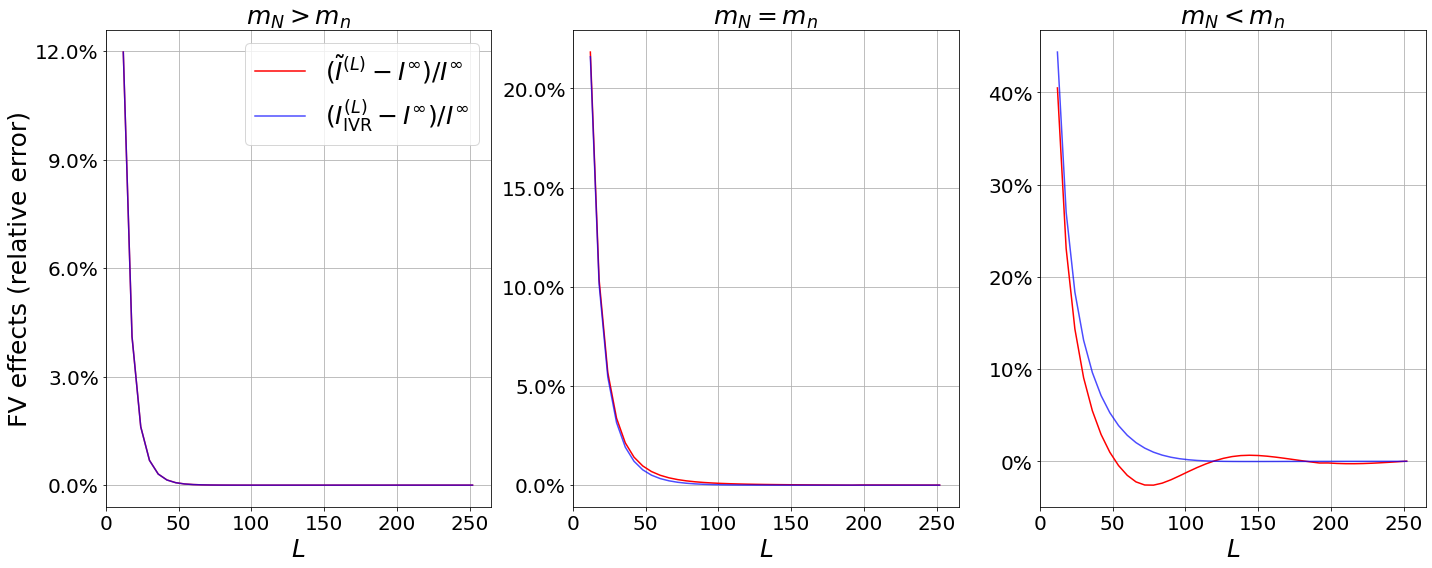}
	\caption{
		Finite-volume effects in two methods with different values of $m_N$.
		The red line shows relative error of $\Delta I_2=\tilde{I}^{(L)}-I^\infty$, and the blue line shows the relativer error of $\delta_{\text{IVR}}=I_{\text{IVR}}^{(L)}-I^\infty$. 
		Here $m_n$ is fixed at 0.14, and $m_N$ is taken as three cases: $m_N=0.2>m_n$, $m_N=0.14=m_n$, and $m_N=0.1<m_n$.
		\label{fig:res1}
	}
\end{figure}
To summarize, when applying the EW$_\infty$ method to the single-particle intermediate state case, we identify two types of systematic error as the temporal truncation effect $\Delta I_1$ and the finite-volume effect $\Delta I_2$. The IVR method modified the form of $\Delta I_1$, avoiding the power-law finite-volume effect due to the cusp effect when $m_N<m_n$. In practical lattice calculations, the exponentially suppressed finite-volume effect in IVR method can either be avoided by increasing the lattice volume, or be corrected by calculating $\delta_{\text{IVR}}=I_{\text{IVR}}^{(L)}-\lim_{L\to\infty}I_{\text{IVR}}^{(L)}$ with input of form factors~\cite{Tuo:2019bue,Tuo:2021ewr}. The conclusion in this section also holds for other physical processes with more complex electroweak weight functions. The IVR technique in this part is limited to the single-particle case and cannot be easily extended to the two-particle case discussed in the next section.

\section{\label{sec:two}Two-particle intermediate state}
In this section, we study the EW$_\infty$ method and its finite-volume effects for the two-particle intermediate state case. We focus on the contributions from time ordering $t<0$ and neglect contribution from other excited states. The contribution of two-particle intermediate state (denoted as $\pi\pi$ for simplicity) in infinite volume is given by
\begin{equation}\label{HM2pi}
	\begin{aligned}
		H^\infty_{t<0}(k,p)=&i\sumint_{\alpha_{\pi\pi}}\frac{A_{\pi\pi}^{\infty}(-\bm{k},E_{\alpha_{\pi\pi}})}{m-k^0-E_{\alpha_{\pi\pi}}+i\epsilon}+\cdots,\\
		A_{\pi\pi}^{\infty}(-\bm{k},E_{\alpha_{\pi\pi}})=&\langle f | J_{2}|\alpha_{\pi\pi}(-\bm{k},E_{\alpha_{\pi\pi}})\rangle  \langle
		\alpha_{\pi\pi}(-\bm{k},E_{\alpha_{\pi\pi}})|J_{1}|i\rangle.
	\end{aligned}
\end{equation}
Here, $\sumintinline_{\alpha_{\pi\pi}}$ denotes the sum over the infinite-volume two-particle states $|\alpha_{\pi\pi}(-\bm{k},E_{\alpha_{\pi\pi}})\rangle$ with momentum $-\bm{k}$. $J_{1}$ and $J_{2}$ represent operators in the physical problem, and we omit their Lorentz indices for simplicity. 

In lattice calculations, we obtain the finite-volume Euclidean hadronic matrix element $H_{E}^{(L)}(x)$, which is also dominated by two-particle states for $\tau<0$ as
\begin{equation}\label{HEx2pi}
	\begin{aligned}
		H_{E}^{(L)}(\tau<0,\bm{x})=&\frac{1}{L^3}\sum_{\bm{k}'\in\Gamma}\sum_{\Lambda,n}A_{\pi\pi,E}^{(L)}(-\bm{k}',E_{\Lambda,n}) e^{(E_{\Lambda,n}-m)\tau+i\bm{k}'\cdot\bm{x}}+\cdots,\\
		A_{\pi\pi,E}^{(L)}(-\bm{k}',E_{\Lambda,n})=&\langle f | J_{2,E}|\pi\pi_L(-\bm{k}',E_{\Lambda,n})\rangle \langle \pi\pi_L(-\bm{k}',E_{\Lambda,n})|J_{1,E}|i\rangle.
	\end{aligned}
\end{equation}
Here, the finite-volume two-particle states are $|\pi\pi_L(-\bm{k}',E_{\Lambda,n})\rangle$, where $-\bm{k}'$ is the total (discrete) momentum and $E_{\Lambda,n}$ is the discrete energy level of $\pi\pi$. $\Lambda=(\Gamma,\alpha,r)$ labels the quantum numbers under the lattice cubic group's irreducible representation, with $\Gamma$ being the representation, $\alpha=1,\cdots,d_{\Gamma}$ being the dimension of $\Gamma$, and $r$ being the number of occurrence of $\Gamma$. $n$ labels the $n$-th state with quantum number $\Lambda$. The normalization condition of the states is taken as
\begin{equation}
	\langle \pi\pi_L(\bm{k}'_1,E_{\Lambda_1,n_1})|\pi\pi_L(\bm{k}'_2,E_{\Lambda_2,n_2})\rangle=\delta_{\bm{k}'_1,\bm{k}'_2}\delta_{\Lambda_1,\Lambda_2}\delta_{n_1,n_2}.
\end{equation}

In this section, we derive the formula for correcting the finite-volume effects caused by two-particle intermediate states. For the first type of correction ($\Delta I_1$), unlike the single-particle case, it has both the power-law finite volume effect and the temporal truncation effects. For the second type of correction ($\Delta I_2$), we show that the discrete momentum summand is still free from infrared singularities, and the cusp effect causes power-law finite-volume effect. We find that this cusp effect is strongly suppressed if $\pi\pi$ particles are in P wave. 

We use $\eta\to\mu^+\mu^-$ decay as an example to introduce our method. $\eta\to\mu^+\mu^-$ can be regarded as the flavor-conserving version of $K_L\to\mu^+\mu^-$. They have the same $\pi\pi$ intermediate state, so the discussion here can be directly applied to $K_L\to\mu^+\mu^-$ decay. We choose $\eta\to\mu^+\mu^-$ decay for the convenience of discussion, since $K_L\to\mu^+\mu^-$ includes other intermediate states such as $\pi^0$ and $\eta$, whose treatment is not the concern of this work and was discussed in Ref.~\cite{Zhao:2022pbs,Wang:2019try}.

\subsection{\label{sec:two-eg}Example: $\eta\to\mu^+\mu^-$ decay amplitude}
\begin{figure}[htbp]
	\centering
	\includegraphics[width=0.48\textwidth]{./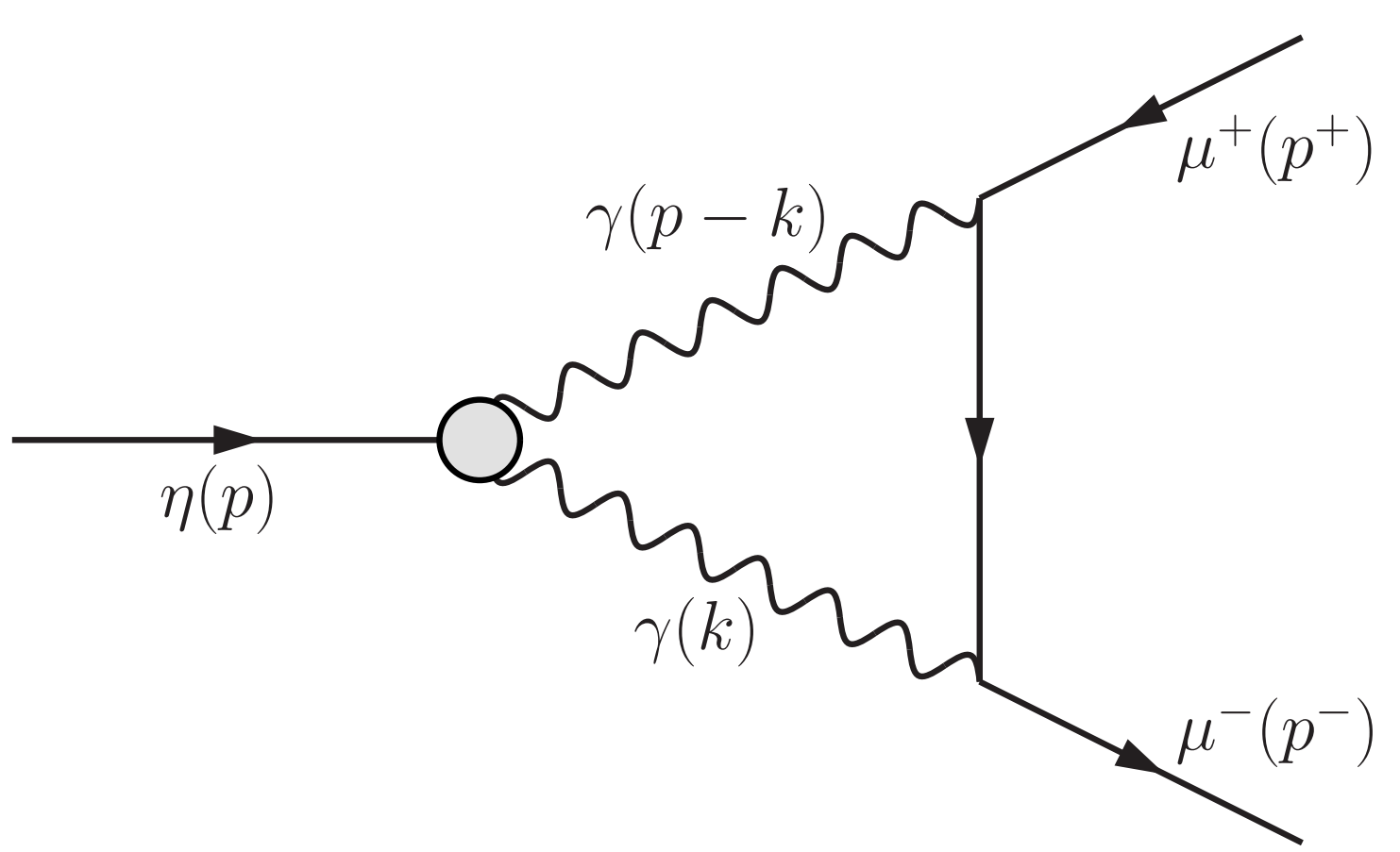}
	\includegraphics[width=0.48\textwidth]{./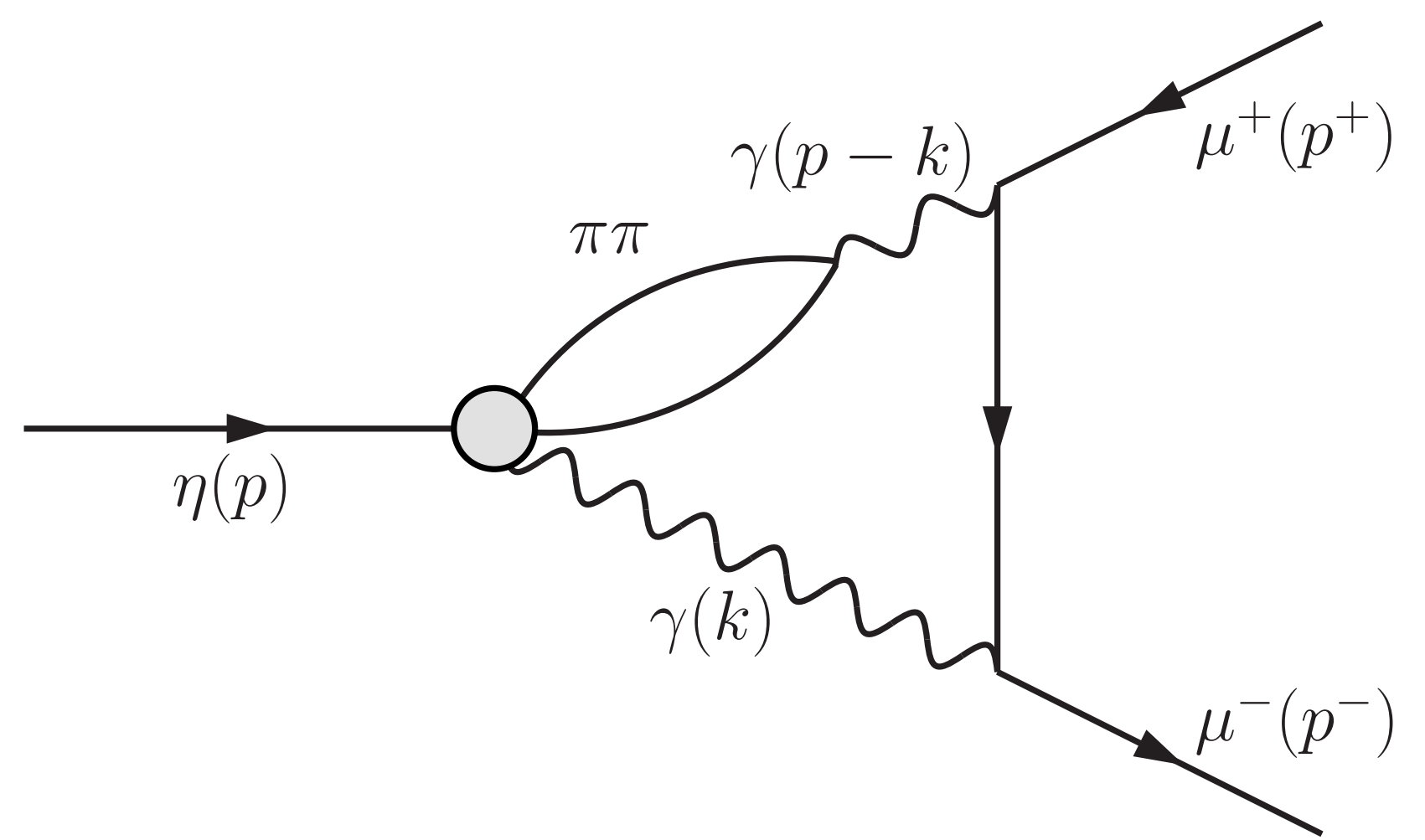}
	\caption{The process of $\eta\to\mu^+\mu^-$ decay and the $\pi\pi$ intermediate states.\label{fig:etall}}
\end{figure}
Fig.~\ref{fig:etall} shows the $\eta\to \mu^+\mu^-$ decay amplitude and the $\pi\pi$ intermediate state in this process. The total input momentum is $p=(m_\eta,\bm{0})$. The two photons have energy-momentum $k$ and $p-k$, respectively, and the final-state muons have energy-momentum $p^\pm=(\frac{m_\eta}{2},\bm{p}^\pm)$, satisfying the on-shell condition $|\bm{p}^{\pm}|=\sqrt{\frac{m_\eta^2}{4}-m_{\mu}^2}$. The decay width of $\eta\to\mu^+\mu^-$ is \cite{Silagadze:2006rt}
\begin{equation}
	\Gamma(\eta\to\mu^+\mu^-)=\frac{\alpha^4}{2\pi}m_\eta m_\mu^2\beta |\mathcal{A}|^2,
\end{equation}
where $\beta=\sqrt{1-\frac{4m_\mu^2}{m_\eta^2}}$. $\mathcal{A}$ is the loop integral of the hadronic matrix element and the electroweak weight function
\begin{equation}
	I^\infty=\mathcal{A}=\int \frac{d^4 k}{(2\pi)^4}L_{\mu\nu}^{\infty}(k)H^{\infty,\mu\nu}(k,p),
\end{equation}
where $L_{\mu\nu}^{\infty}(k)$ includes the photon and muon propagators and the projection of the spinors of the final-state muons on the $^1S_0$ channel \cite{Silagadze:2006rt}:
\begin{equation}\label{Letamumu}
	\begin{aligned}
		L_{\mu\nu}^\infty(k)=
		\frac{16\pi^2}{m_\eta^2}\frac{\varepsilon_{\mu\nu\alpha\beta}k^\alpha p^\beta}{(k^2+i\epsilon)((p-k)^2+i\epsilon)((p^--k)^2-m_\mu^2+i\epsilon)}.
	\end{aligned}
\end{equation}

The hadronic matrix element $H^{\infty,\mu\nu}(k,p)$ is
\begin{equation}\label{Hetamumu}
	H^{\infty,\mu\nu}(k,p)=\int d^4 x e^{ik\cdot x}\langle 0|T[J^\mu_{\text{em}}(x)J^\nu_{\text{em}}(0)]|\eta(p)\rangle=i\varepsilon^{\mu\nu\rho\sigma}p_\rho k_\sigma\mathcal{F}(k^2,(p-k)^2),
\end{equation}
where the currents are $J_{\text{em}}^\mu=(\frac 23 \bar{u}\gamma^\mu u-\frac 13 \bar{d}\gamma^\mu d)_M$, and the initial and final states are $|i\rangle=|\eta\rangle$ and $|f\rangle=|0\rangle$. $H^{\infty,\mu\nu}(k,p)$ can be parameterized by the form factor $\mathcal{F}(k^2,(p-k)^2)$, where $\mathcal{F}(0,0)$ is related to the $\eta\to\gamma\gamma$ decay width $\Gamma(\eta\to\gamma\gamma)=\frac{m_\eta^3\pi\alpha^2}{4}\mathcal{F}^2(0,0)$. Using $\mathcal{F}(k^2,(p-k)^2)$, we can write the amplitude $\mathcal{A}$ as a conventional form in the literature:
\begin{equation}
	\mathcal{A}=\frac{1}{i\pi^2 m_\eta^2}\int d^4 k \frac{2(p^2 k^2-(p\cdot k)^2)}{(k^2+i\epsilon)((p-k)^2+i\epsilon)((p^--k)^2-m_\mu^2+i\epsilon)}\mathcal{F}(k^2,(p-k)^2).
\end{equation}

The right side of Fig.~\ref{fig:etall} shows the $\pi\pi$ intermediate states in this decay process. Since $m_\eta<4m_\pi$ and $4\pi$ intermediate states cannot be on-shell, we ignore the finite-volume effects caused by them. The two-particle state contribution to $H^{\infty,\mu\nu}(k,p)$ for $t<0$ is given by Eq. (\ref{HM2pi}) with specific initial/final states and currents defined here.

In the EW$_\infty$ method, the calculation of the amplitude takes the form
\begin{equation}
	I^{(LT)}=c_{ME}^{\mu\nu}\int_V d^3 x\int_{-t_s}^{t_s} d\tau L_{E}^{\infty,\mu\nu}(\tau,\bm{x})H_{E}^{(L),\mu\nu}(\tau,\bm{x}),
\end{equation}
where the electroweak weight function is
\begin{equation}\label{LEetall}
	\begin{aligned}
		&L_{E}^{\infty,\mu\nu}(\tau,\bm{x})=\frac{16\pi^2}{m_\eta}\int \frac{d^3 k}{(2\pi)^3}\int_{C_{\mathrm{lat}}} \frac{dk_E^0}{2\pi}e^{-ik_E^0 \tau-i\bm{k}\cdot\bm{x}}
		\\ &\times
		\frac{\varepsilon^{\mu\nu\alpha 0}k_E^\alpha}{((k_E^0)^2+E_\gamma(\bm{k})^2-i\epsilon)((p_E^0-k_E^0)^2+E_\gamma(\bm{k})^2-i\epsilon)((\frac{p_E^{0}}{2}-k_E^0)^2+E_\mu(\bm{p}^--\bm{k})^2-i\epsilon)}.
	\end{aligned}
\end{equation}
Here, the photon and muon energies are $E_\gamma(\bm{k})=|\bm{k}|$ and $E_\mu(\bm{k})=\sqrt{\bm{k}^2+m_\mu^2}$. We use the deformed contour $C_{\mathrm{lat}}$ in Fig. \ref{fig:Wick} to avoid the destruction of the Wick rotation caused by singularities crossing the imaginary axis. The numerical implementation of $L_{E}^{\infty,\mu\nu}(\tau,\bm{x})$ is similar to that of $\pi\to e^+e^-$ in Ref.~\cite{Christ:2022rho}. We define $L_E^{\infty,ij}(\tau,\bm{x})=L_E^{\infty}(\tau,\bm{x})\epsilon^{ijk}x_k/|\bm{x}|^2$ with $i,j,k\neq 0$, then $L_E^{\infty}(\tau,\bm{x})$ is given by
\begin{equation}\label{LEcalc}
	\begin{aligned}
		&L_{E}^\infty(\tau,\bm{x})=i\frac{2}{m_\eta^3}\int d|\bm{k}| |\bm{k}||\bm{x}|j_1(|\bm{k}||\bm{x}|)e^{\frac{m_\eta}{2}\tau}\\
		&\times\left\{\frac{1}{\beta}\ln\frac{1+\beta}{1-\beta}\left[-\frac{e^{(\frac{m_\eta}{2}-|\bm{k}|)|\tau|}}{m_\eta-2|\bm{k}|+i\epsilon}+\frac{e^{-(\frac{m_\eta}{2}+|\bm{k}|)|\tau|}}{m_\eta+2|\bm{k}|}\right]+\int d\cos\theta \frac{e^{-E_\mu|\tau|}}{E_\mu[\beta\cos\theta-1][\beta\cos\theta+1]}\right\},
	\end{aligned}
\end{equation}
where we define $j_1(\rho)=(\sin\rho-\rho\cos\rho)/\rho^2$, and $E_\mu=\sqrt{m_\eta^2/4+|\bm{k}|^2-\beta m_\eta |\bm{k}|\cos\theta}$. We derive this form by integrating out $k_E^0$ using Cauthy's theorem and averaging over the direction of the outgoing muon $\bm{p}^-$.

The hadronic matrix element $H_E^{(L),\mu\nu}(\tau,\bm{x})$ is
\begin{equation}\label{HELtwo-model}
	H_E^{(L),\mu\nu}(\tau,\bm{x})=(\langle  0|T[J_{\text{em},E}^\mu(\tau,\bm{x})J_{\text{em},E}^\nu(0)]|\eta\rangle)^{(L)},
\end{equation}
where the Euclidean electromagnetic currents are $J_{\text{em},E}^\mu=(\frac 23\bar{u}\gamma^\mu u-\frac 13 \bar{d}\gamma^\mu d)_E$. The contribution of $\pi\pi$ intermediate states is given by Eq. (\ref{HEx2pi}) with $|i\rangle=|\eta\rangle$, $|f\rangle=|0\rangle$, and $J_{1/2,E}=J_{\text{em},E}$.

\subsection{\label{sec:two-1}First type of correction}
The first type of correction $\Delta I_1$ depends on the difference between the finite-volume matrix element $H^{(LT)}(k',p)$ and the infinite-volume matrix element $H^{\infty}(k',p)$ at discrete momentum $\bm{k}'\in\Gamma$. The formula of this difference is investigated in Ref.~\cite{Briceno:2019opb}. In this part, we follow their work and present a general correction formula for $\Delta I_1$ in the $t<0$ time-ordering. For convenience, we do not write out the Lorentz indices explicitly in this part.

We first introduce the energy-momentum variables of the two-particle system used here. The total energy-momentum of the $\pi\pi$ system is $p_{\pi\pi}=(E_{\pi\pi},-\bm{k}')=(m_\eta-k^0,-\bm{k}')$. The energy-momentum of each particle is defined as $q$ and $q'$. In the center-of-mass frame, the energy-momentum is $p_{\pi\pi}^*=(E_{\pi\pi}^*,\bm{0})=(E_{\pi\pi}/\gamma,\bm{0})$, where $\gamma=1/\sqrt{1-\left(\bm{k}'/E_{\pi\pi}\right)^2}$ is the boost factor. In this frame, one of the single particle momenta and its magnitude are $\bm{q}^*$ and $q^*=|\bm{q}^*|$, and its energy is $E_\pi(\bm{q}^*)=E^*_{\pi\pi}/2=\sqrt{q^{*2}+m_\pi^2}$. The invariant mass square of two particles is $s=(m_\eta-k^0)^2-\bm{k}'^2$.

The difference between $H^{(LT)}(k',p)$ and $H^{\infty}(k',p)$ is defined as
\begin{equation}
	\begin{aligned}
		\Delta H(k',p)&=H^{(LT)}(k',p)-H^{\infty}(k',p),\\
		H^{(LT)}(k',p)=&i\sum_{\Lambda,n}\frac{A_{\pi\pi}^{(L)}(-\bm{k}',E_{\Lambda,n})}{m_\eta-k^0-E_{\Lambda,n}}\left(1-e^{(m_\eta-k^0-E_{\Lambda,n})t_s}\right)+\cdots,\\
		H^{\infty}(k',p)=&i\sumint_{\alpha_{\pi\pi}}\frac{A_{\pi\pi}^{\infty}(-\bm{k}',E_{\alpha_{\pi\pi}})}{m_\eta-k^0-E_{\alpha_{\pi\pi}}+i\epsilon}+\cdots,
	\end{aligned}
\end{equation}
where $A_{\pi\pi}^{(L)}(-\bm{k}',E_{\Lambda,n})$ and $A_{\pi\pi}^{\infty}(-\bm{k}',E_{\alpha_{\pi\pi}})$ are defined in finite volume and infinite volume, respectively (see Eq. (\ref{HM2pi} and Eq. (\ref{HEx2pi})). Unlike the single-particle case, $\Delta H(k',p)$ has both the temporal truncation effect and the power-law finite-volume effect:
\begin{equation}
	\begin{aligned}
		\Delta H(k',p)&=\Delta H^T(k',p)+\Delta H^V(k',p),\\
		\Delta H^T(k',p)&=i\sum_{\Lambda,n} \frac{A_{\pi\pi}^{(L)}(-\bm{k}',E_{\Lambda,n})}{m_\eta-k^0-E_{\Lambda,n}} \left(-e^{(m_\eta-k^0-E_{\Lambda,n})t_s}\right),\\
		\Delta H^V(k',p)&=i\sum_{\Lambda,n}\frac{A_{\pi\pi}^{(L)}(-\bm{k}',E_{\Lambda,n})}{m_\eta-k^0-E_{\Lambda,n}}-i\sumint_{\alpha_{\pi\pi}}\frac{A_{\pi\pi}^{\infty}(-\bm{k}',E_{\alpha_{\pi\pi}})}{m_\eta-k^0-E_{\alpha_{\pi\pi}}+i\epsilon}.
	\end{aligned}
\end{equation}

The power-law finite-volume effect in $H^V(k',p)$ can be derived as~\cite{Briceno:2019opb,Christ:2015pwa}
\begin{equation}\label{DeltaHV}
	\Delta H^V=\Gamma^{\text{in}}\frac{iF^{i\epsilon}}{1+F^{i\epsilon}\mathcal{M}_2}\Gamma^{\text{out}}.
\end{equation}
Next, we explain each term and rewrite this formula using the partial wave expansion. $\Gamma^{\text{in}}$ and $\Gamma^{\text{out}}$ are the physical infinite-volume matrix elements, and their partial wave expansions are
\begin{equation}\label{Gammatilde}
	\begin{aligned}
		\Gamma^{\text{in}}_{lm}&= \langle 0|J_{\text{em}}|\pi\pi^{\text{in}}(-\bm{k}',q^*,l,m)\rangle,\\
		\Gamma^{\text{out}}_{lm}&=\langle \pi\pi^{\text{out}}(-\bm{k}',q^*,l,m)|J_{\text{em}}|\eta\rangle,
	\end{aligned}
\end{equation}
where we define the infinite-volume two-particle states with angular momentum quantum numbers $(l,m)$ in the center-of-mass frame as
\begin{equation}
	\begin{aligned}
		|\pi\pi^{\text{in}}(-\bm{k}',q^*,l,m)\rangle&=\frac{1}{\sqrt{4\pi}}\int d\Omega_{\hat{q}^*}Y_{lm}(\hat{q}^*)|\pi\pi^{\text{in}}(q,q')\rangle,\\
		\langle\pi\pi^{\text{out}}(-\bm{k}',q^*,l,m)|&=\frac{1}{\sqrt{4\pi}}\int d\Omega_{\hat{q}^*}Y^*_{lm}(\hat{q}^*) \langle\pi\pi^{\text{out}}(q,q')|.
	\end{aligned}
\end{equation}
Here, the normalization condition of $\pi\pi$ states is chosen to be
\begin{equation}
	\langle\pi\pi^{\text{in/out}}(q_1,q'_1)|\pi\pi^{\text{in/out}}(q_2,q'_2)\rangle=2E_{\pi}(\bm{q}_1)2E_{\pi}(\bm{q}'_{1})(2\pi)^3\delta^{3}\left(\bm{q}_1-\bm{q}_2\right)(2\pi)^3\delta^{3}\left(\bm{q}'_1-\bm{q}'_2\right).
\end{equation}

For a specific partial wave, the function \( F^{i\epsilon} \) is defined as~\cite{Hansen:2012tf}
\begin{equation}\label{F_modified}
	\begin{aligned} 
		F^{i\epsilon}_{l m; l' m'}(s) &= \frac{\operatorname{Re} q^*}{8 \pi \sqrt{s}} \eta\delta_{l l'} \delta_{m m'}-\frac{i}{2 \sqrt{s}}\eta \sum_{\tilde{l},\tilde{m}} \frac{\sqrt{4 \pi}}{q^{* l}} c_{l m}^P\left(q^{* 2}\right) \int d \Omega Y_{l m}^* Y_{\tilde{l}\tilde{m}}^* Y_{l', m'},
	\end{aligned}
\end{equation}
where \(\eta = 1/2\) is the symmetry factor for the \(\pi\pi\) system. The definition of the function \(c_{l m}^P(q^{*2})\) is given in Ref.~\cite{Kim:2005gf,Hansen:2012tf}. As discussed in Ref.~\cite{Hansen:2012tf}, this definition of \( F^{i\epsilon} \) remains valid even when \( s < (2m_\pi)^2 \) (or equivalently \( q^{*2} < 0 \)), such as the presence of shallow bound states. In this regime, \( q^{*2} - k^{*2}<0 \) always holds, so there is no pole in the denominator. 
As a result, for  \( s < (2m_\pi)^2 \), the first term in Eq.~\eqref{F_modified}, representing the difference between the principal value and the \( i\epsilon \) prescriptions, drops out, while the second term yields only exponentially suppressed finite-volume effects. 
If these exponentially suppressed effects are neglected, the expression for \( F^{i\epsilon} \) simplifies to
\begin{equation}
	\begin{aligned}
		F_{lm;l'm'}^{i\epsilon}(s)&=\frac{q^*}{16\pi\sqrt{s}}\theta(s-(2m_\pi)^2)\left(\cot\left(\phi_{l'm';lm}\left(q^*\right)\right)+i\right),\\
		\cot\left(\phi_{l'm';lm}\left(q^*\right)\right)&=\frac{4\pi}{q^*}\sum_{\tilde{l},\tilde{m}}\frac{\sqrt{4\pi}}{q^{*l}}c_{lm}^{P}(q^{*2})\int d\Omega_{\hat{q}^*}Y_{lm}^* Y_{\tilde{l}\tilde{m}}^*Y_{l'm'}.
	\end{aligned}
\end{equation}
Here, we define a function $\phi_{l'm';lm}\left(q^*\right)$ to simplify the final expressions of $\Delta H^V(k',p)$. The symbol \( i \) is a shorthand for \( i \times I \), where \( I \) is the identity operator in partial wave space, i.e., \( I_{\ell m; \ell' m'} = \delta_{\ell \ell'} \delta_{m m'} \).

$\mathcal{M}_2$ is the two-particle scattering amplitude, which can be parameterized by the scattering phase shift $\delta_l(s)$ as
\begin{equation}\label{M2}
	\mathcal{M}_{2,lm;l'm'}(s)=\delta_{ll'}\delta_{mm'}\frac{16\pi\sqrt{s}}{q^*}\frac{\exp[2i\delta_l(s)]-1}{2i}.
\end{equation}

Using the partial wave expansion, we can write $\Delta H^V$ in an explicit form~\cite{Christ:2015pwa}
\begin{equation}\label{dHV}
	\begin{aligned}
		\Delta H^V(k',p)=&\sum_{l,l',m,m'}i\theta(s-(2m_\pi)^2)\frac{q^*}{16\pi\sqrt{s}} \left(\cot(\phi_{lm;l'm'}(q^*)+\delta_l(s))+i\right)\\ &\times A^\infty_{\pi\pi,lm;l'm'}(-\bm{k}',E_{\pi\pi})\big|_{E_{\pi\pi}=m_\eta-k^0},
	\end{aligned}
\end{equation}
where the partial wave expansion of the infinite-volume matrix element is
\begin{equation}\label{Alm}
	A^{\infty}_{\pi\pi,lm;l'm'}(-\bm{k}',E_{\pi\pi})=\langle 0 | J_{em}|\pi\pi^{\mathrm{in/out}}(-\bm{k}',q^*,l,m)\rangle \langle \pi\pi^{\mathrm{in/out}}(-\bm{k}',q^*,l',m')|J_{em}|\eta\rangle.
\end{equation}

Finally, the correction formula for $\Delta I_1$ in the two-particle case is given by
\begin{equation}\label{dI1}
	\begin{aligned}
		\Delta I_1&=\frac{1}{L^3}\sum_{\bm{k}'\in\Gamma} \int \frac{d^3 k}{(2\pi)^3}\delta_L(\bm{k}'-\bm{k})\int_{C} \frac{d k^0}{2\pi} L^\infty(k)\Delta H(k',p),\\
		\Delta H(k',p)&=\Delta H^T(k',p)+\Delta H^V(k',p)\\
		&=i\sum_{\Lambda,n} \frac{A_{\pi\pi}^{(L)}(-\bm{k}',E_{\Lambda,n})}{m_\eta-k^0-E_{\Lambda,n}} \left(-e^{(m_\eta-k^0-E_{\Lambda,n})t_s}\right)\\&+\sum_{l,l',m,m'}i\theta(s-(2m_\pi)^2)\frac{q^*}{16\pi\sqrt{s}}\left(\cot(\phi_{lm;lm'}(q^*)+\delta_l(s))+i\right)\\ &\times A^{\infty}_{\pi\pi,lm;l'm'}(-\bm{k}',E_{\pi\pi})|_{E_{\pi\pi}=m_\eta-k^0}.
	\end{aligned}
\end{equation}

Ref.~\cite{Briceno:2019opb} demonstrates that both $\Delta H^T$ and $\Delta H^V$ have hadronic singularity at $m_\eta-k^0-E_{\Lambda,n}=0$, but these singularities cancel out with each other. Therefore, $\Delta H(k',p)$ has no singularity. In practical calculation of $\Delta I_1$, we need to first perform the integral $\int_{C} dk^0$, and extract the residues from the singularities in $L^\infty(k)$. The inputs of $\Delta I_1$ are the finite-volume matrix element $A_{\pi\pi}^{(L)}$, the infinite-volume matrix element $A^{\infty}_{\pi\pi, lm;lm'}$, and the phase shift $\delta_l(s)$. Ignoring exponentially suppressed finite-volume effects, $A_{\pi\pi}^{(L)}$ and $A^{\infty}_{\pi\pi, lm;lm'}$ are related by the Lellouch-Lüscher formula. These inputs can be obtained either by lattice QCD calculations or by experimental results. In calculation of $\Delta H^T$, we only need to consider the states with $E_{\Lambda,n}<m_\eta$ that cause exponential divergence with large $t_s$. In the calculation of $\Delta H^V$, the condition $\theta(s-(2m_\pi)^2)$ also involves only a few momentum modes. 

In $\eta \to \mu^+\mu^-$ decay, there could be a suppression of $\Delta I_1$ because $\pi\pi$ states are in the P wave. Hadronic matrix elements related to P-wave $\pi\pi$ states are strongly enhanced around the $\rho$ resonance region, where $s \sim m_\rho^2$. Since the light $\pi\pi$ states that cause the exponentially growing term and power-law finite-volume effects in $\Delta I_1$ must satisfy $E_{\pi\pi} < m_\eta$, the contribution from these light states could be only a small part of the total decay amplitude. Similar suppression behavior also appears in $K_L\to\mu^+\mu^-$ decays, and the contribution of such light $\pi\pi$ intemediate states is estimated to be at most a few percent~\cite{Chao:2024vvl}.

\subsection{\label{sec:two-2}Second type of correction}
According to Theorem \ref{thm:singularity}, we analyze the finite-volume effect $\Delta I_2$ by studying the smoothness of the summand $\hat{I}(\bm{k}')$, which is defined in
\begin{equation}\label{Ihat2pt}
	\begin{aligned}
		\tilde{I}^{(L)}&=\frac{1}{L^3}\sum_{\bm{k}'\in\Gamma} \hat{I}(\bm{k}'),\\
		\hat{I}(\bm{k}')&=\int \frac{d^3 k}{(2\pi)^3}\delta_L(\bm{k}'-\bm{k})\int_{C} \frac{d k^0}{2\pi} L^\infty(k)H^\infty(k',p).
	\end{aligned}
\end{equation}
In the two-particle case, $\Delta I_2$ behaves similarly as the single-particle case discussed earlier. Specifically, due to the integral form of $\hat{I}(\bm{k}')$, the singularities in the electroweak propagators do not directly induce singularities in the summand itself. However, since $2m_\pi<m_\eta$, $\tilde{I}^{(L)}$ still have power-law finite-volume correction from the cusp effect. 

This singularity-free property needs to be proved according to specific problems, and we do so for the example of $\eta\to\mu^+\mu^-$. Next, we derive a formula for the power-law finite-volume correction caused by the cusp effect. If $\pi\pi$ particles are in high partial waves, such as P wave $\pi\pi$ states in $\eta \to \mu^+ \mu^-$ decay, we show that the cusp effect is strongly suppressed and does not appear at the leading order of $O(1/L)$. A numerical test in $\eta \to \mu^+ \mu^-$ process shows that $\Delta I_2$ is dominated by the exponentially suppressed finite-volume effects, and the cusp effect can be safely neglected.

\subsubsection{Proof of no singularity}
Similar to the single-particle case, we need to investigate whether various possible on-shell intermediate states induce singularities in $\hat{I}(\bm{k}')$. As shown in Fig.~\ref{fig:etallcut}, we illustrate the most singular case where both $\pi\pi$ and one photon are on-shell.
\begin{figure}[htbp]
	\centering
	\includegraphics[width=0.6\textwidth]{./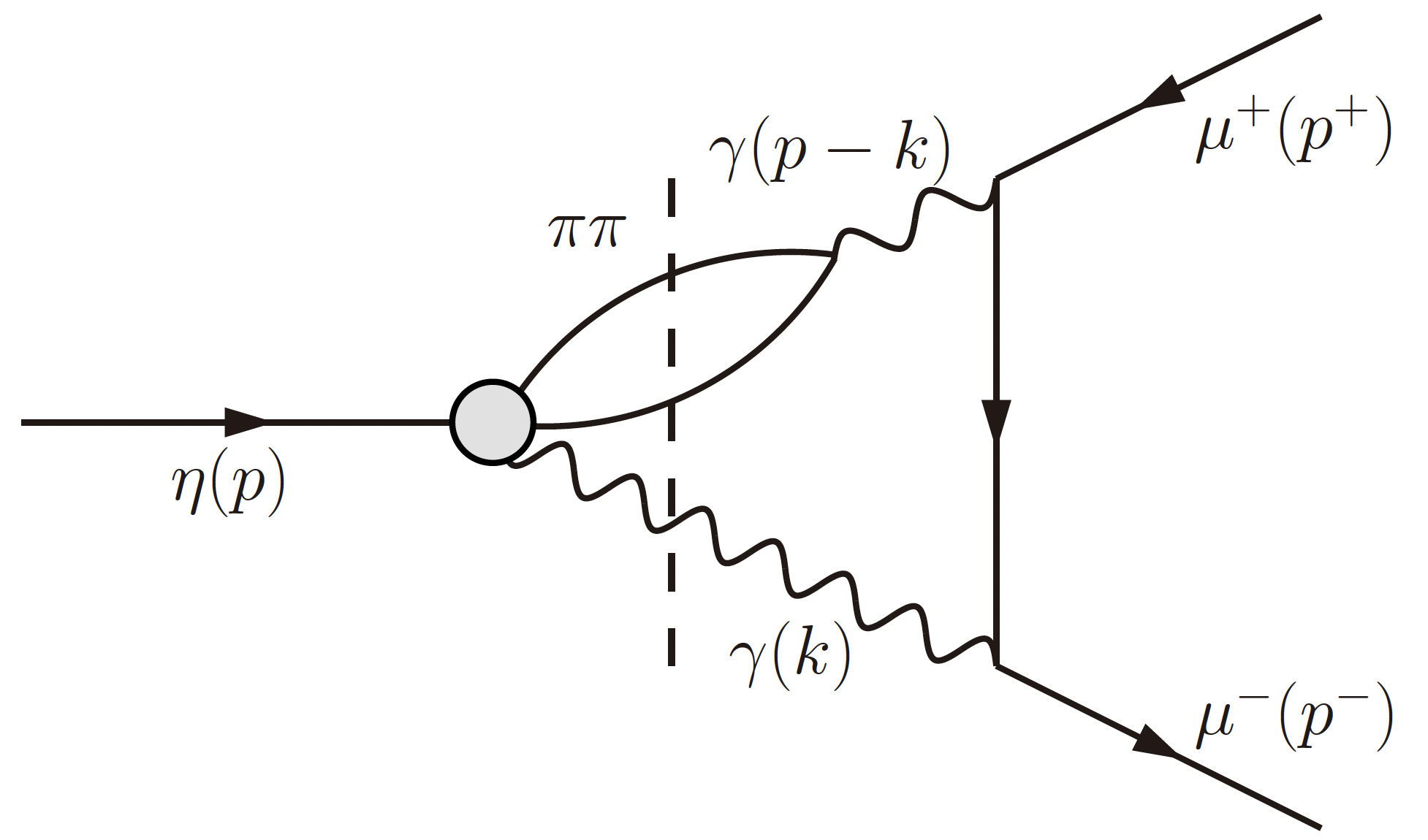}
	\caption{Analysis of the singularity structure of $\hat{I}(\bm{k}')$ by the contribution of on-shell intermediate states.\label{fig:etallcut}}
\end{figure}

This on-shell contribution has the integral structure as
\begin{equation}
	\begin{aligned}
		\hat{I}(\bm{k}')&\sim \int \frac{d^3 k}{(2\pi)^3}\delta_L(\bm{k}'-\bm{k})\bm{k}\cdot\bm{k}' \\ &\times\frac{1}{2E_\gamma(\bm{k})\left(\left(\frac{m_\eta}{2}-E_\gamma(\bm{k})\right)^2-E_\mu(\bm{p}^--\bm{k})^2+i\epsilon\right)\left(\left(m_\eta-E_\gamma(\bm{k})\right)^2-E_\gamma(\bm{k})^2+i\epsilon\right)}\\
		&\times\int\frac{d^3 q}{(2\pi)^3}\frac{1}{2E_\pi(\bm{q})2E_\pi(\bm{k}'+\bm{q})\left(m_\eta-E_\gamma(\bm{k})-E_{\pi}(\bm{q})-E_{\pi}(\bm{k}'+\bm{q})+i\epsilon\right)},
	\end{aligned}
\end{equation}
where $\bm{q}$ is the momentum of one of the pions. $E_\gamma(\bm{k})=|\bm{k}|$ and $E_\mu(\bm{p}^--\bm{k})=\sqrt{(\bm{p}^--\bm{k})^2+m_\mu^2}$ denote the energy of the photon and muon. The term $\bm{k}\cdot \bm{k}'$ comes from contracting $\epsilon_{\mu\nu\alpha\beta}k^\alpha p^\beta$ in the electroweak part and $\epsilon^{\mu\nu\rho\sigma}k'_\rho p_\sigma$ in the hadronic part (see Eq. (\ref{Letamumu}) and Eq. (\ref{Hetamumu})). Here, we only show the main part related to the singularity structure and omit other parts of the integrand.

The poles in each denominator occur when $|\bm{k}| = 0$ (the first photon), $|\bm{k}| = 0$ (the muon), $|\bm{k}| = \frac{m_\eta}{2}$ (the second photon), and $m_\eta = E_\gamma(\bm{k}) + E_{\pi}(\bm{q}) + E_{\pi}(\bm{k}' + \bm{q})$ (the $\pi\pi$ states). When $m_\eta < 4m_\pi$, the condition for a pinch singularity or endpoint singularity is that both $|\bm{k}| = 0$ and $m_\eta = E_{\pi}(\bm{q}) + E_{\pi}(\bm{k}' + \bm{q})$ are satisfied. Under this condition, the propagators of the first photon, muon, and $\pi\pi$ are singular simultaneously. The integral near this point is
\begin{equation}
	\int \frac{d^3 k}{(2\pi)^3}\delta_L(\bm{k}' - \bm{k})\frac{\bm{k} \cdot \bm{k}'}{2|\bm{k}| \cdot |\bm{k}| \cdot m_\eta}\int\frac{d^3 q}{(2\pi)^3}\frac{1}{2E_\pi(\bm{q}) 2E_\pi(\bm{k}' + \bm{q}) |\bm{k}|} \sim \int d^3 k \frac{1}{|\bm{k}|^2}.
\end{equation}
Here, we see that the integral structure protects $\hat{I}(\bm{k}')$ from infrared singularity. This singularity-free property of $\hat{I}(\bm{k}')$ needs to be proven for specific physical problems. For example, the decay $\eta' \to \mu^+ \mu^-$, where $m_{\eta'} > 4m_\pi$, involves more complex on-shell $4\pi$ intermediate states and no longer has this singularity-free property. Such case is beyond the scope of our paper.

\subsubsection{Cusp effect\label{sec:cusp-two}}
Similar to the single-particle case, $\hat{I}(\bm{k}')$ has the cusp effect near the on-shell threshold $s = (m_\eta - k^0)^2 - (\bm{k}')^2 = (2m_\pi)^2$, which leads to the power-law finite-volume effect. This effect, originating from crossing the two-particle on-shell threshold, is also encountered in the three-particle quantization condition, and is avoided by defining physical quantities in a modified $\mathrm{PV}$ prescription (denoted as the $\widetilde{\mathrm{PV}}$ prescription)~\cite{Hansen:2014eka}. In our work, this cusp effect cannot be avoided since the decay problem involves physical matrix element, instead of the modified one in $\widetilde{\mathrm{PV}}$ prescription. In this part, we first review the cusp effect with only one $\pi\pi$ loop~\cite{Hansen:2014eka}, i.e., ignoring the rescattering effect. Then the correction formula for the general case with the rescattering effect is given. This formula can be derived using the Bethe-Salpeter (BS) kernel expansion, which is detailed in Appendix~\ref{sec:Append cusp corr}.

The one $\pi\pi$ loop contribution to the hadronic function, denoted as $H^{\infty,(0)}(k',p)$, is given by
\begin{equation}\label{H0}
	H^{\infty,(0)}(k',p) = \frac{i}{2} \int \frac{d^3 q}{(2\pi)^3} \frac{h(s) \Gamma^{\text{in}}(\bm{q}, -\bm{k}' - \bm{q}) \Gamma^{\text{out}}(\bm{q}, -\bm{k}' - \bm{q})}{2E_{\pi}(\bm{q}) 2E_{\pi}(\bm{k}' + \bm{q}) \left(m_\eta - k^0 - E_{\pi}(\bm{q}) - E_{\pi}(\bm{k}' + \bm{q}) + i\epsilon\right)} + \cdots,
\end{equation}
where $\Gamma^{\text{in}}(\bm{q}, -\bm{k}' - \bm{q})$ and $\Gamma^{\text{out}}(\bm{q}, -\bm{k}' - \bm{q})$ are the physical infinite-volume matrix elements defined in Eq.~(\ref{Gammatilde}). For simplicity, here we omit the indices $lm$. The momenta of two pions are $\bm{q}$ and $-\bm{k}' - \bm{q}$. We extract the two-particle singularity in the denominator $\left(m_\eta - k^0 - E_{\pi}(\bm{q}) - E_{\pi}(\bm{k}' + \bm{q}) + i\epsilon\right)$ and omit other nonsingular contributions. $h(s)$ is an ultraviolet cutoff function that is infinitely differentiable, spherically symmetric, and satisfies
\begin{equation}\label{hs}
	h(s) = \begin{cases}
		0 &\quad s < 0;\\
		1 &\quad s > (2m_\pi)^2.
	\end{cases}
\end{equation}
In range $0<s<(2m_\pi)^2$, $h(s)$ smoothly interpolates between 0 and 1. One example of $h(s)$ is given in Ref.~\cite{Hansen:2014eka} (denoted as $H(\vec{k})=J(s/(2m_\pi)^2)$ there).

We can rewrite the above formula in the center-of-mass frame of $\pi\pi$ as
\begin{equation}
	H^{\infty,(0)}(k',p)=\frac i2\int\frac{q^* d(q^*)^2}{(2\pi)^3} \frac{h(s)}{(q_0^*)^2-(q^*)^2+i\epsilon}\left(\frac{m_\eta-k^0+E_{\pi\pi}}{8E_{\pi\pi}E_{\pi\pi}^*}\right)\int d\Omega_{\hat{q}^*}\Gamma^{\text{in}}\Gamma^{\text{out}},
\end{equation}
where $E_{\pi\pi}$ and $E_{\pi\pi}^*$ are the $\pi\pi$ energy in the rest frame of $\eta$ and in the center-of-mass frame of $\pi\pi$, respectively. $q^*$ is the momentum of one pion in the center-of-mass frame of $\pi\pi$. If two pions are on-shell, $q^*$ has the value $q_0^*=\sqrt{s/4-m_\pi^2}$. Letting $w=(q^*)^2$ and $z=(q_0^*)^2$, the integral structure in the above formula is
\begin{equation}
	\begin{aligned}
		f^{i\epsilon}(z)=\int_0^\infty\frac{\sqrt{w}g(w,z)}{z-w+i\epsilon}dw=\left\{\begin{aligned}
			&\mathcal{P} \int_0^\infty\frac{\sqrt{w}g(w,z)}{z-w}dw-i\pi\sqrt{z}g(z,z),&\quad z> 0,\\
			&\int_0^\infty\frac{\sqrt{w}g(w,z)}{z-w}dw,&\quad z<0,
		\end{aligned}\right.
	\end{aligned}
\end{equation}
where $g(w,z)$ is a nonsingular function and $z=0$ corresponds to the threshold $|\bm{k}'|=\sqrt{(m_\eta-k^0)^2-4m_\pi^2}$. This formula is very similar to Eq.~(\ref{fzsingle}) in the single-particle case, demonstrating that the real and imaginary parts of this integral are not smooth near the threshold $z=0$, i.e., the cusp effect. To avoid this non-smoothness, the $\widetilde{\mathrm{PV}}$ prescription is defined in Ref.~\cite{Hansen:2014eka} as
\begin{equation}\label{cuspf}
	\begin{aligned}
		f^{\widetilde{\mathrm{PV}}}(z)&=f^{i\epsilon}(z)-\left\{\begin{aligned}
			&-i\pi\sqrt{z}g(z,z),&\quad z> 0,\\
			&\pi\sqrt{-z}g(z,z),&\quad z<0.
		\end{aligned}\right.
	\end{aligned}
\end{equation}
For $z>0$, the subtracted term is the imaginary part of $f^{i\epsilon}(z)$, and the $\widetilde{\mathrm{PV}}$ prescription is equivalent to the principal value integral. For $z<0$, the definition is modified by analytically continuing the real part from $z>0$ to $z<0$, thereby restoring the smoothness of the real part.

The non-smooth part of the hadronic function can be determined by the difference between the $i\epsilon$ and $\widetilde{\mathrm{PV}}$ prescriptions as
\begin{equation}
\begin{aligned}
	H^{\text{cusp},(0)}(k',p)&=\widetilde{H}^{\infty,(0)}(k',p)-H^{\infty,(0)}(k',p),\\
	\widetilde{H}^{\infty,(0)}(k',p)&=\frac i2\widetilde{\mathrm{PV}}\int\frac{d^3 q}{(2\pi)^3}\frac{h(s)\Gamma^{\text{in}}(\bm{q},-\bm{k}'-\bm{q})\Gamma^{\text{out}}(\bm{q},-\bm{k}'-\bm{q})}{2E_{\pi}(\bm{q})2E_{\pi}(\bm{k}'+\bm{q})\left(m_\eta-k^0-E_{\pi}(\bm{q})-E_{\pi}(\bm{k}'+\bm{q})\right)}+\cdots.
\end{aligned}
\end{equation}

Using partial-wave expansion, we can simplify $H^{\text{cusp},(0)}(k',p)$ to a form similar to Eq.~(\ref{dHV}):
\begin{equation}\label{Hcusp_0}
	\begin{aligned}
		H^{\text{cusp},(0)}(k',p)=&\sum_{l,l',m,m'}i\frac{h(s)}{16\pi\sqrt{s}}A^\infty_{\pi\pi,lm;l'm'}(-\bm{k}',E_{\pi\pi})|_{E_{\pi\pi}=m_\eta-k^0}\\ &\times \begin {cases}-iq^*\quad &s> (2m_\pi)^2,\\
		|q^*|\quad &0< s \leq (2m_\pi)^2.\\
		\end {cases}
	\end {aligned}
\end {equation}
In the region where $s > (2m_\pi)^2$, this non-smooth part is just the absorptive part of the hadronic matrix elements, which also appears in $\Delta H^V(k',p)$ (the term $i$ in the bracket in Eq.~(\ref{dHV})). In the region where $0 < s \leq (2m_\pi)^2$, $q^* = i|q^*| = i\sqrt{m_\pi^2 - \frac{s}{4}}$ is extended to be an imaginary variable, and the matrix element $A^\infty_{\pi\pi,lm;l'm'}$ is extended to be a function of $q^* = i|q^*|$.

The general case with rescattering effect can be derived by the BS kernal expansion, which is detailed in Appendix~\ref{sec:Append cusp corr} as
\begin{equation}\label{Hcusp_all}
	\begin{aligned}
		H^{\text{cusp}}(k',p)=&\sum_{l,l',m,m'}i\frac{h(s)}{16\pi\sqrt{s}}A^\infty_{\pi\pi,lm;l'm'}(-\bm{k}',E_{\pi\pi})|_{E_{\pi\pi}=m_\eta-k^0}\\ &\times \begin {cases}q^*
		(\tan(\delta_l)-i)\quad &s> (2m_\pi)^2,\\
		|q^*|(\tanh(\sigma_l)+1)\quad &0< s \leq (2m_\pi)^2.\\
		\end {cases}
		\end {aligned}
\end {equation}
Comparing Eq.~(\ref{Hcusp_0}) and Eq.~(\ref{Hcusp_all}), we see that the rescattering effect introduces terms related to the scattering phase shift. In the region where $0< s\leq (2m_\pi)^2$, the phase shift $\delta_l$ is extended to an imaginary value, $\delta_l(s)=i\sigma_l(s)$. Since around the threshold $s\sim (2m_\pi)^2$, the phase shift is close to zero, the rescattering effect only introduces a small modification to $H^{\text{cusp},(0)}(k',p)$.

Next, we consider the cusp effect in $\Delta I_2$. Since $\hat{I}(\bm{k}')$ has no singularity, the power-law finite-volume effects can only arise from the cusp effect. This cusp effect can occur if $2m_\pi$ is lighter than $m_\eta - k_0$, which is the input energy to the hadronic part. The finite-volume correction $\Delta I_2$ is then given by
\begin{equation}\label{Icusp}
	\begin{aligned}
		\Delta I_2 &= \Delta I_{2,\text{cusp}} + \Delta I_{2,\text{exp}},\\
		\Delta I_{2,\text{cusp}} &= \frac{1}{L^3} \sum_{\bm{k}'\in\Gamma} \int \frac{d^3 k}{(2\pi)^3} \delta_L(\bm{k}' - \bm{k}) \int_C \frac{d k^0}{2\pi} \theta(m_\eta - k^0 - 2m_\pi) L^\infty(k) H^{\text{cusp}}(k',p) \\
		&\quad - \int \frac{d^3 k}{(2\pi)^3} \int_C \frac{d k^0}{2\pi} \theta(m_\eta - k^0 - 2m_\pi) L^\infty(k) H^{\text{cusp}}(k,p),
	\end{aligned}
\end{equation}
where $\Delta I_{2,\text{cusp}}$ is the power-law finite-volume effect, and $\Delta I_{2,\text{exp}}$ is the residual exponentially suppressed finite-volume effect.

As mentioned earlier, in the $s > (2m_\pi)^2$ region, the absorptive part of the hadronic function appears in both $\Delta I_1$ (the term $-iq^*$ in Eq.~(\ref{Hcusp_all})) and $\Delta I_2$ (the term $i$ in the bracket in Eq.~(\ref{dHV})). If we add $\Delta I_1$ and $\Delta I_2$ together, this absorptive part will cancel out in the finite-volume integration of $\frac{1}{L^3}\sum_{\bm{k}'\in\Gamma} \int \frac{d^3 k}{(2\pi)^3} \delta_L(\bm{k}' - \bm{k})$, leaving only the contribution in the infinite-volume integration of $\int \frac{d^3 k}{(2\pi)^3}$. Thus, this absorptive part will only appear as an infinite-volume correction.

\subsubsection{Suppression of cusp effect in P-wave}
The correction formula for $\Delta I_2$ in Eq.~(\ref{Icusp}) applies to general physical problems if the summand $\hat{I}(\bm{k}')$ has no singularity. In this section, we consider the explicit form of $A^{\infty,\mu\nu}_{\pi\pi,lm;l'm'}(-\bm{k}',E_{\pi\pi})$ in the $\eta \to \mu^+\mu^-$ decay and show that the cusp effect is strongly suppressed if the $\pi\pi$ particles are in the P-wave.

In $\eta\to\mu^+\mu^-$ decay, the relevant infinite-volume matrix elements are
\begin{equation}\label{FB}
	\begin{aligned}
		\langle 0|J^\nu_{em}|\pi\pi(q,q')\rangle&=i(q-q')^\nu F^{(\pi)}(s)\\
		\langle \pi\pi(q,q')|J^\mu_{em}|\eta(p)\rangle&=i\varepsilon^{\mu\rho\sigma\lambda}p_\rho q_\sigma q'_\lambda B_\eta(s,s_\gamma)
	\end{aligned}
\end{equation}
where $F^{(\pi)}(s)$ is the timelike form factor of the pion, and $B_\eta(s,s_\gamma)$ is the form factor describing the $\eta-\pi\pi\gamma$ coupling~\cite{Venugopal:1998fq}. The invariant mass square of $\pi\pi$ and $\gamma^{(*)}$ are defined as $s=(q+q')^2$ and $s_\gamma=(q-p-p')^2$. In this decay, $\pi\pi$ has the same quantum number as the virtual photon, so they are in the P-wave. Substituting Eq.~(\ref{FB}) into Eq.~(\ref{Alm}) gives
\begin{equation}
	\begin{aligned}
	\sum_{l,l',m,m'}A^{\infty,\mu\nu}_{\pi\pi,lm;l'm'}(-\bm{k}',E_{\pi\pi})=&\sum_{m,m'}A^{\infty,\mu\nu}_{\pi\pi,1m;1m'}(-\bm{k}',E_{\pi\pi})\\
	=&\frac{2 }{3}q^{*2}F^{(\pi)}(s)B_\eta(s,s_\gamma)\varepsilon^{\mu\nu\rho\sigma}p_\rho k'_\sigma
	\end{aligned}
\end{equation}
where $k'=(E_{\pi\pi},-\bm{k}')$ is the energy-momentum of $\pi\pi$. Using this explicit form, $H^{\text{cusp},\mu\nu}(k',p)$ can be simplified as
\begin{equation}\label{Hcusp_eg}
	\begin{aligned}
		H^{\text{cusp},\mu\nu}(k',p)=&i\frac{\varepsilon^{\mu\nu\rho\sigma}p_\rho k'_\sigma}{24\pi\sqrt{s}}F^{(\pi)}(s)B_\eta(s,s_\gamma)h(s)|_{E_{\pi\pi}=m_\eta-k^0}\\ &\times\begin{cases}(q^*)^3
			\left(\tan(\delta_l)-i\right)\quad &s> (2m_\pi)^2\\
			-|q^*|^3\left(\tanh(\sigma_l)+1\right)\quad &0< s\leq (2m_\pi)^2\\
		\end{cases}
	\end{aligned}
\end{equation}
Then $\Delta I_{2,\text{cusp}}$ can be given by Eq.~(\ref{Icusp}). Here, the cusp effect is caused by $(q^*)^3=[(m_\eta-k_0)^2-4m_\pi^2-(\bm{k}')^2]^{3/2}/2$ instead of $q^*=\sqrt{(m_\eta-k_0)-4m_\pi^2-(\bm{k}')^2}/2$, so $H^{\text{cusp},\mu\nu}(k',p)$ is not only a continuous function but also has a continuous first derivative with respect to $\bm{k}'$, and the discontinuity only appears in the second derivative. Therefore, in the $\eta\to\mu^+\mu^-$ decay, the cusp effect is suppressed as $O(1/L^2)$.

To explain the physical origin of this suppression, we notice that the matrix elements involved here are very similar to the decay amplitude of a $1 \to 2$ decay. It is well known that the decay amplitude of a two-body decay with angular momentum $l$ is proportional to $(q^*)^l$, leading to the suppression of the decay amplitude in the high partial waves. Here, the angular momentum projection of $\langle 0 | J_{\text{em}}^{\nu} | \pi\pi(q,q') \rangle$ or $\langle \pi\pi(q,q') | J_{\text{em}}^\mu | \eta \rangle$ is also proportional to $(q^*)^l$ (as demonstrated in Appendix \ref{sec:Append cusp}), so the cusp effect of partial wave $l$ is given by $|q^*|^{2l+1}$, leading to an $O(1/L^{l+1})$ power-law finite-volume effect. Thus, the cusp effect is strongly suppressed in higher partial waves.

In the P-wave case, the $\rho$ resonance enhancement is another reason for the suppression of the cusp effect, which also appears in $\Delta I_1$ as discussed earlier. Compared to the resonance region $s \sim m_\rho^2$, the non-smooth region $s \sim (2m_\pi)^2$ only contributes a relatively small part to the total decay width. On the other hand, the exponentially suppressed part of $\Delta I_2$ is dominated by this $\rho$ resonance region and, therefore, is not suppressed.

Finally, we numerically verify the P-wave suppression of $\Delta I_{2,\text{cusp}}$ in the $\eta \to \mu^+\mu^-$ decay. For convenience of numerical implementation, we only consider the one $\pi\pi$ loop case and ignore the rescattering effect. The hadronic function $H^{(0),\mu\nu}(k',p)$ is given in Eq.~(\ref{H0}), and the non-smooth part $H^{\text{cusp},(0),\mu\nu}(k',p)$ is given by
\begin{equation}\label{Hcusp_eg_LO}
	\begin{aligned}
		H^{\text{cusp},(0),\mu\nu}(k',p) =& i \frac{\varepsilon^{\mu\nu\rho\sigma} p_\rho k'_\sigma}{24\pi \sqrt{s}} F^{(\pi)}(s) B_\eta(s,s_\gamma) h(s) \bigg|_{E_{\pi\pi}=m_\eta - k^0} \\
		&\times \begin{cases}
			-i(q^*)^3 & s > (2m_\pi)^2, \\
			-|q^*|^3 & 0 < s \leq (2m_\pi)^2,
		\end{cases}
	\end{aligned}
\end{equation}

In this simplified case, we can calculate the finite-volume correction $\Delta I_2 = \Delta I_{2,\text{cusp}} + \Delta I_{2,\text{exp}}$ numerically. We present the details of the numerical implementation in Appendix~\ref{sec:Append eta_test}. Fig.~\ref{fig:res2} shows the volume suppression behaviors of $\Delta I_{2,\text{cusp}}$ and $\Delta I_{2,\text{exp}}$. At typical lattice volumes, the exponentially suppressed part $\Delta I_{2,\text{exp}}$ (red line) is the dominant finite-volume effect in $\Delta I_2$, causing a relative error of a few percent. To reduce this error to below one percent, a volume larger than approximately 5 fm (for the real part) and approximately 4.5 fm (for the imaginary part) is needed, which current lattice calculations could achieve. The power-law suppressed part $\Delta I_{2,\text{cusp}}$ (blue line) is strongly suppressed since $\pi\pi$ are in the P-wave and can be safely ignored.

\begin{figure}[htbp]
	\centering
	\includegraphics[width=0.75\textwidth]{./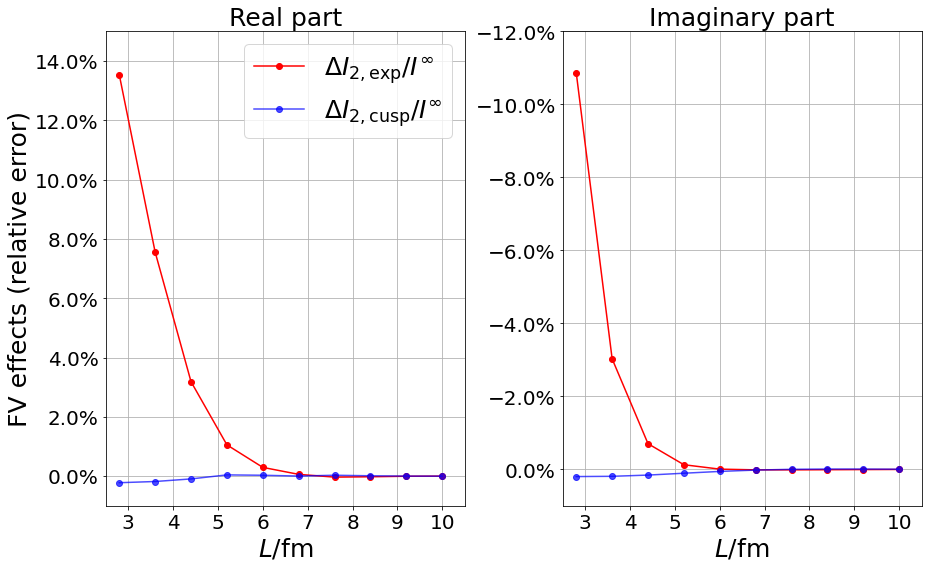}
	\caption{ In the model that neglects the rescattering effect, we numerically calculate the volume suppression behavior of the power-law suppressed part $\Delta I_{2,\text{cusp}}$ and the exponentially suppressed part $\Delta I_{2,\text{exp}}$. In the plot, we show the relative errors. $I^\infty$ is also calculated by this model. The left and right plots show the finite-volume effects of the real and imaginary parts of the decay amplitude, respectively.\label{fig:res2}}
\end{figure}

In other physical processes where a virtual photon generates the $\pi\pi$ intermediate state, such as $K_L \to \mu^+\mu^-$ and $K^+ \to \ell^+\nu_\ell \ell'^+\ell'^-$, the cusp effect exhibits a similar P-wave suppression behavior. In these processes, $\Delta I_2$ is dominated by the exponentially suppressed finite-volume effect, which can be well controlled by increasing the volume in current lattice calculations. For processes where the two particles are in the S-wave, the cusp effect does not have this suppression, and we need to correct it using the method provided here.

\section{Conclusions}
In this paper, we investigate how to apply the EW$_\infty$ method to processes with loop integrals that involve complex electroweak weight functions. We develop a general approach to correct systematic errors such as finite-volume effects, and apply it to cases with single-particle and two-particle intermediate states. We identify two sources of systematic errors:
\begin{itemize}
	\item $\Delta I_1$ involves finite-volume effects and temporal truncation effects of hadronic matrix elements under discrete momentum. If the intermediate states are lighter than the initial state, this temporal trunction effects could appear as unphysical terms with exponentially divergent time dependence. These effects have been investigated in past research works. We summarize the correction formulas for cases with single-particle and two-particle intermediate states. 
	\item $\Delta I_2$ is a new finite-volume effect introduced by the EW$_\infty$ method. Its volume suppression behavior depends on the smoothness of the discrete momentum summand. The advantage of EW$_\infty$ is that the infrared singularity in the electroweak part does not cause singularity of the summand itself, thus greatly suppressing finite-volume effects. The power-law finite-volume effect in $\Delta I_2$ is caused by the cusp effect if the intermediate states are lighter than the initial state. In the single-particle case, we can introduce the IVR method to avoid the cusp effect, ensuring that finite-volume effects are always exponentially suppressed. In the two-particle case, we provide a correction formula for the cusp effect and explain that it is strongly suppressed and can be ignored if two particles are in the P wave.
\end{itemize}

The theoretical method established in this paper has a wide range of application prospects, such as the QED correction of various physical processes, the two-photon exchange contributions of mesons and nucleons, the decay width of radiative decays of mesons, etc. For example, our method provides the correction formula for two-particle finite-volume effects for the two-photon exchange contribution in $K_L \to \mu^+\mu^-$ decay~\cite{Chao:2024vvl}. Since $\eta \to \mu^+\mu^-$ and $K_L \to \mu^+\mu^-$ have similar initial mass and the same two-particle intermediate states, we expect that the cusp effect in $K_L \to \mu^+\mu^-$ decay can also be ignored. Thus, the finite-volume effects of $K_L \to \mu^+\mu^-$ are dominated by power-law suppressed effects in $\Delta I_1$ and the exponentially suppressed effects in $\Delta I_2$. Through the future lattice calculation, the size of this two-photon exchange contribution can be accurately determined, greatly reducing the theoretical uncertainty in the $K_L \to \mu^+\mu^-$ decay.

\acknowledgements
X.Y.T and X.F. would like to thank Peter Boyle, Norman Christ, Luchang Jin, Taku Izubuchi, and our RBC and UKQCD Collaboration colleagues for helpful discussions and support. X.F. has been supported in part by NSFC of China
under Grant No. 12125501, No. 12070131001, and No. 12141501, and National Key Research and Development Program of China under No. 2020YFA0406400. X.Y.T has been supported by US DOE Contract DESC0012704(BNL).

\appendix
\section{\label{sec:App-EM}The definition of quantities in Euclidean space}
Here, we introduce the definition of quantities in Euclidean space used in the main text. First, we define the coordinates $x_E=(\tau,\bm{x}_E)$ and energy-momentum $k_E=(k_E^0,\bm{k}_E)$ of Euclidean space, which relate to the physical quantities $x=(t,\bm{x})$ and $k=(k^0,\bm{k})$ of Minkowski space by Wick rotation
\begin{equation}
	\begin{aligned}
		t\to -i\tau &,\quad k^0\to -i k_E^0. 
	\end{aligned}
\end{equation}
The definition of the spatial components $\bm{x}$ and $\bm{k}$ is consistent in Euclidean space and Minkowski space, so we do not distinguish them in the following text. The momentum of Euclidean space and Minkowski space satisfy the relations
\begin{equation}
	\begin{aligned}
		k^2&=-k_E^2=(k^0)^2-\bm{k}^2=-\left((k_E^0)^2+\bm{k}^2\right),\\
		x^2&=-x_E^2=t^2-\bm{x}^2=-\left(\tau^2+\bm{x}^2\right),\\
		k\cdot x&=-k_E\cdot x_E=k^0 t-\bm{k}\cdot\bm{x}=-\left(k_E^0 \tau+\bm{k}\cdot\bm{x}\right). \\
	\end{aligned}
\end{equation}

Next, we define the operators in Euclidean space and Minkowski space. The coordinate dependence of the Minkowski space operator $J(x)$ and the Euclidean space operator $J_E(x_E)$ is
\begin{equation}
	\begin{aligned}
		J(x)&=e^{i\hat{H}t-i\hat{k} \cdot \bm{x}}J(0)e^{-i\hat{H}t+i\hat{k} \cdot \bm{x}},\\
		J_E(x_E)&=e^{\hat{H}\tau-i\hat{k} \cdot \bm{x}}J_E(0)e^{-\hat{H}\tau+i\hat{k} \cdot \bm{x}},\\
	\end{aligned}
\end{equation}
where $\hat{H}$ and $\hat{k}$ are the Hamiltonian operator and momentum operator of the QCD system, respectively. $J(0)$ and $J_E(0)$ are the current operators in Minkowski space and Euclidean space, and their difference is due to the different conventions of the $\gamma$ matrix definition. In this paper, we choose $\gamma_E^4=\gamma^0$ and $\vec{\gamma}_E=-i\vec{\gamma}$. For the current operators, such as the electromagnetic current $J_{\text{em}}^\mu=\frac 23 \bar{u}\gamma^\mu u-\frac 13 \bar{d}\gamma^\mu d$, the difference between the definitions in Euclidean space and Minkowski space is
\begin{equation}
	J_E^4(0)=J^0(0),\quad J^j_E(0)=-i J^j(0)\quad(j=1,2,3).
\end{equation}

To compensate for this difference, we define
\begin{equation}
	J_{2}(0)\otimes J_{1}(0)=c_{ME}J_{2,E}(0)\otimes J_{1,E}(0)
\end{equation}
where we omit the Lorentz indices and use the symbol $\otimes$ to indicate the structure of multiplying two current operators. For example, when both currents are electromagnetic currents, they satisfy the relation $J_{\text{em}}^\mu\otimes J_{\text{em},\mu}=J_{\text{em},E}^\mu\otimes J_{\text{em},E}^\mu$, so in this case $c_{ME}=1$. $c_{ME}$ differs when the currents take different Lorentz indices, but it is always a constant coefficient.

\section{\label{sec:Append sum}The relation between the smoothness of the summand and the finite-volume effects}
In this section, we prove that the volume suppression behavior of $\Delta I_2$ is determined by the smoothness of the summand. To this end, we first review the Poisson summation formula and then generalize its conclusion.

\subsection{The Poisson summation formula}
We first introduce the Poisson summation formula and the relation between the finite-volume effects and the smoothness of the summand:
\begin{thm}\label{thm:A1}
	For a function $\tilde{f}(\bm{k})$, the difference between discrete summation and continuous integral is given by the Poisson summation formula
	\begin{equation}
		\frac{1}{L^3}\sum_{\bm{k}}\tilde{f}(\bm{k})=\int \frac{d^3 k}{(2\pi)^3} \tilde{f}(\bm{k})+\sum_{\bm{l}\neq\bm{0}}\int \frac{d^3 k}{(2\pi)^3} e^{iL\bm{l}\cdot\bm{k}}\tilde{f}(\bm{k}),
	\end{equation}
	where the range of discrete momentum sum is $\bm{k}=(2\pi/L)\bm{n}$, $\bm{n}=(n_x,n_y,n_z)$ are three-dimensional integers. The difference between summation and integral (which we call the finite-volume effect) suppresses with increasing volume $L$ and this suppression behavior is related to the smoothness of $\tilde{f}(\bm{k})$ as a function of $\bm{k}$:
	\begin{enumerate}
		\item If $\tilde{f}(\bm{k})$ has singularities, then the finite-volume effect suppresses as $O(1/L)$;
		\item If $\tilde{f}(\bm{k})$ is continuously differentiable up to order $N$, then the finite-volume effect suppresses as $O(1/L^{N+1})$;
		\item If $\tilde{f}(\bm{k})$ is infinitely differentiable, then the finite-volume effect suppresses as $O(e^{-\Lambda L})$, where $\Lambda$ is the hadronic mass scale in the problem.
	\end{enumerate}
\end{thm}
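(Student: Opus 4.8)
The plan is to split the statement into its two ingredients: the exact identity (the Poisson summation formula) and the quantitative part (the dictionary between the regularity of $\tilde f$ and the rate at which $\frac{1}{L^3}\sum_{\bm k}\tilde f(\bm k)$ approaches $\int\frac{d^3k}{(2\pi)^3}\tilde f(\bm k)$). In the application the function at hand is $\hat I(\bm k')$ of Eq.~(\ref{Ihat2pt}), i.e. a function of the spatial momentum only (the $k^0$ integral has already been done), so it suffices to treat a generic $\tilde f(\bm k)$ on $\mathbb R^3$.

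First I would establish the identity. Writing $\bm k=(2\pi/L)\bm n$ with $\bm n\in\mathbb Z^3$, I pass $\tilde f$ to its Fourier representation and pair the discrete sum against the $L$-periodic Dirac comb $\sum_{\bm n}\delta^3(\bm y-L\bm n)$; expanding that comb in its Fourier series $L^{-3}\sum_{\bm l}e^{i(2\pi/L)\bm l\cdot\bm y}$ and performing the $\bm y$ integral reproduces the claimed formula, the $\bm l=\bm 0$ term giving $\int\frac{d^3k}{(2\pi)^3}\tilde f(\bm k)$ and the $\bm l\neq\bm 0$ terms the finite-volume remainder (equivalently one may just invoke the classical Poisson summation formula and change variables). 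The remainder is then
\begin{equation}
	\Delta(L)=\sum_{\bm l\neq\bm 0}\int\frac{d^3k}{(2\pi)^3}e^{iL\bm l\cdot\bm k}\tilde f(\bm k)=\sum_{\bm l\neq\bm 0}(\mathcal F\tilde f)(L\bm l),
\end{equation}
i.e. a sum of values of the Fourier transform of $\tilde f$ sampled on the dual lattice $\{L\bm l\}$, whose spacing grows linearly with $L$; the three regimes then follow from how fast $\mathcal F\tilde f$ decays.

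The decay estimates rest on the standard correspondence between smoothness and Fourier decay (see e.g. Ref.~\cite{Reed:1975uy}). If $\tilde f$ is continuously differentiable up to order $N$ with integrable derivatives, repeated integration by parts along $\hat{\bm l}$ bounds $(\mathcal F\tilde f)(\bm y)$ by $C(1+|\bm y|)^{-N}$; and when the non-smoothness is of the half-integer-power or logarithmic branch-point type that actually arises from the phase-space square roots and $i\epsilon$ prescriptions here (so that the $N$-th derivative is still continuous while the $(N{+}1)$-th is singular, and singular only on a codimension-one surface), one further Stieltjes integration by parts together with stationary phase along that surface upgrades the bound to $C(1+|\bm y|)^{-(N+1)}$, whence summing over $\bm l\neq\bm 0$ gives $\Delta(L)=O(L^{-(N+1)})$. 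If $\tilde f$ is merely bounded or integrable but genuinely singular, its transform decays no faster than $|\bm y|^{-1}$ generically, yielding $\Delta(L)=O(1/L)$. If $\tilde f$ is $C^\infty$ it is, for the Euclidean correlators/matrix elements in question, analytic in a complex strip $|\mathrm{Im}\,\bm k|<\Lambda$ with $\Lambda$ fixed by the lowest hadronic threshold in the spectral representation; deforming the integration contour into that strip produces $|(\mathcal F\tilde f)(\bm y)|\le Ce^{-\Lambda|\bm y|}$, hence $\Delta(L)=O(e^{-\Lambda L})$.

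The Poisson step is routine; the real work is in the decay estimates. The main obstacle is pinning down the precise power $L^{-(N+1)}$ rather than the naive $L^{-N}$ read off from $N$ integrations by parts: this needs both the exact local form of the singular part of $\tilde f$ and the geometry of its singular locus in momentum space (a sphere $|\bm k'|=\text{const}$ in the applications of Sects.~\ref{sec:one-2} and \ref{sec:cusp-two}), and then a uniform control of the lattice sum $\sum_{\bm l\neq\bm 0}(\mathcal F\tilde f)(L\bm l)$. Establishing the width-$\Lambda$ analyticity strip from the spectral decomposition — the input that identifies the exponential rate with the hadronic mass scale — is the most physics-dependent ingredient, and I would state it as a hypothesis on $\tilde f$ rather than re-derive it here.
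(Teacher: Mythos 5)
Your proposal is correct and follows essentially the same route as the paper: Poisson summation reduces the finite-volume remainder to samples of the Fourier transform of $\tilde f$ on the dual lattice $L\bm{l}$, and the three decay regimes then follow from the standard smoothness--decay correspondence for Fourier transforms, which the paper simply quotes as Theorem~\ref{thm:decay} with citations to Refs.~\cite{Reed:1975uy,Trefethen} rather than re-deriving. The only difference is one of detail: the extra power $L^{-(N+1)}$ that you flag as the main obstacle is obtained in the paper not by stationary phase on the singular surface but by the hypothesis that the $(N+1)$-th derivative is integrable, which licenses one further integration by parts --- exactly your ``Stieltjes'' step.
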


We briefly explain the reason for the above relation. Define the reverse Fourier transform of $\tilde{f}(\bm{k})$ as $f(\bm{x})$. Then the behavior of the finite-volume effect is determined by the suppression behavior of $f(\bm{x})$ at large $|\bm{x}|$ since
\begin{equation}
	\sum_{\bm{l}\neq\bm{0}}\int \frac{d^3 k}{(2\pi)^3} e^{iL\bm{l}\cdot\bm{k}}\tilde{f}(\bm{k})=\sum_{\bm{l}\neq\bm{0}}\int d^3x f(L\bm{l}-\bm{x}).
\end{equation}

The suppresion behavior of $f(\bm{x})$ at large $|\bm{x}|$ is determined by the smoothness of $\tilde{f}(\bm{k})$, which is given by the following theorem~\cite{Reed:1975uy,Trefethen}
\begin{thm}\label{thm:decay} For an integrable, continuous function $\tilde{f}(\bm{k})$,
	\begin{enumerate}
		\item If $\tilde{f}(\bm{k})$ has a continuous integrable derivative of order $i$, where $0\leq i\leq N$, and the derivative of order $N+1$ is integrable, then
		\begin{equation}
			f(\bm{x})=O(|\bm{x}|^{-N-1}), \quad \text{as}\quad |\bm{x}|\to\infty.
		\end{equation}
		
		\item If $\tilde{f}(\bm{k})$ can be extended to a function $\tilde{f}(\bm{\zeta})$ of complex variable $\bm{\zeta}=\bm{k}+i\bm{\eta}$, and $\tilde{f}(\bm{\zeta})$ is an analytic function in the region $| \bm{\eta}|<a~(a>0)$ on the complex plane, and for any given $|\bm{\eta}|<a$, $\tilde{f}(\bm{\zeta})$ is integrable, then
		\begin{equation}
			f(\bm{x})=O(e^{-a|\bm{x}|}), \quad \text{as}\quad |\bm{x}|\to\infty.
		\end{equation}
	\end{enumerate}
\end{thm}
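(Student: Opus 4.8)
\section*{Proof proposal for Theorem~\ref{thm:decay}}

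The plan is to prove the two parts by the two standard devices that convert regularity of $\tilde{f}$ into decay of its inverse Fourier transform $f$: repeated integration by parts for part~1, and a contour shift (a Paley--Wiener argument) for part~2. Write $f(\bm{x})=\int\frac{d^3k}{(2\pi)^3}e^{i\bm{k}\cdot\bm{x}}\tilde{f}(\bm{k})$ (the sign convention is immaterial). To turn componentwise estimates into estimates in $|\bm{x}|$, for part~1 I would work in the variable $k_j$ for which $|x_j|$ is the largest component, using $|x_j|\geq|\bm{x}|/\sqrt{3}$, and for part~2 I would shift all three contours simultaneously.

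For part~1, integrating by parts $N+1$ times in $k_j$ transfers the derivatives from the phase onto $\tilde{f}$, each step producing a factor $1/(ix_j)$, so that
\[
f(\bm{x})=\frac{1}{(ix_j)^{N+1}}\int\frac{d^3k}{(2\pi)^3}e^{i\bm{k}\cdot\bm{x}}\,\partial_{k_j}^{N+1}\tilde{f}(\bm{k}).
\]
The boundary terms at $k_j\to\pm\infty$ vanish because each of $\tilde{f},\partial_{k_j}\tilde{f},\dots,\partial_{k_j}^{N}\tilde{f}$ is (after modification on a null set) continuous, integrable, and has an integrable derivative, hence tends to zero at infinity along almost every line. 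Since $\partial_{k_j}^{N+1}\tilde{f}\in L^1$, the remaining integral is bounded uniformly in $\bm{x}$, giving $|f(\bm{x})|\leq C|x_j|^{-N-1}=O(|\bm{x}|^{-N-1})$.

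For part~2, fix $a'<a$ and deform $\bm{k}\to\bm{k}+i\bm{\eta}$ with $\bm{\eta}=a'\,\bm{x}/|\bm{x}|$, so that $|\bm{\eta}|=a'<a$ places the shifted contour inside the strip of analyticity. By Cauchy's theorem the value of $f(\bm{x})$ is unchanged, provided the rectangular ``caps'' joining the real and shifted contours at $\mathrm{Re}\,k_j\to\pm\infty$ do not contribute; this follows from the hypothesis that $\tilde{f}$ is integrable on each slice $|\bm{\eta}|<a$ (one selects a sequence of radii along which the cap integrals vanish). On the shifted contour $e^{i(\bm{k}+i\bm{\eta})\cdot\bm{x}}=e^{i\bm{k}\cdot\bm{x}}e^{-a'|\bm{x}|}$, so
\[
|f(\bm{x})|\leq e^{-a'|\bm{x}|}\int\frac{d^3k}{(2\pi)^3}\bigl|\tilde{f}(\bm{k}+i\bm{\eta})\bigr|=O(e^{-a'|\bm{x}|});
\]
letting $a'\to a$ gives the stated $O(e^{-a|\bm{x}|})$ (the constant or subexponential loss is immaterial, since in Appendix~\ref{sec:Append sum} $a$ is only a hadronic mass scale).

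I expect the main obstacle to be the rigorous treatment of the terms at infinity in $\bm{k}$ under the bare $L^1$ hypotheses: the vanishing of the integration-by-parts boundary terms in part~1 and of the contour caps in part~2. Both are classical (refinements of Riemann--Lebesgue and the Paley--Wiener theorem), so since Theorem~\ref{thm:decay} is quoted from Refs.~\cite{Reed:1975uy,Trefethen}, in the paper it is enough to sketch the structure above and defer these estimates to those references.
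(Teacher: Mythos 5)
Your sketch is correct and is the standard argument; note, however, that the paper does not prove Theorem~\ref{thm:decay} at all --- it simply quotes it from Refs.~\cite{Reed:1975uy,Trefethen} --- so your integration-by-parts and contour-shift arguments are exactly the content of those references rather than an alternative to anything in the text. The one point worth keeping explicit is the one you already flag: the contour shift only yields $O(e^{-a'|\bm{x}|})$ for each $a'<a$ (the $L^1$ norm on the shifted slice need not be uniformly bounded as $a'\to a$), so the clean statement is exponential decay at any rate below $a$; this is all that is used in Appendix~\ref{sec:Append sum}, where $a$ is identified with a hadronic mass scale.
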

The two parts of Theorem \ref{thm:decay} correspond to power-law and exponential suppressions of finite-volume effects in Theorem \ref{thm:A1}, respectively. In physical problems, the dependence of $\tilde{f}(\bm{k})$ on $\bm{k}$ often appears as energy $E(\bm{k})=\sqrt{\bm{k}^2+m^2}$, whose analytical range is $|\mathrm{Im}\bm{k}|<a=m$. Therefore, the exponential suppression rate is given by the hadronic mass scale in the problem.

\subsection{Generalization to the new summation form}
We extend the conclusion of the previous section to the summation form used in this paper:
\begin{thm}
	Let $\tilde{l}(k)$ and $\tilde{h}(k)$ be functions of four-momentum $k=(k^0,\bm{k})$, and define their loop integral forms
	\begin{equation}
		\begin{aligned}
			I^{\infty}&=\int \frac{d^3k}{(2\pi)^3}\int \frac{dk^0}{2\pi}\tilde{l}(k)\tilde{h}(k),\\
			I^{(L)}&=\frac{1}{L^3}\sum_{\bm{k}'\in\Gamma}\int \frac{d^3k}{(2\pi)^3}\delta_L(\bm{k}'-\bm{k})\int \frac{dk^0}{2\pi}\tilde{l}(k)\tilde{h}(k')=\frac{1}{L^3}\sum_{\bm{k}'\in\Gamma}\hat{I}(\bm{k}'),
		\end{aligned}
	\end{equation}
	with $k=(k^0,\bm{k})$, $k'=(k^0,\bm{k}')$. $\bm{k}'\in\Gamma$ is the discrete momentum in finite volume. The finite-volume effect $I^{(L)}-I^{\infty}$ is related to the smoothness of $\hat{I}(\bm{k}')$ by (here, $\bm{k}'$ is extended to be a continuous real variable)
	\begin{enumerate}
		\item If $\hat{I}(\bm{k}')$ has singularities, then the finite-volume effect suppresses as $O(1/L)$;
		\item If $\hat{I}(\bm{k}')$ is continuously differentiable up to order $N$, then the finite-volume effect suppresses as $O(1/L^{N+1})$;
		\item If $\hat{I}(\bm{k}')$ is infinitely differentiable, then the finite-volume effect suppresses as $O(e^{-\Lambda L})$, where $\Lambda$ is the hadronic mass scale in the problem.
	\end{enumerate}
\end{thm}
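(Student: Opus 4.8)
The plan is to reduce the statement to the ordinary Poisson summation formula (Theorem~\ref{thm:A1}) applied to the \emph{single} function $\hat{I}(\bm{k}')$, and then to read off the three cases from it. Since $I^{(L)}=\frac{1}{L^{3}}\sum_{\bm{k}'\in\Gamma}\hat{I}(\bm{k}')$ is already a Brillouin-zone sum of one function of $\bm{k}'$, Theorem~\ref{thm:A1} applies verbatim with $\tilde{f}=\hat{I}$, giving
\begin{equation}
  I^{(L)}=\int\frac{d^{3}k'}{(2\pi)^{3}}\,\hat{I}(\bm{k}')+\sum_{\bm{l}\neq\bm{0}}\int\frac{d^{3}k'}{(2\pi)^{3}}\,e^{iL\bm{l}\cdot\bm{k}'}\hat{I}(\bm{k}')=\sum_{\bm{m}\in\mathbb{Z}^{3}}g(L\bm{m}),
\end{equation}
where $g(\bm{x})=\int\frac{d^{3}k'}{(2\pi)^{3}}e^{i\bm{k}'\cdot\bm{x}}\hat{I}(\bm{k}')$ is the inverse Fourier transform of the summand. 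The three-case dictionary between the decay of $g$ at large $|\bm{x}|$ and the smoothness of $\hat{I}$ is exactly Theorem~\ref{thm:decay} (a singularity forces $g=O(1/|\bm{x}|)$, $\hat{I}\in C^{N}$ with integrable $(N{+}1)$-th derivative forces $g=O(|\bm{x}|^{-N-1})$, and analyticity of $\hat{I}$ in a complex strip of half-width $\Lambda$ forces $g=O(e^{-\Lambda|\bm{x}|})$), so that the image sum $\sum_{\bm{m}\neq\bm{0}}g(L\bm{m})$ is $O(1/L)$, $O(1/L^{N+1})$ or $O(e^{-\Lambda L})$ respectively. Hence the whole theorem follows once I show that the leading term $\int\frac{d^{3}k'}{(2\pi)^{3}}\hat{I}(\bm{k}')$ reproduces $I^{\infty}$ up to a remainder that is never dominant.

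For that identification I would pass to coordinate space using $\delta_{L}(\bm{q})=\int_{V}d^{3}x\,e^{i\bm{q}\cdot\bm{x}}$. Writing $\tilde{l}$ and $\tilde{h}$ through their spatial Fourier transforms $\mathcal{L}(k^{0},\bm{x})$ and $\mathcal{H}(k^{0},\bm{x})$ at fixed $k^{0}$, one finds $\int\frac{d^{3}k'}{(2\pi)^{3}}\hat{I}(\bm{k}')=\int\frac{dk^{0}}{2\pi}\int_{V}d^{3}x\,\mathcal{L}(k^{0},\bm{x})\mathcal{H}(k^{0},\bm{x})$, whereas $I^{\infty}$ is the very same integrand integrated over all of $\mathbb{R}^{3}$. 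Their difference is the tail $-\int\frac{dk^{0}}{2\pi}\int_{\mathbb{R}^{3}\setminus V}\mathcal{L}\mathcal{H}$; since the hadronic factor is exponentially localized in the spatial directions with rate set by the mass gap $\Lambda$, this tail is $O(e^{-\Lambda L})$ and can never be the leading effect. Equivalently, $\hat{I}$ differs from its infinite-volume analogue $\hat{I}_{\infty}(\bm{k}')=\int\frac{dk^{0}}{2\pi}\tilde{l}(k^{0},\bm{k}')\tilde{h}(k^{0},\bm{k}')$ (whose $\bm{k}'$-integral is literally $I^{\infty}$) only by such an exponentially suppressed amount. Combining this with the image-sum estimate above gives the three assertions.

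The main obstacle is the interplay between the long-range electroweak kernel and the short-range hadronic factor: because $\mathcal{L}(k^{0},\bm{x})$ does not itself decay (it is of massless-propagator type), the box-truncation tail and the images cannot be bounded factor by factor, and the exponential localization must be argued at the level of the \emph{product}, i.e.\ one uses that $\hat{I}$, viewed as a function of $\bm{k}'$ \emph{after} all loop integrations, has its analytic structure dictated entirely by the hadronic thresholds, so that the only mechanism producing a real-axis singularity of $\hat{I}(\bm{k}')$---and hence a genuine power law rather than $O(e^{-\Lambda L})$---is an on-shell intermediate state at or above threshold, which is precisely the cusp analyzed in Sects.~\ref{sec:one-2} and~\ref{sec:cusp-two}. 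A secondary but necessary technical point is the legitimacy of interchanging $\sum_{\bm{l}}$, $\int dk^{0}$ and $\int d^{3}k$, which rests on absolute integrability supplied by the $i\epsilon$ prescriptions and the ultraviolet falloff of the kernels; with these in hand the reduction to Theorems~\ref{thm:A1} and~\ref{thm:decay} is routine.
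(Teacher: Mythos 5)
Your overall route is the same as the paper's: split $I^{(L)}-I^{\infty}$ into the Poisson-summation piece $\bigl(\tfrac{1}{L^3}\sum_{\bm{k}'}-\int\tfrac{d^3k'}{(2\pi)^3}\bigr)\hat{I}(\bm{k}')$ plus the residual piece $\int\tfrac{d^3k'}{(2\pi)^3}\hat{I}(\bm{k}')-I^{\infty}$, control the first by Theorems~\ref{thm:A1} and~\ref{thm:decay} applied to $\hat{I}$, and control the second as an out-of-box tail in coordinate space. The first half of your argument is correct and matches the paper.

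The gap is in your treatment of the second piece. You assert that the tail $\int_{\mathbb{R}^3\setminus V}d^3x\,\mathcal{L}\mathcal{H}$ is $O(e^{-\Lambda L})$ ``since the hadronic factor is exponentially localized with rate set by the mass gap.'' That premise fails in exactly the situations the theorem is written for: when $\hat{I}(\bm{k}')$ is singular or only finitely differentiable (cases 1 and 2), the underlying hadronic function has on-shell (light) intermediate states, and its spatial Fourier transform at fixed $k^0$ behaves like an outgoing wave $\sim e^{ipr}/r$ with real $p$ --- it decays only as a power, so the tail is power-law, not exponential. Your ``main obstacle'' paragraph senses this but resolves it by appeal to the physical cusp analysis of the specific examples, which is not available inside the general theorem and in any case contradicts your blanket $O(e^{-\Lambda L})$ claim. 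Without a correct bound here you cannot exclude that the tail spoils the advertised rates. The paper's fix is clean and worth adopting: write the tail as $\int_{>V}d^3x\,g(\bm{x})$ and observe that the Fourier transform of $g$,
\begin{equation}
	\tilde{g}(\bm{q})=\int\frac{d^3k}{(2\pi)^3}\int\frac{dk^0}{2\pi}\,\tilde{l}(k)\,\tilde{h}(k')\Big|_{\bm{k}'=\bm{k}-\bm{q}},
\end{equation}
has \emph{exactly} the same analytic structure as $\hat{I}(\bm{k}')$ (the $\delta_L$ smearing is smooth and the rest differs only by the substitution $\bm{q}=\bm{k}-\bm{k}'$). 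Theorem~\ref{thm:decay} then gives $g(\bm{x})$ the same large-$|\bm{x}|$ decay --- power $N+1$ or exponential --- as governs the image sum, so the tail is suppressed at the same rate as the first piece in all three cases, which is all the theorem requires.
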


\begin{proof}[Proof]
	We split the finite-volume correction $I^{(L)}-I^{\infty}$ into two parts as
	\begin{equation}
		\label{Iinout}
		\begin{aligned}
			I^{(L)}-I^{\infty}=\left(\frac{1}{L^3}\sum_{\bm{k}'\in\Gamma}-\int\frac{d^3 k'}{(2\pi)^3}\right)\hat{I}(\bm{k}')+\left(\int\frac{d^3 k'}{(2\pi)^3}\hat{I}(\bm{k}')-I^{\infty}\right).
		\end{aligned}
	\end{equation}
	
	For the first part, it has the same form as the Poisson summation formula. According to the conclusion of the previous section, the volume suppression behavior of this part depends directly on the smoothness of $\hat{I}(\bm{k}')$.
	
	For the second part, since there is a $\delta_L(\bm{k}'-\bm{k})$ function in $\hat{I}(\bm{k}')$, the integration of $\hat{I}(\bm{k}')$ is not $I^\infty$. Their difference can be written in coordinate space as
	\begin{equation}
		\begin{aligned}
			&\int \frac{d^3 k'}{(2\pi)^3}\hat{I}(\bm{k}')-I^{\infty}=\int_{>V} d^3x g(\bm{x})\\
			&g(\bm{x})=\int\frac{d^3 k'}{(2\pi)^3}\int\frac{d^3 k}{(2\pi)^3}e^{i(\bm{k}'-\bm{k})\cdot\bm{x}}\int \frac{dk^0}{2\pi}\tilde{l}(k)\tilde{h}(k')
		\end{aligned}
	\end{equation}
	where $\int_{>V}d^3x$ denotes the integral outside the finite volume. Thus, the suppression behavior of this part depends on the suppression behavior of $g(\bm{x})$ at large $|\bm{x}|$. According to Theorem \ref{thm:decay}, the suppression behavior of $g(\bm{x})$ is determined by the smoothness of its Fourier transform
	\begin{equation}
		\tilde{g}(\bm{q})=\int d^3x e^{i\bm{q}\cdot\bm{x}} g(\bm{x}) =\int \frac{d^3k}{(2\pi)^3}\int\frac{dk^0}{2\pi}\tilde{l}(k)\tilde{h}(k')|_{\bm{k}'=\bm{k}-\bm{q}}
	\end{equation}
	
	Next, we compare the integral structure of $\tilde{g}(\bm{q})$ with that of
	\begin{equation}
		\hat{I}(\bm{k}')=\int \frac{d^3k}{(2\pi)^3}\delta_L(\bm{k}'-\bm{k})\int \frac{dk^0}{2\pi}\tilde{l}(k)\tilde{h}(k')
	\end{equation}
	For a given volume, $\delta_L(\bm{k}'-\bm{k})$ is a smooth function. The difference between $\hat{I}(\bm{k}')$ and $\tilde{g}(\bm{q})$ in integral structure is just a variable substitution $\bm{q}=\bm{k}-\bm{k}'$, so they have exactly the same analytic properties. Therefore, the volume suppression behavior of this part is also determined by the smoothness of $\hat{I}(\bm{k}')$ (or equivalently, by the smoothness of $\tilde {g} (\bm{q}) $). This completes the proof of this theorem.
	\end {proof}
	
\section{Derivation of $\Delta I_{2,\text{cusp}}$ with rescattering effects\label{sec:Append cusp corr}}
We first introduce the Bethe–Salpeter kernel expansion of the hadronic matrix element. We define the two-particle propagators as
\begin{equation}
	\begin{aligned}
		G^{i\epsilon}_2(q)&=\frac{1}{2} S(q)S(p-k'-q),\\
		S(q)&=\frac{iZ_\pi(q^2)}{q^2-m_\pi^2+i\epsilon},
	\end{aligned}
\end{equation}
where the energy-momentum of the two particles are $q$ and $p-k'-q$, respectively. $S(q)$ is the fully dressed propagator of the $\pi$, and $\frac 12$ is the symmetry factor when the two particles are identical. For the loop integral of multiplying any function $L(q),R(q)$ with $G^{i\epsilon}_2(q)$, we define a shorthand symbol
\begin{equation}
	\label{Gloop}
	\begin{aligned}
		L G^{i\epsilon}_2 R&= \int \frac{d^4q}{(2\pi)^4}L(q)G^{i\epsilon}_2(q)R(q)\\
		&=\int \frac{d^3q}{(2\pi)^3} L(q)\bar{G}^{i\epsilon}_2(\bm{q})R(q)|_{q^0=E_{\pi}(\bm{q})}+\int \frac{d^4q}{(2\pi)^4}L(q)\hat{G}^{i\epsilon}_2(q)R(q),
	\end{aligned}
\end{equation}
where, under the condition of $\sqrt{s}<4m_\pi$, $\hat{G}^{i\epsilon}_2(q)$ is an analytic function without singularities~\cite{Luscher:1986pf}. $\bar{G}^{i\epsilon}_2(\bm{q})$ extracts the main two-particle singularity as
\begin{equation}
	\label{Gbar}
	\bar{G}^{i\epsilon}_2(\bm{q})=\frac i2\frac{h(s)}{2E_{\pi}(\bm{q})2E_{\pi}(\bm{k}'+\bm{q})\left(m-k^0-E_{\pi}(\bm{q})-E_{\pi}(\bm{k}'+\bm{q})+i\epsilon\right)},
\end{equation}
Here, $h(s)$ is defined in Eq.~(\ref{hs}). Similar form as $\bar{G}^{i\epsilon}_2(\bm{q})$ has also been used in study the one $\pi\pi$ loop case in Eq.~(\ref{H0}). 

The infinite-volume matrix element $H^{\infty}$ can be expanded as
\begin{equation}
	\begin{aligned}
		H^{\infty}=\bar{\Gamma}^{\text{in}} G_2^{i\epsilon}\bar{\Gamma}^{\text{out}}+\bar{\Gamma}^{\text{in}} G_2^{i\epsilon}iB_2 G_2^{i\epsilon}\bar{\Gamma}^{\text{out}}+\cdots
	\end{aligned},
\end{equation}
where $\bar{\Gamma}^{\mathrm{in}}$ and $\bar{\Gamma}^{\mathrm{out}}$ are the two-particle irreducible (2PI) vertices. $B_2$ is the two-particle irreducible Bethe–Salpeter kernel. 

The $\widetilde{\mathrm{PV}}$ prescription is equivalent to defining a new two-particle propagator~\cite{Hansen:2014eka}
\begin{equation}\label{PVtilde}
	G_2^{\widetilde{PV}}(q)=G_2^{i\epsilon}(q)-i\widetilde{\rho}(s)
\end{equation}
with
\begin{equation}\label{rhotilde}
	\widetilde{\rho}(s)=h(s)\begin{cases}
		\frac{-i\sqrt{\frac{s^2}{4}-m_\pi^2}}{16\pi\sqrt{s}}\quad &s>(2m_\pi)^2;\\
		\frac{\sqrt{m_\pi^2-\frac{s^2}{4}}}{16\pi\sqrt{s}}\quad&0< s\leq (2m_\pi)^2.\\
	\end{cases}
\end{equation}

We can use the $\widetilde{\mathrm{PV}}$ prescription to define an infinitely differentiable hadronic matrix element as
\begin{equation}\label{Htilde}
	\widetilde{H}^{\infty}=\bar{\Gamma}^{\text{in}} G_2^{\widetilde{PV}}\bar{\Gamma}^{\text{out}}+\bar{\Gamma}^{\text{in}} G_2^{\widetilde{PV}}iB_2 G_2^{\widetilde{PV}}\bar{\Gamma}^{\text{out}}+\cdots.
\end{equation}
Then we can isolate the non-smooth part in $H^{\infty}$ as
\begin{equation}
	\begin{aligned}
		H^{\text{cusp}}&=\widetilde{H}^{\infty}-H^{\infty}\\&=\Gamma^{\text{in}}(-i\widetilde{\rho})\Gamma^{\text{out}}+\Gamma^{\text{in}}(-i\widetilde{\rho})i\mathcal{M}_2(-i\widetilde{\rho})\Gamma^{\text{out}}+\cdots\\
		&=\Gamma^{\text{in}}\frac{-i\widetilde{\rho}}{1-\widetilde{\rho}\mathcal{M}}\Gamma^{\text{out}},
	\end{aligned}
\end{equation}

Using partial wave expansixons in Eq.~(\ref{Gammatilde}) and Eq.~(\ref{M2}), we can simplify $H^{\text{cusp}}(k',p)$ as 
\begin{equation}\label{Hcusp}
	\begin{aligned}
		H^{\text{cusp}}(k',p)=&\sum_{l,l',m,m'}i\frac{h(s)}{16\pi\sqrt{s}}A^\infty_{\pi\pi,lm;l'm'}(-\bm{k}',E_{\pi\pi})|_{E_{\pi\pi}=m_\eta-k^0}\\ &\times \begin {cases}q^*
		(\tan(\delta_l)-i)\quad &s> (2m_\pi)^2\\
		|q^*|(\tanh(\sigma_l)+1)\quad &0< s \leq (2m_\pi)^2\\
		\end {cases}
		\end {aligned}
\end {equation}
In the region of $0< s\leq (2m_\pi)^2$, $q^*=i|q^*|=i\sqrt{m_\pi^2-\frac{s}{4}}$ is an imaginary variable, and we can extend the phase shift $\delta_l$ and the matrix element $A^\infty_{\pi\pi,lm;l'm'}$ from functions of the real variable $q^*$ to the functions of the imaginary variable $q^*=i|q^*|$. The phase shift $\delta_l$ becomes an imaginary phase shift $\delta_l(s)=i\sigma_l(s)$.

\section{Partial wave expansion of matrix elements in the P wave\label{sec:Append cusp}}
Here, we show that $l=1$ partial wave expansion of general matrix elements $\langle f(p)|J_1|\pi\pi(q,q')\rangle$ or $\langle \pi\pi(q,q') |J_2|i(p)\rangle$ is proportional to $(q^*)^l=q^*$. The cases with higher angular momentum quantum number can also be proved similarly. The P wave expansion of $\langle f |J^\mu_1|\pi\pi(q,q')\rangle$ is given by
\begin{equation}
	\Gamma^{\mu}_{1m}=\frac{1}{\sqrt{4\pi}}\int d\Omega_{\hat{q}^*}Y_{1m}(\hat{q}^*)\langle f(p)|J^\mu_1|\pi\pi(q,q')\rangle.
\end{equation}

The general Lorentz structure of $\langle f|J^\mu_1|\pi\pi(q,q')\rangle$ can be written as
\begin{equation}
	\begin{aligned}
		\langle f(p)|J^\mu_1|\pi\pi(q,q')\rangle&=\epsilon^{\mu\rho\sigma\lambda}p_\rho (q+q')_\sigma (q-q')_\lambda F_1(x,y,z)\\&+(q+q')^\mu F_2(x,y,z)\\&+(p-q-q')^\mu F_3(x,y,z)\\&+(q-q')^\mu F_4(x,y,z),
	\end{aligned}
\end{equation}
where $F_i(x,y,z)$ are form factors that depend on three independent Lorentz invariants $x=(q+q')^2$, $y=(p-q-q')^2$, and $z=p\cdot(q-q')$. $x$ and $y$ are related to $s$ and $s_\gamma$ in Eq.~(\ref{FB}). $z$ is proportional to $\cos\theta_\pi$, where $\theta_\pi$ is the angle between the single particle momentum $\bm{q}^*$ in the center-of-mass and the total two-particle momentum $\bm{q}+\bm{q}'$.

The nonzero P wave projection requires that $F_{i}(x,y,z)|_{\text{P wave}}$ depend on $z$ as
\begin{equation}
	\begin{aligned}
		F_{1,4}(x,y,z)|_{\text{P wave}}&=f_{1,4}(x,y)+z^2 f'_{1,4}(x,y)+\cdots\\
		F_{2,3}(x,y,z)|_{\text{P wave}}&=zf_{2,3}(x,y)+z^3 f'_{2,3}(x,y)+\cdots
	\end{aligned}
\end{equation}
Hence, the Lorentz structure in the P wave are
\begin{equation}
	\begin{aligned}
		\langle f|J^\mu_1|\pi\pi(q,q')\rangle|_{\text{P wave}}&=\epsilon^{\mu\rho\sigma\lambda}p_\rho (q+q')_\sigma (q-q')_\lambda f_1(x,y)\\&+p\cdot(q-q') (q+q')^\mu f_2(x,y)\\&+p\cdot(q-q') (p-q-q')^\mu f_3(x,y)\\&+(q-q')^\mu f_4(x,y)+\cdots
	\end{aligned}
\end{equation}
We see that the P-wave matrix element is always proportional to $(q-q')$, which leads to the P-wave expansion being proportional to the center-of-mass momentum $q^*$ since
\begin{equation}
	\frac{1}{\sqrt{4\pi}}\int d\Omega_{\hat{q}^*}Y_{1m}(\hat{q}^*) (q-q')\propto q^*
\end{equation}

For higher angular momentum $l$, it can be proved similarly that the partial wave expansion of $\langle f(p)|J_1|\pi\pi(q,q')\rangle$ is proportional to $(q^*)^l$.

\section{Numerical test of $\Delta I_2$ in $\eta\to\mu^+\mu^-$ decay\label{sec:Append eta_test}}
\subsection{Numerical implementation of $F^{(\pi)}(s)$ and $B_\eta(s,s_\gamma)$}
To test the behavior of $\Delta I_2$ in $\eta\to\mu^+\mu^-$ decay numerically, we need the infinite-volume matrix elements $\langle 0|J^\nu_{em}|\pi\pi(q,q')\rangle$ and $\langle \pi\pi(q,q')|J^\mu_{em}|\eta(p)\rangle$ in Eq.~(\ref{FB}) as inputs. In phenomenological studies, the form factors $F^{(\pi)}(s)$ and $B_\eta(s,s_\gamma)$ can be well described by the vector meson dominance model as~\cite{Venugopal:1998fq}
\begin{equation}
	\begin{aligned}
		F^{(\pi)}(s)&=\frac{-m_\rho^2}{s-m_\rho^2},\\
		B_\eta(s,s_\gamma)&=B_\eta(0,0)(1+\frac{s}{2m_\rho^2})\frac{-m_\rho^2}{s-m_\rho^2}\frac{-m_\rho^2}{s_\gamma-m_\rho^2},
	\end{aligned}
\end{equation}
where $B_\eta(0,0)$ is determined by the Chiral anomaly. For large $s, s_\gamma\sim m_\rho$, the above formula is modified by the GS model as~\cite{Gounaris:1968mw} 
\begin{equation}
	\frac{-m_\rho^2}{s-m_\rho^2}\to\frac{-m_\rho^2-\Pi_\rho(s)}{s-m_\rho^2-\Pi_\rho(s)},
\end{equation}
where $\Pi_\rho(s)$ encodes the $\rho-\pi\pi$ interaction. Its imaginary part is related to the $\rho$ width by $\mathrm{Im}[\Pi_\rho(s)]=-m_\rho \Gamma_\rho$, and its real part can be obtained by a dispersion relation.

These form factors can also be obtained by lattice QCD calculations~\cite{Feng:2014gba,Andersen:2018mau,Erben:2019nmx,Meyer:2011um}. To test the qualitative behavior of $\Delta I_2$ in this paper, we simply use the form of GS model.

\subsection{Numerical implementation of $\Delta I_{2,\text{cusp}}$}
Eq.~(\ref{Hcusp_eg}) and Eq.~(\ref{Hcusp_eg_LO}) show the non-smooth part of the matrix element in the $\eta\to\mu^+\mu^-$ decay with and without rescattering effects, respectively. In the numerical test, we neglect the rescattering effects. By substituting $H^{\text{cusp},(0)}(k',p)$ and $L^\infty(k)$ into Eq.~(\ref{Icusp}), and performing the contour integration in the $k^0$ direction, we obtain the correction formula of the cusp effect $\Delta I_{2,\text{cusp}}=\tilde{I}_{\text{cusp}}^{(L)}-\tilde{I}_{\text{cusp}}^{\infty}$ with the contribution of the non-smooth part in finite volume ($\tilde{I}_{\text{cusp}}^{(L)}$) and infinite volume ($\tilde{I}_{\text{cusp}}^{\infty}$). $\tilde{I}_{\text{cusp}}^{(L)}$ is defined as
\begin{equation}
	\begin{aligned}
		&\tilde{I}_{\text{cusp}}^{(L)}=\frac{1}{L^3}\sum_{\bm{k}'\in\Gamma} \int \frac{d^3 k}{(2\pi)^3}\delta_L(\bm{k}'-\bm{k})\int \frac{d k^0}{2\pi} \theta(m-k^0-2m_\pi)L_{\mu\nu}^\infty(k) H^{\text{cusp},(0),\mu\nu}(k',p)
		\\=&\frac{1}{L^3}\sum_{\bm{k}'\in\Gamma}\left(\frac{1}{12\pi^2m_\eta^2}\right)\int_{0}^{m_\eta-2m_\pi} d|\bm{k}| \frac{i(q^*)^3}{\sqrt{s}}\frac{F^{(\pi)}(s)B_\eta(s,s_\gamma)h(s)g(|\bm{k}|,\bm{k}')}{m-2|\bm{k}|}\theta(s-(2m_\pi)^2)\\
		+&\frac{1}{L^3}\sum_{\bm{k}'\in\Gamma}\left(\frac{1}{12\pi^2m_\eta^2}\right)\int_{0}^{m_\eta-2m_\pi}  d|\bm{k}| \frac{|q^*|^3}{\sqrt{s}}\frac{F^{(\pi)}(s)B_\eta(s,s_\gamma)h(s)g(|\bm{k}|,\bm{k}')}{m-2|\bm{k}|}\theta(s)\theta((2m_\pi)^2-s)\\
	\end{aligned}
\end{equation}
where the first and the second term correspond to the contributions of the non-smooth part above and below the threshold $s=(2m_\pi)^2$, respectively. The contour integration picks up the singularity of the electroweak part at $k^0=|\bm{k}|$ (the residues at other singularities do not contribute to the cusp effect), so here $s=(m_\eta-|\bm{k}|)^2-|\bm{k}'|^2$. The angular integration of $\bm{k}$ is defined in
\begin{equation}\label{gkk'}
	g(|\bm{k}|,\bm{k}')=\int d\Omega_{\hat{k}}\delta_L(\bm{k}'-\bm{k})\frac{\bm{k}\cdot\bm{k}'}{1-\beta\cos\theta},
\end{equation}
where $\theta$ is defined in Eq.~(\ref{LEcalc}) as the angle between the photon momentum $\bm{k}$ and the muon momentum $\bm{p}^-$. 

Taking the infinite volume limit of $L\to\infty$ of the above equation, $\tilde{I}_{\text{cusp}}^{\infty}$ is given by
\begin{equation}
	\begin{aligned}
		\tilde{I}_{\text{cusp}}^{\infty}=&\int \frac{d^3 k}{(2\pi)^3}\int \frac{d k^0}{2\pi}\theta(m-k^0-2m_\pi)L_{\mu\nu}^\infty(k) H^{\text{cusp},(0),\mu\nu}(k,p)
		\\=&\left(\frac{1}{6\pi m_\eta^2\beta}\right)\ln\left(\frac{1+\beta}{1-\beta}\right)\int_{0}^{\frac{m_\eta^2-4m_\pi^2}{2m_\eta}} d|\bm{k}| \frac{i(q^*)^3}{\sqrt{s}}\frac{F^{(\pi)}(s)B_\eta(s,s_\gamma)h(s)|\bm{k}|^2}{m-2|\bm{k}|}\\
		+&\left(\frac{1}{6\pi m_\eta^2\beta}\right)\ln\left(\frac{1+\beta}{1-\beta}\right)\int_{\frac{m_\eta^2-4m_\pi^2}{2m_\eta}}^{m_\eta-2m_\pi} d|\bm{k}| \frac{|q^*|^3}{\sqrt{s}}\frac{F^{(\pi)}(s)B_\eta(s,s_\gamma)h(s)|\bm{k}|^2}{m-2|\bm{k}|}
	\end{aligned}
\end{equation}

We can calculate $\Delta I_{2,\text{cusp}}$ using the above two formulas. When considering rescattering effects, we need to replace $i(q^*)^3 \to (i-\tan \delta_l)(q^*)^3$ in the first term and $|q^*|^3 \to (1+\tanh \sigma_l)|q^*|^3$ in the second term. We take $h(s)=1-\exp(- \exp(-(2m_\pi)^2/s)/(1-s/(2m_\pi)^2))$ to make the summand $\hat {I}(\bm {k}')$ as smooth as possible near the threshold after removing the non-smooth part.

\subsection{Numerical implementation of $\Delta I_{2,\text{exp}}$}
In this part we exlain how to calculate $\Delta I_{2,\text{exp}}$ under the assumption of ignoring rescattering effects. Our idea is to numerically implement
\begin{equation}\label{It1loop}
	\tilde{I}^{(L)}=\frac{1}{L^3}\sum_{\bm{k}'\in\Gamma}\int \frac{d^3 k}{(2\pi)^3}\delta_L(\bm{k}'-\bm{k})\int_{C} \frac{d k^0}{2\pi} L_{\mu\nu}^\infty(k)H^{\infty,(0),\mu\nu}(k',p)
\end{equation}
where the one $\pi\pi$ loop contribution $H^{\infty,(0),\mu\nu}(k',p)$ is given by Eq.~(\ref{H0}). Then we obtain $\Delta I_{2}$ by
\begin{equation}\label{Iexp1}
	\Delta I_{2}=\tilde{I}^{(L)}-I^{\infty}=\tilde{I}^{(L)}-\lim_{L\to\infty}\tilde{I}^{(L)}
\end{equation}
Removing the power-law suppression part $\Delta I_{2,\text{cusp}}$, we can get the exponential suppression part
\begin{equation}\label{Iexp2}
	\Delta I_{2,\text{exp}}=\Delta I_{2}-\Delta I_{2,\text{cusp}}
\end{equation}

Next, we describe the numerical implementation of Eq.~(\ref{It1loop}). Since this equation involves a high-dimensional integral at each momenta, it is difficult to calculate it directly in momentum space. Therefore, we transform it back to the coordinate space for calculation. We introduce a temporal cutoff $t_s$ to decompose Eq.~(\ref{It1loop}) into two parts as
\begin{equation}\label{Isplit}
	\begin{aligned}
		\tilde{I}^{(L)}&=\frac{1}{L^3}\sum_{\bm{k}'\in\Gamma}\int \frac{d^3 k}{(2\pi)^3}\delta_L(\bm{k}'-\bm{k})\int_{C} \frac{d k^0}{2\pi} L_{\mu\nu}^\infty(k)H^{(T),(0),\mu\nu}(k',p)\\&+\frac{1}{L^3}\sum_{\bm{k}'\in\Gamma}\int \frac{d^3 k}{(2\pi)^3}\delta_L(\bm{k}'-\bm{k})\int_{C} \frac{d k^0}{2\pi} L_{\mu\nu}^\infty(k)(H^{\infty,(0),\mu\nu}(k',p)-H^{(T),\mu\nu}(k',p)),
	\end{aligned}
\end{equation}
where we define
\begin{equation}\label{HT1loop}
	H^{(T),(0),\mu\nu}(k',p)=\frac{i}{2}\int\frac{d^3 q}{(2\pi)^3}\frac{\langle 0|J^\nu_{em}|\pi\pi(q,q')\rangle\langle \pi\pi(q,q')|J^\mu_{em}|\eta(p)\rangle}{2E_\pi(\bm{q})2E_\pi(\bm{k}'+\bm{q})(m_\eta-k^0-E_{\pi\pi})}(1-e^{(m_\eta-k^0-E_{\pi\pi})T/2})
\end{equation}
The first line In Eq.~(\ref{Isplit}) can be calculated in the coordinate space as
\begin{equation}\label{IX}
	\begin{aligned}
		&\frac{1}{L^3}\sum_{\bm{k}'\in\Gamma}\int \frac{d^3 k}{(2\pi)^3}\delta_L(\bm{k}'-\bm{k})\int_{C} \frac{d k^0}{2\pi} L_{\mu\nu}^\infty(k)H^{(T),(0),\mu\nu}(k',p)\\
		=&\int d^3x \int^0_{-T/2} d\tau L_E^{\infty,\mu\nu}(\tau,\bm{x}) c_{ME}H_{E}^{(L),(0),\mu\nu}(\tau,\bm{x})
	\end{aligned}
\end{equation}
where $L_E^{\infty,\mu\nu}(\tau,\bm{x})$ is given by Eq.~(\ref{LEetall}), and $H_{E}^{(L),(0)}(\tau,\bm{x})$ is defined in the coordinate space as
\begin{equation}
	\begin{aligned}
		c_{ME}H_{E}^{(L),(0),\mu\nu}(\bm{x},\tau)&=\frac{1}{L^3}\sum_{\bm{k}'\in\Gamma} \tilde{H}^{(0),\mu\nu}(-\bm{k}',\tau)e^{i\bm{k}'\cdot\bm{x}}\\
		\tilde{H}^{(0),\mu\nu}(-\bm{k}',\tau)&=-
		\frac{\epsilon^{\mu\nu\rho\sigma}p_\rho k'_{\sigma}}{24\pi^2}\int_{\sqrt{4m_\pi^2+\bm{k}'^2}}^\infty dE_{\pi\pi}\frac{(a^*)^3}{\sqrt{s}}F^{(\pi)}(s)B_\eta(s,s_\gamma)e^{-(m_\eta-E_{\pi\pi})\tau}
	\end{aligned}
\end{equation}

The second line in Eq.(\ref{Isplit}) is a correction for the temporal truncation effect, where we only need to consider the exponentially divergent terms. We can calculate it in momentum space as
\begin{equation}\label{DIT}
	\begin{aligned}
		&\frac{1}{L^3}\sum_{\bm{k}'\in\Gamma}\int \frac{d^3 k}{(2\pi)^3}\delta_L(\bm{k}'-\bm{k})\int_{C} \frac{d k^0}{2\pi} L_{\mu\nu}^\infty(k)(H^{\infty,(0),\mu\nu}(k',p)-H^{(T),(0),\mu\nu}(k',p))\\
		=&\frac{1}{L^3}\sum_{\bm{k}'\in\Gamma}\frac{1}{12\pi^3 m_\eta^2}\int_0^\infty dk\int_{\sqrt{4m_\pi^2+\bm{k}'^2}}^\infty dE_{\pi\pi} \frac{(a^*)^3}{\sqrt{s}}\frac{F^{(\pi)}(s)B_\eta(s,s_\gamma)g(|\bm{k}|,\bm{k}')}{(m-2|\bm{k}|+i\epsilon)(m_\eta-|\bm{k}|-E_{\pi\pi}+i\epsilon)}e^{(m-|\bm{k}|-E_{\pi\pi})T/2},
	\end{aligned}
\end{equation}
where $g(|\bm{k}|,\bm{k}')$ is defined in Eq.~(\ref{gkk'}).

By combining Eq.~(\ref{IX}) and Eq.~(\ref{DIT}), we obtain $\tilde{I}^{(L)}$ for a given volume $L$. One can verify that, after summing over these two terms, the final result $\tilde{I}^{(L)}$ is independent of the sufficient large $T$. Then we can calculate $\Delta I_{2,\text{exp}}$ using Eq.~(\ref{Iexp1}) and Eq.~(\ref{Iexp2}).

\bibliography{ref}
	
\end{document}